\theoremstyle{plain}
\newtheorem{theorem}{Theorem}
\newtheorem{proposition}[theorem]{Proposition}
\newtheorem{lemma}[theorem]{Lemma}
\newtheorem{corollary}[theorem]{Corollary}
\newtheorem{rem}[theorem]{Remark}
\numberwithin{theorem}{section}
\numberwithin{equation}{section}
\newcommand{\nc}{\newcommand}
\nc{\be}{\begin{equation}}
\nc{\la}{\label}
\nc{\ba}{\begin{array}}
\nc{\ea}{\end{array}}
\nc{\bs}{\begin{split}}
\nc{\es}{\end{split}}
\newcommand{\R}{\mathbb{R}}
\newcommand{\C}{\mathbb{C}}
\newcommand{\Z}{\mathbb{Z}}
\newcommand{\frach}{\mathfrak{h}}
\newcommand{\cD}{\mathcal{D}}
\newcommand{\cC}{\mathcal{C}}
\newcommand{\cE}{\mathcal{E}}
\newcommand{\cF}{\mathcal{F}}
\newcommand{\F}{\mathcal{F}} 
\newcommand{\cH}{\mathcal{H}}
\newcommand{\cL}{\mathcal{L}}
\newcommand{\cP}{\mathcal{P}}         
\newcommand{\cS}{\mathcal{S}}
\nc{\e}{\epsilon}
\nc{\al}{\alpha}
\nc{\del}{\delta}
\nc{\h}{\delta}
\nc{\G}{\eta} 
\nc{\et}{\eta} 
\nc{\Gam}{\eta}  
\nc{\g}{\gamma}
\nc{\gam}{\gamma}
\nc{\ka}{\kappa}
\nc{\lam}{\lambda}
\nc{\Lam}{\Lambda}
\nc{\Om}{\Omega}
\nc{\om}{\omega}
\nc{\ta}{\tau}
\nc{\w}{\omega}
\nc{\io}{\iota}
\nc{\z}{\zeta}
\nc{\s}{\alpha}
\nc{\Si}{\Sigma}
\nc{\vphi}{\varphi}
\nc{\bP}{\bar{P}}
\nc{\bQ}{\bar{Q}}
\def\qf{\varphi}
\nc{\ran}{\rangle}
\nc{\lan}{\langle}
\newcommand{\ra}{\rightarrow}
\newcommand{\ls}{\lesssim}
\newcommand{\gs}{\gtrsim}
\newcommand{\one}{\mathbf{1}}
\renewcommand{\Re}{\mathrm{Re}} 
\newcommand{\Tr}{\mathrm{Tr}}
\newcommand{\tr}{\mathrm{Tr}}
\newcommand{\diag}{\mathrm{diag}}
\nc{\bfone}{{\bf 1}}
\newcommand{\p}{\partial}
\newcommand{\n}{\nabla}
\newcommand{\curl}{\operatorname{curl}}
\newcommand{\CURL}{\operatorname{curl}}
\newcommand{\divv}{\operatorname{div}}
\renewcommand{\div}{\operatorname{div}}
\newcommand{\grad}{\operatorname{grad}}
\newcommand{\hess}{\operatorname{Hess}}
\newcommand{\na}{\nabla_a}
\newcommand{\COVGRAD}[1]{\nabla_{\!\!#1}}
\newcommand{\rbrac}[1]{\left(#1\right)} 
\def\eqn{\begin{align}}
\def\eeqn{\end{align}}
\newcommand{\DETAILS}[1]{}
\newcommand{\frachb}{\mathfrak{h}_{b}}
\newcommand{\frachbvec}{\vec{\mathfrak{h}}_{ b}}
\newcommand{\frachvec}{\vec{\mathfrak{h}}}
\newcommand{\lat}{\mathcal{L}} 
\newcommand{\Omd}{\Om}
\newcommand{\den}{\text{den}}
\newcommand{\sech}{\text{sech}}
\begin{document}


\title[Bogolubov-de Gennes Equations February 7, 2019]{On the Bogolubov-de Gennes Equations of Superconductivity}

\date{February 7, 2019} 
{\center\author{Ilias (Li) Chenn and I. M. Sigal}} 
\maketitle

\begin{abstract} We consider the Bogolubov-de Gennes equations giving an equivalent formulation of the BCS theory of  superconductivity.  We are interested in the case when the magnetic field is present. We (a) discuss their general features, (b) isolate  key physical classes of solutions (normal, vortex and vortex lattice states) and (c) prove existence of the normal, vortex and vortex lattice states and stability/instability of the normal states for large/small  temperature or/and magnetic fields. 
\end{abstract}


\section{Introduction} 

The Bogolubov-de Gennes equations describe the remarkable quantum phenomenon of  superconductivity.\footnote{For some physics background, see books \cite{dG, MartRoth} and the review papers \cite{Cyr, Legg}.} They present an equivalent formulation of the BCS theory and are among the latest additions to the family of important effective equations of mathematical physics. Together with the Hartree-Fock (-Bogolubov), Ginzburg-Landau and Landau-Lifshitz  equations, they are the quantum members of this illustrious family consisting of such luminaries as 
the heat, 
 Euler, Navier-Stokes and Boltzmann equations. 

 There are still many fundamental questions about these equations which are completely open, namely 
\begin{itemize}
\item 
Derivation;

\item Well-posedness;

\item Existence and stability of stationary magnetic solutions. 
\end{itemize}
 
By the magnetic solutions we mean (physically interesting) solutions with non-zero magnetic fields. In this paper we address 
the third problem.  The well-posedness (or existence) theory 
  will be addressed elsewhere (cf. \cite{BBCFS}). 

The key special solutions of Bogoliubov-de Gennes (BdG) equations are  normal, superconducting and mixed or intermediate states. The latter appear only for  non-vanishing magnetic fields. For type II superconductors, they consist of the vortices and (magnetic) vortex lattices. In this paper, we  prove the existence of the normal states for non-vanishing magnetic fields and of the vortex lattices and investigate the stability of the former. 

There is a considerable physics literature devoted to the BdG  equations, but, despite the role played by magnetic phenomena in superconductivity, it deals mainly with the zero magnetic field case, with only few disjoint remarks about the case when the magnetic fields are present, the main subject of this work.\footnote{The Ginzburg-Landau equations give a good account of magnetic phenomena in superconductors but only for temperatures sufficiently close to the critical one.} 

As for rigorous work, it also deals exclusively with the case of zero magnetic field.  The general (variational) set-up for the BdG equations is given in  \cite{BLS}. We use, like all subsequent papers, this set-up. 
The next seminal works on the subject are  \cite{HHSS}, where the authors prove the existence of superconducting states (the existence of the normal states under the assumptions of \cite{HHSS} is trivial), to which our  work is closest, and \cite{FHSS},  deriving the (macroscopic) Ginzburg-Landau equations. 
 For an excellent, recent review of 
 the subject, 
 with extensive references and discussion see \cite{HaiSei}. 
 
In the rest of this section we introduce the BdG  equations, describe their properties and the main issues and present the main results of this paper. In the remaining sections 
 we prove the these results, 
 with technical derivations delegated to appendices. 
  In the last appendix, following \cite{BBCFS}, we discuss a formal, but  natural, derivation of the BdG  equations.

\subsection{Bogolubov-de Gennes equations}\label{sec:bdg}

In the Bogolubov-de Gennes approach states of superconductors are described by the pair of operators $\g$ and $\s$, acting on the  one-particle state space, $\frach$,  and satisfying (after peeling off the spin variables)

\begin{equation}  \label{gam-al-prop} 
	0\le \gamma=\g^* \le 1,\  \quad 	 \s^*= \overline\s\ \quad  
	  {\rm\ and\ }\  \quad \alpha \alpha^* \leq \gamma(1-\gamma)
\end{equation}
where  $\overline\gamma :=\cC \g \cC$, with $\cC$, the operation of complex conjugation and $\al$ should thought of as acting from $\cC \frach$ to $\frach$.  $\g$ is a one-particle density operator, or diagonal correlation, and $\s$ is a two-particle coherence operator, or off-diagonal correlation. $\g(x, x)$ is interpreted as the one-particle density. 

  The one-particle space $\frach$ is determined by the many-body quantum problem. 
For zero density (or `finite') systems it is  $L^2(\R^d)$ and for positive density ones,  $L^2(\Om)$, where $\Om$ is a fundamental cell in a lattice $\lat\subset \R^d$, with magnetic filed dependent (twisted) boundary conditions (see Subsection \ref{sec:1pspaces} for more details). We take $\frach$ to be either $L^2(\R^d)$ or $L^2(\Om)$ and  understand $\int$ without specifying the domain of integration as taken over either $\R^d$ or $\Om$, depending on the choice of the one particle space.

The Bogoliubov-de Gennes equations form a system of self-consistent equations for  $\g$ and $\s$. 
It is convenient to organize the operators  $\gamma$ and $\s$ 
 into the self-adjoint operator-matrix 
\begin{align} \label{Gam}
	\G:= \left( \begin{array}{cc} \g & \s \\  \s^* & \one-\bar\g \end{array} \right).
\end{align}
The definition of $\g$ and $\al$ in terms of the many-body theory (see \eqref{omq-mom} of Appendix \ref{sec:qf-reduct}) implies that 
\begin{align} \label{Gam-prop} 
	0\le \G=\G^* \le 1\ {\rm\ and\ }\   J^* \G J=\one-\bar\G,\ J:=\left( \begin{array}{cc} 0 & \one \\  - \one & 0 \end{array} \right). \end{align} 
These relations imply relations \eqref{gam-al-prop}. 

Since  the BdG equations describe the phenomenon of superconductivity, they are naturally coupled to the electromagnetic field. We describe the latter by the vector  potential $a$ in the gauge in which the electrostatic potential is zero. 
Then the time-dependent BdG equations state (see e.g. \cite{dG, Cyr, MartRoth}) 

\begin{align}\label{BdG-eq-t}
& i \partial_t  \G =[\Lambda(\G, a), \G],\\ 
\label{hgam-a}
&\Lambda(\G, a) =  \big(\begin{smallmatrix} h_{\g a} & v^\sharp \s\\ v^\sharp \bar{ \s} & -\overline{h_{\g a}}\end{smallmatrix}\big),\ \quad  h_{\g a}  =-\Delta_a+ 
  v^* \gamma  - v^\sharp\, \gamma \,, 
\end{align} where $v(x, y)$ is the pair potential, the operator $v^\sharp$ 
  is defined through the integral kernels as $(v^\sharp\, \al) \, (x;y):= v(x, y)\al (x;y)$ and 
$(v^* \gamma)(x)= (v* \rho_\gamma)(x) :=\int v(x, y) \rho_\gamma(y) d y,$
with $\rho_\gamma(x):= \gamma(x, x)$.  
 $v^* \gamma$ and $v^\sharp\, \gamma$ are the direct and exchange self-interaction potentials, and  $\Delta_a:=|\n_a|^2, \n_a:=\n- i a$. 
 Eq \eqref{BdG-eq-t} is coupled to the  Maxwell equation (Amp\`{e}re's law)
\begin{align}\label{Amp-Maxw-eq}
 \partial_t^2 a   &= -\curl^* \curl a + j(\G, a),
\end{align}
where $j(\G, a)(x)\equiv j(\g, a)(x) := [-i \na,\g]_+(x, x)$ 
 is the superconducting current (in our gauge, the electric field is $E=-\p_t a$). (Above, $A(x, y)$ stands for the integral kernel of an operator $A$.) In what follows, we assume that 
\begin{align}\label{v-cond}
	v(x, y)=v(x - y),\ \text{ and }\ v(-x)=v(x).
\end{align}
We specify below additional assumptions on  $v(x,y)$ and on the operators $\g$ and $\al$  so that all the terms in  \eqref{BdG-eq-t} - \eqref{Amp-Maxw-eq} are well defined (at least weakly).

\medskip

 {\bf Connection with  the BCS theory. }  Eq \eqref{BdG-eq-t} can be reformulated as an equation on the Fock space involving an effective quadratic Hamiltonian  (see \cite{Cyr, dG, HaiSei} and \cite{BBCFS}, for the bosonic version). These are the effective BCS equations and the effective BCS Hamiltonian. 

\medskip

\begin{rem}\label{rem:space} 
{\em 1)  $v$ is not an electrostatic potential but an effective two-body one. To avoid a confusion we have chosen the gauge in which the electrostatic potential is zero. 

2) For $\s=0$, Eq \eqref{BdG-eq-t} 
 becomes 
 the  time-dependent von Neumann-Hartree-Fock equation for $\g$ (and $a$).

3) 
One can extend a formal derivation of \eqref{BdG-eq-t} given in Appendix \ref{sec:qf-reduct} to the coupled system \eqref{BdG-eq-t}-\eqref{Amp-Maxw-eq} by starting with the many-body quantum Hamiltonian \eqref{H} coupled to the quantized electro-magnetic filed. 

4) For a definition of $\rho_\g$ not relying on the integral kernels, see \eqref{den-def}.}
\end{rem}

\medskip
\subsection{Symmetries and conservation laws} 
\label{sec:bdg-prop} 
 The equations \eqref{BdG-eq-t} - \eqref{Amp-Maxw-eq} are invariant under the time-independent {\it gauge} transformations, 
\begin{equation}\label{gauge-transf}
    T^{\rm gauge}_\chi : (\g, \s, a) \mapsto (e^{i\chi }\g e^{-i\chi } , e^{i\chi }  \s e^{i\chi }, a + \nabla\chi), 
\end{equation}
  for any sufficiently regular function $\chi :  \R^d \to \R$,  and  
the {\it translation}, {\it rotation} and {\it reflection} transformations, 
\begin{align}\label{tr-transf}
   & T^{\rm trans}_h :  (\g, \s, a) \mapsto (u_h\g u_h^{-1}, u_h\s u_h^{-1}, u_h a),\\
\label{rot-transf}
  &  T^{\rm rot}_\rho : (\g, \s, a) \mapsto (u_\rho\g u_\rho^{-1} , u_\rho\s u_\rho^{-1}, \rho u_\rho a),\\
\label{refl-transf}
  &  T^{\rm refl} : (\g, \s, a) \mapsto (u^{\rm refl}\g (u^{\rm refl})^{-1} , u^{\rm refl}\s (u^{\rm refl})^{-1}, - u^{\rm refl} a),\end{align}
 for any $h \in \R^d$ and $\rho \in O(d)$. 
Here $u_h\equiv u_h^{\rm transl}$, $u_\rho\equiv u_\rho^{\rm rot}$ and $u^{\rm refl}$ are the standard translation, rotation and reflection transforms $u^{\rm transl}_h :  f(x) \mapsto f(x + h)$, $u^{\rm rot}_\rho : f(x)  \mapsto  f(\rho^{-1}x)$ and $u^{\rm refl} : f(x)  \mapsto  f(- x)$.  

In terms of $\G$, say the gauge transformation, $T^{\rm gauge}_\chi$, could be written as 
\begin{equation}\label{gauge-transf'} 
  T^{\rm gauge}_\chi : 	\G\ra U_\chi \G U_\chi^{-1},\ \text{ where }\ U_\chi=\left( \begin{array}{cc} e^{i\chi }  & 0 \\ 0  & e^{-i\chi }  \end{array} \right)
\end{equation}
(extended correspondingly to $(\G, a)$ by $T_\chi^{\rm gauge} (\G , a)=(T_\chi^{\rm gauge} \G , a+\nabla \chi)$). Notice the difference in action of this transformation on the diagonal and off-diagonal elements of $\G$.

The  invariance under the gauge transformations can be proven by using the relation
\begin{equation}\label{Lam-gauge-cov}
	\Lambda(T_\chi^{\rm gauge} (\G , a))=  T_\chi^{\rm gauge} (\Lambda(\G, a)),
\end{equation}	
 shown by using the operator calculus.

We also keep the notation $T^{\rm trans}_h$ for these operators restricted to $\G$'s.
 
\smallskip

Since we assumed that the external fields are zero, the equations are translationally invariant (when considered in $\R^d$). Because of the gauge invariance, it is natural to consider the simplest, gauge (magnetically) translationally invariant solutions, i.e. solutions invariant under the transformations   
\begin{align}\label{mag-transl}T_{b s}: = (T^{\rm gauge}_{\chi_{s}^b})^{-1}T^{\rm trans}_{ s}, 
\end{align}  
for any $s\in \R^2$, where $\chi_{b s} (x):\lat\times \R^2\ra \R$ is given in 
\begin{align}\label{chibs}\chi_{b s}(x):=  x\cdot a_b(s)+c_s,\end{align} where $a_b(x)$ is the vector potential with the constant magnetic field, $\curl a_b=b$  and $c_s$ are constants satisfying  
$ c_{s+t} - c_s - c_t - \frac{b}{2}  s \wedge t \in 2\pi\Z. $ 
   In the gauge we work below with, we have $\chi_{b s}(x):=  \frac b2 (s\wedge x)+c_s$. 
We have

\begin{lemma} \label{lem:mtProjRep}
The operators $T_{bs}$, defined in \eqref{mag-transl} and restricted to $\G$'s, 
 satisfy 
\begin{align} \label{TsTt}	& T_{bs}T_{bt} = \hat I_{b s t} T_{b(s+t)},\\
 \label{Ist}&\hat I_{b s t} u:= I_{b s t} u I_{b s t}^{-1},\ I_{b s t} := \left( \begin{array}{cc} e^{i \frac b2 (s \wedge t)} & 0 \\
0 & e^{-i \frac b2 (s \wedge t)}\end{array} \right)\end{align}
\end{lemma}
\begin{proof}
Let $U^{\rm trans}_{ s}:=\diag(u_s^{\rm trans}, u_s^{\rm trans})$  on $L^2_{\rm loc}(\R^2)\times L^2_{\rm loc}(\R^2)$ and define the transformations 
 \begin{align}\label{Ubs}U_{b s}:= (U^{\rm gauge}_{\chi_s})^{-1}U^{\rm trans}_{ s}.\end{align} Then $T_{b s} \G =U_{b s} \G U_{b s}^{-1}$ and $U_{b s} U_{b t}=I_{b s t} 
 U_{b s+t}=  U_{b s+t}I_{b s t} $ 
 where we used that $g_{s t}(x):= \chi_s^b(x)+ \chi_t^b(x+t) - \chi_{s+t}^b(x)=  \frac b2 (t\wedge s)$. 
  Hence, the result follows.
\end{proof}

\medskip

 \paragraph{\bf Particle-hole symmetry.}   The evolution under the equations \eqref{BdG-eq-t} - \eqref{Amp-Maxw-eq} preserves the relations in \eqref{Gam-prop}, i.e. if an initial condition has one of these properties, then so does the solution.  
This follows from the relation 
\begin{align} \label{Lam-ph-sym} J^*\Lam J= -  \overline{\Lam}.\end{align}
 The second relation in \eqref{Gam-prop} is called the particle-hole symmetry.

\medskip

 \paragraph{\bf Conservation laws.} 
  Eqs \eqref{BdG-eq-t} -- \eqref{Amp-Maxw-eq}  conserve the energy  
    \begin{align} \notag \cE(\G, a, \phi) :=& \Tr\big((-\Delta_a) \gamma \big) + \frac12 \Tr\big((v* \rho_\g) \gamma \big)
   -\frac12\Tr\big((v^\sharp \g) \gamma \big) + \frac12 \Tr\big( \s^* (v^\sharp \s) \big)\\
 \label{energy-t} & 
+ \frac12\int ( |\curl a|^2+ |\p_t a |^2). 
\end{align} 
(Recall that in our gauge, $-\dot a\equiv -\p_t a$ is the electric field.)  
 Indeed, taking the time derivative of \eqref{energy-t} and using that $d_a \Tr\big((-\Delta_a) \gamma \big)a'=-\Tr\big(j(\G, a) a' \big)$, we find 
 \begin{align} \label{dtE}	\partial_t \cE 
=& \tr(h_{a \g} \dot \g)  +\Tr\big( \dot\s^* (v^\sharp \s) +  \s^* (v^\sharp \dot\s)\big)\\
 \label{dtE'}& + \lan \curl^* \curl - j(\G, a), \dot a\ran + \lan \ddot{a} , \dot a \ran	
\end{align}
where the inner product and trace are in $\frach$. A simple computation shows that line \eqref{dtE} 
$= \frac12 \tr(\dot\G \Lambda(\G, a))$ and therefore by \eqref{BdG-eq-t} is $0$. Line \eqref{dtE'} vanishes by \eqref{Amp-Maxw-eq}. Hence  
 $\partial_t \cE =0$. \qquad \qquad $\Box$

Furthermore, the global gauge invariance implies the evolution conserves the number of particles, $N:=\Tr \g$.

\begin{rem}\label{rem:aut-fact}   
{\em   The trace of operator valued $2 \times 2$ matrices is defined as the $\mathfrak{h} \otimes \C^2$ trace, i.e. as the sum of traces of the diagonal entries individually in $\mathfrak{h}$. If a matrix-operator is trace class, then this trace coincides with the standard one.
 If $A \geq 0$, then the two forms of trace for the operator $A$ agree. More precisely, $A \geq 0$ is trace class if and only if its $\frach \otimes \C^2$ trace is finite. In particular, $S(\G) < \infty$ if and only if $g(\G)$ is trace class.
}
\end{rem}

 \medskip

\subsection{Stationary Bogoliubov-de Gennes equations}\label{sec:station-bdg}

We consider stationary solutions to \eqref{BdG-eq-t} of the form 
\begin{align}\label{stat-sol}
   \G_t :=  T^{\rm gauge}_\chi \G = U_\chi^{\rm gauge} \G  (U_\chi^{\rm gauge})^{-1},
\end{align}
 with $\G$ 
 independent of $t$,   $a$ independent of $t$ and $\phi=0$.  
 We have  \begin{proposition}\label{prop:stat-sol} 
Operator-family \eqref{stat-sol}, with $\G$ independent of $t$ and $\dot\chi\equiv -\mu$, constant, is a solution to \eqref{BdG-eq-t}, iff $\G$ solves the equation
\begin{equation}\label{BdG-eq-stat'}
[\Lam_{\G a}, \G ]=0, 
\end{equation}
where $\Lam_{\G a}\equiv \Lam_{\G a \mu}: = \Lambda(\G, a) - \mu S$, with, recall,  $S:=\left( \begin{array}{cc} 1 & 0 \\ 0 & - 1 \end{array} \right)$, and is given explicitly  
\begin{align} \label{Lam'} 
 \Lam_{\G a }:=\left( \begin{array}{cc} h_{\g a \mu} & v^\sharp \s \\ v^\sharp \s^* & - \bar h_{\g a \mu} \end{array} \right), \end{align}
with $h_{\g a \mu}:=h_{\g a}-\mu$.
\end{proposition}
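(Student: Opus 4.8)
The plan is to substitute the ansatz \eqref{stat-sol} directly into the time-dependent equation \eqref{BdG-eq-t} and show that, after the gauge rotation is accounted for, the evolution reduces exactly to the commutator condition \eqref{BdG-eq-stat'}. First I would compute the left-hand side of \eqref{BdG-eq-t}. With $\phi = 0$ and $\G_t = U^{\rm gauge}_\chi \G (U^{\rm gauge}_\chi)^{-1}$, where $\G$ is $t$-independent and $\dot\chi \equiv -\mu$ is a constant (independent of $x$), I would differentiate in $t$. Since $U_\chi = \bigl(\begin{smallmatrix} e^{i\chi} & 0 \\ 0 & e^{-i\chi}\end{smallmatrix}\bigr)$ and $\chi$ is $x$-independent, we have $\pt U_\chi = i\dot\chi S\, U_\chi = -i\mu S\, U_\chi$, with $S = \bigl(\begin{smallmatrix} 1 & 0 \\ 0 & -1\end{smallmatrix}\bigr)$. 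Hence $i\pt \G_t = i(\pt U_\chi)\G(U_\chi)^{-1} + i U_\chi \G \,\pt(U_\chi)^{-1}$, and using $\pt (U_\chi)^{-1} = i\mu S (U_\chi)^{-1}$ this collapses to $i\pt\G_t = U_\chi\,[\mu S, \G]\,(U_\chi)^{-1}$.

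Next I would handle the right-hand side using the gauge covariance \eqref{Lam-gauge-cov}. The key point is that the diagonal entry $h_{\g a}$ involves $h_a = -\Delta_a$ and the self-interaction potentials, all of which are covariant under the gauge transformation $T_\chi$ when $\chi$ is $x$-independent; since $\nabla\chi = 0$ here, the vector potential $a$ is unchanged, and \eqref{Lam-gauge-cov} gives $\Lambda(\G_t, a) = \Lambda(U_\chi \G (U_\chi)^{-1}, a) = U_\chi\,\Lambda(\G, a)\,(U_\chi)^{-1}$. Therefore the commutator on the right of \eqref{BdG-eq-t} becomes $[\Lambda(\G_t, a), \G_t] = U_\chi\,[\Lambda(\G, a), \G]\,(U_\chi)^{-1}$. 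Equating the two sides and conjugating back by $(U_\chi)^{-1}$ on the left and $U_\chi$ on the right, the equation \eqref{BdG-eq-t} is equivalent to
\begin{equation*}
[\mu S, \G] = [\Lambda(\G, a), \G],
\end{equation*}
which I would rewrite as $[\Lambda(\G, a) - \mu S, \G] = 0$, i.e. $[\Lam_{\G a}, \G] = 0$ with $\Lam_{\G a} = \Lambda(\G, a) - \mu S$, as claimed.

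Finally I would verify the explicit matrix form \eqref{Lam'}. Writing out $\Lambda(\G, a) - \mu S$ block by block, the $(1,1)$ entry is $h_{\g a} - \mu = h_{\g a \mu}$, the $(2,2)$ entry is $-\overline{h_{\g a}} + \mu = -\overline{h_{\g a} - \mu} = -\bar h_{\g a\mu}$ (using that $\mu$ is real so $\overline{h_{\g a}} - \mu = \overline{h_{\g a} - \mu}$), and the off-diagonal entries $v^\sharp \s$ and $v^\sharp \bar\s = v^\sharp \s^*$ are untouched by $-\mu S$, giving exactly \eqref{Lam'}.

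I expect the only genuinely delicate point to be the careful bookkeeping in the first step: one must use that $\chi$ is independent of $x$ (so that $U_\chi$ commutes with the spatial operators in $\Lambda$ only through the covariance relation, while its $t$-derivative produces precisely the diagonal term $-i\mu S$), and that $\dot\chi$ is the \emph{constant} $-\mu$. The sign convention on $\mu$ versus $\dot\chi$ must be tracked consistently so that the chemical potential enters $h_{\g a\mu} = h_{\g a} - \mu$ with the stated sign; everything else is a direct block-matrix computation relying on the already-established gauge covariance \eqref{Lam-gauge-cov} and the particle-hole structure of $\Lambda$.
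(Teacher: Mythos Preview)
Your proof is correct and follows essentially the same approach as the paper: substitute the gauge-rotated ansatz into \eqref{BdG-eq-t}, compute $i\partial_t\G_t$ using the diagonal structure of $U_\chi$ (with $\dot\chi=-\mu$) to obtain the $[\mu S,\G]$ term, use the gauge covariance \eqref{Lam-gauge-cov} together with $\nabla\chi=0$ to pull $U_\chi$ through the commutator on the right, and then conjugate back. Your write-up is in fact slightly more explicit than the paper's, since you also check the block form \eqref{Lam'} directly.
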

\begin{proof}   
Plugging \eqref{stat-sol} into \eqref{BdG-eq-t},  and using \eqref{Lam-gauge-cov}, the relation $\partial_t (U_\chi \eta U_\chi^{-1}) = i\dot \chi [S, U_\chi \eta U_\chi^{-1}]= i\dot \chi U_\chi [S, \eta] U_\chi^{-1}$ and that $\chi$ is independent of $x$, we see that 
\begin{align}
	-\dot \chi U_\chi [S, \eta] U_\chi^{-1} 
	 =& [\Lambda(U_\chi \eta U_\chi^{-1}, a), U_\chi \eta U_\chi^{-1} ] \\
		=& [\Lambda(U_\chi \eta U_\chi^{-1}, a+\nabla \chi), U_\chi \eta U_\chi^{-1} ] \\
		=& U_\chi [\Lambda(\eta, a),\eta] U_\chi^{-1}.
\end{align}
 Since $\dot\chi\equiv -\mu$, 
 it follows then
$	[\Lambda(\eta, a)-\mu S,\eta]=0,$ which is exactly \eqref{BdG-eq-stat'} and therefore gives the statement of the proposition.
\end{proof}

 For any reasonable function $f$, solutions of the equation 
 \begin{align} \label{Gam-eq'} 
	\G=f(\frac{1}{T} \Lambda_{\G a}),
\end{align} 
solve \eqref{BdG-eq-stat'} and therefore give stationary solutions of \eqref{BdG-eq-t}. Under some conditions, the converse is also true. (The parameter $T>0$, the temperature, is introduced here for the future references.)

The physical function $f$ is selected by either a thermodynamic limit (Gibbs states) or by a contact with a reservoir (or imposing  the maximum entropy principle) 
 It is given by the Fermi-Dirac distribution
\begin{align} \label{FD-distr} 
	f_{\rm FD} (h) =(1+e^{ h})^{-1}.
\end{align} 
Inverting the function $f$, one can rewrite \eqref{Gam-eq'} as $\Lambda_{\G a}=T f^{-1}(\G)$. Let $f^{-1}=: g'$. Then the stationary Bogoliubov-de Gennes equations 
can be written (in the Coulomb gauge $\divv a=0$) as
\begin{align} \label{Gam-eq}
	&\Lam_{\G a} -T g'(\G) =0,\\ 
 \label{a-eq} 
     &\CURL^*\CURL a - j(\G, a)=0.
\end{align}
Here  $\Lam_{\G a}\equiv \Lam (\G, a)$ and $T\ge 0$ (temperature) and, as follows from the equations $g'=f^{-1}$ and \eqref{FD-distr},   
the function $g$ is given by 
\begin{equation} \label{g}
	g(\lam)= - (\lam\ln \lam  + (1-\lam) \ln (1-\lam)),
\end{equation}
so that
\begin{equation} \label{g'} 
	g'(\lam) =- \ln \frac{\lam}{ 1-\lam}. 
\end{equation} 
From now on, 
 {\it we assume that $g(\lam)$ is given in} \eqref{g}. 

\medskip

\begin{rem}\label{rem:BdG} 
{\em 1) One can express these equations in terms of eigenfunctions of the operator $\Lam_{\G a}$, which is the form appearing in physics literature (see \cite{BLS, BBCFS}).

 2)  For \eqref{Gam-eq'} to give $\G$ of the form \eqref{Gam}, the function  $f (h)$ should satisfy the conditions
\begin{equation} \label{g-cond}
	f(\bar h) = \overline{f(h)} \text{   and   } f(-h) =\one - f(h) .
\end{equation}
 For  $g(x)$ given in \eqref{g}, the function  $f(h):= (g')^{-1}(h)$ satisfies these conditions as can be checked from its explicit form  \eqref{FD-distr}.  
However, \eqref{g-cond} is more general than \eqref{FD-distr}. 
Indeed, the first condition in \eqref{g-cond} means merely that $f$ is a real function, 
  while the second condition in \eqref{g-cond} is satisfied by functions $f(h):= (1+e^{\tilde{g}(h)})^{-1}$, with  $\tilde{g}(h)$, any odd function. 
  \DETAILS{To check this we set $g(h) = f(h)^{-1}-1$. Then we see that it is equivalent to
\begin{align}
	g(h)g(-h) =1 .
\end{align}
It is easy to see now that $f(h):= (1+e^{\tilde{g}(h)})^{-1}$, with  $\tilde{g}(h)$ odd, satisfies \eqref{g-cond}.}

 More generally, one could require that  $g(\lam)$ satisfies the conditions \eqref{g-cond}, with $f(h):= (g')^{-1}(h)$, and 
\begin{align}\label{g-cond-2}   g'(1-x)=-g'(x).\end{align}

3) If we drop the direct and exchange self-interactions from $h_{\g a \mu}$, then $\Lam_{\G a }$ becomes independent of the diagonal part, $\g$, of $\G$ and the equation \eqref{Gam-eq'} implies that \eqref{Gam-eq} has always the solution}
\begin{equation}\label{Gam-sol-diag}
	\G_{T a }\ =f_{\rm FD} (\frac{1}{T} \Lam_{a }),\ \text{ where }\   \Lam_{a }:= \Lam_{\G a }\big|_{\G=0}.
\end{equation}
\end{rem}

 \subsection{Free energy} 
 With the operators $\g$ and $\al$ are defined on the space $\frach$, the static Bogoliubov-de Gennes equations arise as 
the Euler-Lagrange equations  for the free energy functional 
\begin{align}\label{FT-def} F_{T}(\G, a):=E(\G, a) -T S(\G)-\mu N(\G),\end{align}
where $S(\G) = \Tr g(\G)$ is the entropy, 
 $N(\G):=\tr \g$ is the number of particles, and $E(\G, a)$ is the energy functional \eqref{energy-t} for $\G$ and $a$  time-independent and is given (in the Coulomb gauge $\divv a=0$) by 
\begin{align} \notag 
	E(\G, a) &=\Tr\big((-\Delta_a) \gamma \big) +\frac12\Tr\big((v* \rho_\g) \gamma \big) -\frac12\Tr\big((v^\sharp \g) \gamma \big)\\
 \label{energy} & +\frac{1}{2} \Tr\big( \s^* (v^\sharp \s) \big) 
 +\int dx  |\curl a(x)|^2.
\end{align}

The energy functional $E(\G, a)$ originates as $E(\G, a):=\qf(H_a)$, where  $\qf$ is a quasi-free state in question (see Appendix \ref{sec:qf-reduct}) and  $H_a$  is the standard many-body given in \eqref{H}, coupled to the vector potential $a$.

\begin{rem}\label{rem:entr} 
{\em   Due to the symmetry \eqref{Gam-prop} of $\G$, we see that 
\begin{align}\label{S-sym}
	\Tr(\G \ln \G) = \Tr((1-\G)\ln(1-\G))
\end{align}
which, recalling \eqref{g}, implies that} 
\DETAILS{$\tr g(\G)=\tr s(\G)$, where 
\begin{align}
	s(\G):=-2 \G\ln\G \label{eqn:s-def}
\end{align}
 and  $g(\lam)$ is given in \eqref{g}, and therefore 
 \begin{align}\label{S-expr} S(\G) =\tr s(\G) = \Tr g(\G).\end{align}}
\begin{align}
\label{S-expr}	& S(\G) := \Tr(s(\G))= \Tr(g(\G)),\\
 \label{g-s-def}	&  g(\G): = -\G\ln\G - (1-\G)\ln(1-\G),\ s(\G ):= -2\G\ln\G.
\end{align}\end{rem}

 \subsection{Micro and macro  scales}

 We think of $v$ living on a microscopic scale, 
 normal and superconducting and mixed states as living on a macroscopic/ mesoscopic scale. 
  (the scale of the sample).  
Though in some related questions it is crucial to differentiate between different scales, this does not play a role in this paper and keep the notation simple we do not keep track of different scales in our notation. 

 \subsection{One-particle spaces}\label{sec:1pspaces}

 We introduce another - pre-thermodynamic limit - one - particle space.  To fix ideas, we assume in what follows, that $d=2$, which means effectively the cylinder geometry.

 Let  $\cL\subset \R^2$ is a Bravais lattice, fixed throughout the paper. 
By $\Omd$ we denote a fundamental cell of $\lat$. 
We introduce  the magnetic translation operator  
\begin{align}\label{ubs} u_{b s} := u^{\rm gauge}_{-\chi_{b s}} u^{\rm trans}_{ s},\end{align} where $\chi_{b s}$ is defined in \eqref{chibs} and, recall the operators $U^{\rm gauge}_{\chi}$ and $U^{\rm trans}_{ s}$ are defined as  $u^{\rm gauge}_\chi :    \phi(x) \mapsto e^{i\chi(x)}\phi(x)$ and $u^{\rm transl}_h :    \phi(x) \mapsto \phi(x + h)$ (cf. the definitions after \eqref{rot-transf}).
 Now, we define the periodic one-particle state space 
\begin{align}
	\frachb  := & \{ f \in L^2_{loc}(\R^2) : u_{b s} f = f \text{ for all } s \in \lat \}
\end{align}
 which is a Hilbert space with the scalar product defined, for an arbitrary fundamental cell $\Omd$ of $\lat$, as 
 \begin{align}\label{frachb-norm}\| f\|_{\frachb} :=\|f\|_{L^2(\Omd)}. 
\end{align} 
  In what follows, $\frach$ stands for either $\frachb$ or $L^2(\R^2)$.

\medskip

Similarly, we consider the Sobolev space of vector potentials:  let $\frachvec^{r}$ be either $H^r(\R^2; \R^2)$, if  the one-particle state space $\frach$ is $L^2(\R^2)$, or 
\begin{align}
\frachvec^{r}_{\rm per} := & \{ a \in H^r_{\rm loc}(\R^2; \R^2) : T_s a = a\ \forall s \in \lat, \div a=0, \int_\Om a = 0\},
\end{align}
for some fundamental cell $\Om$ of $\lat$, with the norm $\|a\|_{(r)}\equiv \|a\|_{H^r}\equiv \|a\|_{H^r(\Omd)}$, if  the one-particle state space is $\frachb$. (The conditions $\div a=0$ and $\int_\Om a  = 0$ make the the operator $\curl^*\curl$ strictly positive.)	Finally, we define the affine space
\begin{align}
\frachbvec^{r} := a_b + \frachvec^{r}.
\end{align}

Now, we define spaces of $\g$'s and $\al$'s used below. Let $I^r$ denote the space of bounded operators satisfying $\|A\|_{I^r}:=(\tr(A^*A)^{r/2})^{1/r}<\infty$ (a trace ideal or non-commutative $L^r-$space) and let $M_b := \sqrt{-\Delta_{a_b}}$. . 
We define 
Sobolev-type spaces for trace class operators by 
\begin{align} \label{Is1}	 
 	& I^{s,1} := \{ A : \frach  \rightarrow \frach  :  
	  \|A\|_{I^{s,1}} := \|M_b^{s} A M_b^{s}\|_{I^{1}} < \infty \},\\
\label{Is2} 
	&  I^{s,2} := \{ A : \frach  \rightarrow \frach : \, 
	 \|A\|_{ I^{s,2}} := \| A M_b^{s}\|_{I^{2}} 	< \infty \}.
\end{align}
Note that $I^{0,p} = I^p$. We will usually assume $\g\in I^{1,1}$ and $\al\in  I^{1,2}$. (Strictly speaking $\al$ acts from the dual space $\frach^*$ to $\frach$ (see \cite{BLS} for details), but for the sake of notational simplicity we will ignore this subtlety.) 
 We will use the  notation $\hat I^{s}$ for the space of $\G$'s with $\g\in I^{s,1}$ and $\al\in I^{s,2}$ and the norm
\begin{align} \label{eta-norm}	\|\G\|_{(s)} := \|\gamma \|_{I^{s,1}} + \|\al \|_{ I^{s,2}}.
\end{align} 
\DETAILS{	We denote by $I^{s,1, 2}$ 
	the Sobolev spaces of operators $\G$, with $\g\in I^{s,1}$ and $\al\in  I^{s,2}$, equipped with the norms
\begin{align}
	\|\G\|_{(s)} := \|\gamma \|_{I^{s,1}} + \|\alpha\|_{ I^{s,2}}
\end{align}}
Due to Lemma \ref{lem:al-gam-bnd}(2) below, $\g\in I^{1,1}\Rightarrow \al\in I^{1,2}$ and $\|\G\|_{(s)} \simeq \|\gamma \|_{I^{s,1}}$. 
 Furthermore, we define  
\DETAILS{\begin{align}
	\mathcal{D}^s := \mathcal{D} \cap I^{s,1, 2}, 
\end{align}
with $s=1$, where   {\bf(what is $\cL( \mathfrak{h} \times \mathfrak{h}^*)$?)} } 
\begin{align} \label{eqn:defDomainD}
	 \mathcal{D}^s_\nu 
	=\big\{\G\in \hat I^{s},\ 
	 \G \text{ satisfies \eqref{Gam-prop}},\  
	&\tr\g=\nu,\ S(\G)<\infty \big\}. 
\end{align}

\subsection{
Ground states of Bogoliubov-de Gennes equations}
\label{sec:bdg-spec-sol}

The static Bogolubov-de Gennes equations \eqref{Gam-eq}-\eqref{a-eq} have the following key classes of solutions providing candidates for the ground states:
\begin{enumerate}
\item Normal state:  $(\G, a)$, with $\al=0$. 

\item Superconducting state:  $(\G, a)$, with $\s \ne 0$ and $a = 0$.

\item Mixed state:  $(\G, a)$, with $\s \ne 0$ and $a \ne 0$.
\end{enumerate}

We discuss the above states in more detail. 

\medskip
\paragraph{\bf Superconducting states.} The existence of superconducting,  translationally invariant solutions is proven in \cite{HHSS} (see this paper and \cite{HaiSei} for the references to earlier results). 

\medskip

\paragraph{\bf Normal states.} For $b=0$, we can choose $a=0$. 
 In this case, the existence of  normal translationally invariant solutions was proven in \cite{Ha}.

For $b\ne 0$, the simplest normal states are  the magnetically translation (mt-) invariant ones,  i.e. ones satisfying
\begin{align}\label{mt-invar}T_{b s} (  \G, a) = (  \G, a),\end{align}  
for any $s\in \R^2$, where $T_{b s}$ is defined in \eqref{mag-transl}. 
 Here $\G$ acts on $L^2_{\rm loc}(\R^2)\times L^2_{\rm loc}(\R^2)$ which could be further specified as $\frach\times \frach$, i.e. as 
 either $L^2(\R^2)\times L^2(\R^2)$ or $\frachb\times \frachb$.  In the second case, the fact that $T_{b s}$ maps the set of bounded diagonal operator-matrices on $\frachb\times \frachb$ into itself  follows from the relation \eqref{TsTt} - \eqref{Ist} and the fact that $\hat I_{b s t}$ is an identity on diagonal operator-matrices.

 
\medskip

\paragraph{\bf Mixed states.}
 
The main candidate for a mixed state is a {\it vortex lattice}, i.e. a state, $(\G, a)$, satisfying $\al\neq 0$ and the equivariance condition 
\begin{align}\label{VL:equiv}T^{\rm trans}_{ s} (\G, a) =  T^{\rm gauge}_{\chi_{bs}} (\G, a)\  \text{ for every }\ s \in \lat. 
\end{align}
Recall that $\chi_{bs}$ are defined in \eqref{chibs}.

\begin{rem}\label{rem:aut-fact}   
{\em  The map \eqref{chibs} 
 satisfies the co-cycle condition 
\begin{align}\label{co-cycle}
    \chi_{s+t}(x) -\chi_s (x+t) - \chi_t(x) \in 2\pi \Z,\ \forall s, t\in \lat,
\end{align}
and is called the {\it summand of automorphy} (see \cite{Sig0} for a relevant discussion). (The map 
$e^{i \chi} : \lat\times\R^2 \to U(1)$, where 
 $\chi (x, s) \equiv \chi_s (x)$ is called the {\it factor of automorphy}.)
In fact, for $b$ satisfying, every map $\chi_s : \lat\times\R^2 \to \R$ satisfying \eqref{co-cycle} is equivalent to  one of \eqref{chibs}.} 
\end{rem}

\medskip

\paragraph{\bf Magnetic flux quantization.} Denote by $\Omd$ a fundamental cell of $\lat$. One has the following result
\begin{align}\label{mf-quant} \text{ For a vortex lattices: }\ \frac{1}{2\pi} \int_{\Omd} \curl a = c_1(\chi)\in \Z. 
\end{align} 
Here 
 $c_1(\chi)$ is the first Chern number associated to  the summand of automorphy $\chi : \lat\times\R^2 \to \R$: 
\begin{equation}\label{gs-equiv}
    c_1(\chi) =\frac{1}{2\pi} (\chi_{\nu_2}(x+\nu_1) - \chi_{\nu_2}(x) - \chi_{\nu_1}(x+\nu_2) + \chi_{\nu_1}(x)), 
\end{equation}
  for any basis $\{\nu_1, \nu_2\}$ in $\cL$. 
  (By the cocycle condition \eqref{co-cycle},  this quantity is independent of $x$ and of the choice of the basis $\{\nu_1, \nu_2\}$ and is an integer, for more discussion see \cite{Sig0}.) 

\subsection{Results} \label{sec:results}   
We formulate a technical definition used below. Let the reflection operator $\tau^{\rm refl}$ be given by conjugation by the reflections, $u^{\rm refl}: f(x)\ra f(-x)$. We say that a state $(\G, a)$ is {\it even (or reflection symmetric)} iff 
 \begin{align} \label{even} \tau^{\rm refl} \g =\g,\  \tau^{\rm refl} \al =\al\ \text{ and }\ u^{\rm refl} a = - a.
\end{align} 
The reflections symmetry of the BdG equations implies that if an initial condition is even then so is the solution every moment of time.


We begin with a basic result on the differentiability of the free energy functional   $F_{T}$ along the perturbations (`tangent vectors') at $(\G, a)$ from the following class 
\begin{align}\label{pert-class}& \cP_\G:=\{(\G', a') \in \hat I^1 \times \frachvec^1:  \text{ \eqref{Gam'-cond} holds } \},\\
\label{Gam'-cond}  
&J^* \G' J=-\bar\G',\   (\G')^2 \lesssim [\G (1-\G)]^2, \text{ and } \Tr(S_1\G') = 0,
\end{align}
where $S_1 = \text{diag}(1,0)$ and $J$ is defined in \eqref{Gam-prop}. 
 Conditions \eqref{Gam'-cond} 
  are designed to handle a delicate problem of non-differentiability of $s(\lam):=2 \lam \ln \lam$ at $\lam=0$, while allowing for sufficiently rich set to derive the BdG equations. 

\begin{theorem} \label{thm:BdG=EL}
(a) The free energy functional   $F_{T}$ is well defined on the space $ \mathcal{D}^1_\nu \times \frachbvec^1$.

(b) 
$F_T$ is continuously (G\^ateaux  or Fr\'echet) differentiable at $(\G, a) \in \mathcal{D}^1_\nu \times \frachbvec^1$, 
   with respect of perturbations $(\G', a')\in \cP_\G$. 

(c) If $0 < \G < 1$, strictly and  $(\G, a)$ is even in the sense of the definition \eqref{even}, 
 then critical points of $F_T$ satisfy 
 the  BdG equations for some $\mu$ (as a result of the constraint $\Tr \g = \nu$). 
 
(d)  Minimizers of $F_T$ over $\mathcal{D}^1_\nu \times \frachbvec^1$ are its critical points. 
\end{theorem}

This theorem is proven in  Appendix \ref{sec:energy}. For the translation invariant case, it is proven 
  in \cite{HHSS}. 
   In general case, but with $a=0$, 
   the fact that BdG is the Euler-Lagrange equation of BCS used in  \cite{FHSS}, but it seems with no proof provided. 

As a result of Theorem \ref{thm:BdG=EL}, we write the  Bogoliubov-de Gennes equations as 
\begin{align}\label{F'-eq} 
F'_{T }(\G, a)=0, \end{align}
where the map  $F'_{T }(\G, a)$ is defined by the r.h.s. of the static Bogoliubov-de Gennes equations \eqref{Gam-eq}-\eqref{a-eq} and can be thought of a gradient map of $F_T$. \DETAILS{Hence,   we can write the equations  \eqref{Gam-eq}--\eqref{a-eq} 
as 
\begin{align}\label{F'-eq} 
F'_{T }(\G, a)=0,\end{align}
 where $F'_{T }(\G, a)=( F'_{T \eta}(\G, a), F'_{T a}(\G, a))$ 
  is the formal G\^ateaux derivative (more precisely, the trace metric gradients) of  
  the free energy,  $ F_{T}(\G, a)$, in $\G$ and $a$.} 
\DETAILS{\begin{align}\label{F'T} 
	F'_{T }(\G, a):=\Lam_{\G a } -T g'(\G). 
\end{align} } 

\medskip

From now on, we consider only the cylindrical geometry, i.e. we assume the {\it dimension} is $2$.
  In the remaining results, we drop the exchange term $v^\sharp \gamma$. (We expect that these terms can be readily added back to the equation and will not drastically effect the proof below.) The existence of the mt-invariant normal states for $b\ne 0$ is stated in the following theorem proven in Section \ref{sec:norm-states-exist}: 

\begin{theorem}\label{thm:norm-state-exist} 
Drop the exchange term $v^\sharp \gamma$ and let $|\int v|$ be small. Then the BdG equations \eqref{Gam-eq} - \eqref{a-eq} 
 on the  space $I^{2,1}\times I^{2,2}\times \frachbvec^{2}$  have  
 a mt-invariant  solution, unique on the set of even (in the sense of the definition \eqref{even}),  mt-invariant states. 
 
Moreover, this solution is normal (i.e $\al=0$) and is of the form $(\G=\G_{T, b},\  a=a_b)$, where   
(cf. \eqref{Gam-sol-diag})
\begin{align}\label{Gam-Tb} 
		\G_{T b }:= \left( \begin{array}{cc} \g_{T b } & 0 \\ 
		0 & \one-\bar\g_{T b } 	\end{array} \right), 
\end{align}
with $\g_{T b }$  solving the equation 
\begin{align}\label{gam-Tb}\gamma= f_{\rm FD}(\frac{1}{T} h_{\g, a_b }),\end{align}
where $f_{\rm FD}=(g')^{-1}$ and is given in \eqref{FD-distr}, and $a_b(x)$ is the magnetic potential with the constant magnetic field $b$ ($\curl a_b=b$). 
\end{theorem}
The fact that $\G_{T b }$ in \eqref{Gam-Tb} is diagonal should not come as a surprise as $a_b$ has a constant magnetic field $b$ throughout the sample and it corresponds to a normal state. It can be also seen from the following elementary statement
\begin{proposition}\label{prop:mt-inv-al0}
  If $\G$ is mt-invariant, then $\alpha = 0$.
\end{proposition}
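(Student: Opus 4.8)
The plan is to read the conclusion off the projective-representation structure of the magnetic translations in Lemma \ref{lem:mtProjRep}, exploiting the asymmetry flagged after \eqref{gauge-transf'}: a gauge transformation conjugates the diagonal block $\g$ by $e^{i\chi}$, but multiplies the off-diagonal block $\s$ by $e^{i\chi}$ on \emph{both} sides. Consequently a \emph{constant} phase cancels on $\g$ and doubles on $\s$, and it is precisely this doubling that will force $\s=0$.

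First I would compute the action of the cocycle $\hat I_{bst}$ on $\G$. Writing $I_{bst}=\diag(e^{i\theta},e^{-i\theta})$ with the constant $\theta:=\frac b2(t\wedge s)$ (independent of $x$, hence acting as a scalar), the conjugation $I_{bst}\,\G\,I_{bst}^{-1}$ leaves the diagonal blocks $\g$ and $\one-\bar\g$ unchanged, while it sends the off-diagonal block to $e^{i\theta}\s\,e^{i\theta}=e^{2i\theta}\s=e^{ib(t\wedge s)}\s$. Thus $\hat I_{bst}$ simply multiplies $\s$ by the scalar $e^{ib(t\wedge s)}$.

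Next I would insert mt-invariance into the group law. If $T_{bs}\G=\G$ for all $s\in\R^2$, then evaluating $T_{bs}T_{bt}\G$ in two ways --- directly as $T_{bs}(T_{bt}\G)=T_{bs}\G=\G$, and through Lemma \ref{lem:mtProjRep} as $\hat I_{bst}T_{b(s+t)}\G=\hat I_{bst}\G$ --- yields $\G=\hat I_{bst}\G$. Its off-diagonal component is $\s=e^{ib(t\wedge s)}\s$, i.e. $(1-e^{ib(t\wedge s)})\s=0$ for all $s,t$. Choosing, for $b\neq0$, the vectors $s=\lambda e_1$, $t=e_2$ with $\lambda\neq0$ small gives $b(t\wedge s)=-b\lambda\notin 2\pi\Z$, so that $1-e^{ib(t\wedge s)}\neq0$ and therefore $\s=0$.

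There is no substantial obstacle here; the statement is essentially a corollary of the cocycle computation in Lemma \ref{lem:mtProjRep}. The only points requiring care are (i) verifying that $\theta$ is genuinely $x$-independent, so that it multiplies the operator $\s$ as a scalar and the identity $e^{i\theta}\s\,e^{i\theta}=e^{2i\theta}\s$ is legitimate, and (ii) recording the standing assumption $b\neq0$, without which $\hat I_{bst}\equiv\one$ and the conclusion fails (the magnetic translations degenerate to ordinary translations, which place no constraint on $\s$).
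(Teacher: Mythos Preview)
Your argument is correct and is essentially the paper's own proof spelled out in detail: the paper simply records the identity $\s=e^{-ib\,s\cdot a_b(t)}\s$ for all $s,t\in\R^2$ (which is exactly your $\s=e^{ib(t\wedge s)}\s$, up to sign convention) and concludes $\s=0$. You make explicit what the paper leaves implicit, namely that this identity is read off the cocycle in Lemma~\ref{lem:mtProjRep} together with mt-invariance, and you rightly flag the standing hypothesis $b\neq0$.
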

\begin{proof}
By \eqref{TsTt} - \eqref{Ist}, the mt-invariance, \eqref{mt-invar}, implies that $\alpha 
= e^{-ib s \cdot a_b(t)}\alpha$ for all $s,t \in \R^2$, which yields that $\alpha = 0$.
\end{proof}

We address the question of the energetic stability of the mt-invariant states. To this end we define, in the standard way, the Hessian of the free energy in $\G$ as
\begin{align}\label{Hess-eta}F_T''(\G_{*}, a_*):=d_\G \grad_\G F_T(\G_{*},a_*),\end{align}
where $d_\G $ is the G\^ateaux derivative w.r.to $\G$ and $\grad_\G$ is the gradient w.r.to $\G$, defined by the equation $\tr( \grad_\G F(\G_{*},a_*), \G')= d_\G F(\G_{*},a_*) \G'$.

We consider $F_T''(\G_{*}, a_*)$ at $\G=\G_{T b}$ along physically relevant perturbations 
 of the form $\G' = \phi(\alpha)$, where $ \phi(\alpha)$ is 
the off-diagonal operator-matrix, defined by 
\begin{align}\label{Phi-al}	 	\phi(\al) :=\left( \begin{array}{cc}
											0 & \al \\
											\al^* & 0
										\end{array} \right) ,
\end{align}
 and $\al$ is a Hilbert-Schmidt operator on $\frach$ 
(which allows us to use the Hilbert space techniques.)
 
Moreover, we require that perturbations  $\G'= \phi(\alpha)$ satisfy the condition \eqref{Gam'-cond} at $\G=\G_{T b} $ which is equivalent to requiring that 
\begin{align}\label{al-cond}
	\alpha \alpha^* \lesssim [\gamma_{Tb}(1-\gamma_{Tb})]^2.
\end{align}

Let $h^L$ and $h^R$ stand for the operators acting on other operators by multiplication  from the left by the operator $h$  and from right by the operator $\bar h$ , respectively, and recall $v^\sharp$ is defined after \eqref{BdG-eq-t}. 
We have the following
 \begin{proposition}\label{prop:FT''}
For off-diagonal perturbations $\G' = \phi(\alpha)$, $F_T''(\G_{Tb}, a_b) \phi(\alpha)= \phi(L_{Tb}\alpha)$, where the operator  $L_{Tb}$ 
  is given by 
\begin{align}\label{LTb}
	& L_{Tb} := K_{Tb} + v^\sharp,\\
\label{KTb}	& K_{Tb} := \frac{h_{T b}^L+ h_{T b}^R}{\tanh(h_{T b}^L/T)+\tanh( h_{T b}^R/T)}, 
\end{align} 

  where $h_{T b}:=h_{\gamma_{Tb}, a_b}$,  on the space  of Hilbert-Schmidt operators on $\frach$. 
   \end{proposition} 
 Let $\langle \alpha,  \alpha' \rangle:= \Tr(\al^*\al')$. We say that $\G'$ is an {\it off-diagonal perturbation} iff  $\G' = \phi(\alpha)$ with $\al$ a Hilbert-Schmidt operator satisfying  \eqref{al-cond}. 
The next result 
  generalizes that of \cite{HHSS} for $a=0$: 

   \begin{proposition}\label{prop:FT-expan-order2}  
For off-diagonal perturbations $\G' = \phi(\alpha)$, 
 we have
\begin{align}\label{cF-expan2ord}
	F_T(\G_{Tb}+ \e\G', a_b) =& F_T (\G_{Tb},a_b) + \e^2\langle \alpha, L_{Tb} \alpha \rangle + O(\e^3).
\end{align}
 \end{proposition}

 Definition  \eqref{energy} of $E(\G, a)$ implies that $E(\G_{Tb}+ \e\G', a_b) = E(\G_{Tb}, a_b) + \e^2\Tr(\bar{\alpha} v^\sharp \alpha)$ for $\G'=\phi(\alpha)$. 
  This together with Corollary \ref{cor:S'-S''} and Eq \eqref{S-expan} of Appendix \ref{sec:entropy} on the entropy 
   yields Propositions \ref{prop:FT''} and \ref{prop:FT-expan-order2}, respectively.
\DETAILS{Now, proceed to the proof of Proposition \ref{prop:FT-expan-order2}. Using the definition \eqref{FT-def}, with \eqref{energy}, and the 
expansion  \eqref{S-expan}, we find, for perturbations $\G' = \phi(\alpha)$ obeying the condition \eqref{Gam'-cond}, 
\begin{align}
	F_T(\G_{Tb}+ \e\G', a_b) =& F_T(\G_{Tb}, a_b) + \e^2\Tr(\bar{\alpha} v^\sharp \alpha) +  \e^2 S''(\G',\G') + O(\epsilon^3). 
\end{align}}
(The absence of the linear term is due to the fact that $(\G_{Tb},0)$ is a critical point of the energy function.) 

The next two propositions are proven in Section \ref{sec:normal-stab/instab}.
  \begin{proposition}\label{prop:T-b-large}
 $L_{Tb}\ge  \frac12 T-\|v\|_\infty$  and consequently,  for $T$ sufficiently large, $L_{Tb}>0$ and the  normal state $(\G_{T, b},  a_b)$ is energetically stable under off-diagonal perturbations $\G' = \phi(\alpha)$. 
   \end{proposition} 

On the other hand,  for 
 $T$ and $b$ sufficiently small, we have
   \begin{proposition}\label{thm:T-b-small}
Suppose that $v < 0$ almost everywhere. 
 Then, for  $T$ and $b$ sufficiently small,  the operator $L_{Tb}$ acting on the set of Hilbert-Schmidt operator on $\frach$ has a negative eigenvalue and consequently 
  the  normal state $(\G_{T, b},  a_b)$ is energetically unstable under general $\al-$perturbations. 
   \end{proposition}
 Note that $L_{Tb}$ is discontinuous at $T=0$. 

Let $T_c(b)/T_c'(b)$ be the largest/smallest temperature s.t.  the normal solution is energetically unstable/stable under off-diagonal perturbations $\G' = \phi(\alpha)$, for $(T<T_c(b))/(T>T_c(b)')$. Clearly, $\infty \ge T_c'(b)\ge T_c(b)\ge 0$. 
Propositions \ref{prop:T-b-large} and  \ref{thm:T-b-small} imply 
\begin{corollary}\label{cor:Tcb-posit} 
Suppose that $v \le 0,\ v\not\equiv 0$. Then $T_c(b)>0$ for $b $ sufficiently small and  $T_c'(b) =0$ for $b $ sufficiently large. \end{corollary}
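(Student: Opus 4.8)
The plan is to deduce both assertions as direct logical consequences of Proposition~\ref{prop:T-b-large} and Theorem~\ref{thm:T-b-small}, together with the definitions of $T_c(b)$ and $T_c'(b)$ and the ordering $\infty \ge T_c'(b) \ge T_c(b) \ge 0$ recorded above. Recall that energetic stability of the normal state is exactly positivity $L_{Tb} > 0$ of the second variation in \eqref{cF-expan2ord}, while instability is the existence of a negative eigenvalue of $L_{Tb}$.

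First I would establish $T_c(b) > 0$ for small $b$. By Theorem~\ref{thm:T-b-small} there are thresholds $T_0 > 0$ and $b_0 > 0$ such that $L_{Tb}$ has a negative eigenvalue, hence $(\G_{Tb}, a_b)$ is energetically unstable, whenever $T < T_0$ and $b < b_0$. Fixing any $b \in (0, b_0)$, instability then holds throughout $T \in [0, T_0)$, so by the defining property of $T_c(b)$ as the largest temperature below which instability persists, we conclude $T_c(b) \ge T_0 > 0$.

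For $T_c'(b) = 0$ at large $b$, I would upgrade Proposition~\ref{prop:T-b-large} to a bound uniform in the temperature. Using $h_{Tb} \ge \tfrac12 b - \mathrm{const}$, both commuting operators $h_{Tb}^L, h_{Tb}^R$ have spectrum bounded below by $c := \tfrac12 b - \mathrm{const}$; since $\tanh \le 1$ on $[0,\infty)$, the joint functional calculus gives $\frac{x+y}{\tanh(x/T)+\tanh(y/T)} \ge \frac{x+y}{2} \ge c$ on the joint spectrum, whence $K_{Tb} \ge c$ for every $T \ge 0$. Therefore $L_{Tb} = K_{Tb} + v^\sharp \ge c - \|v^\sharp\|$, which is strictly positive once $b$ is large enough that $\tfrac12 b - \mathrm{const} > \|v^\sharp\|$. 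Thus the normal state is stable for all $T > 0$, the infimum of temperatures above which stability holds equals $0$, and so $T_c'(b) = 0$.

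The bookkeeping with the definitions is routine; the one point requiring genuine care — and the main obstacle — is that the large-$b$ positivity of $L_{Tb}$ must hold \emph{uniformly as $T \to 0$}, precisely because $L_{Tb}$ is discontinuous at $T = 0$ (as remarked before Theorem~\ref{thm:T-b-small}). The uniform lower bound $K_{Tb} \ge c$ above is what secures this, whereas the essential-spectrum bound $\sigma_{\mathrm{ess}}(L_{Tb}) \subset [2T\mu(b,T), \infty)$ alone would not suffice, its lower edge collapsing to $0$ as $T \to 0$.
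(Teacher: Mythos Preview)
Your proposal is correct and follows the same route as the paper, which simply records the corollary as an immediate consequence of Proposition~\ref{prop:T-b-large} and Theorem~\ref{thm:T-b-small}. Your added care in securing a uniform-in-$T$ lower bound $K_{Tb}\ge \tfrac12(h_{Tb}^L+h_{Tb}^R)\ge \tfrac12 b-\mathrm{const}$ for large $b$ is a welcome sharpening, since the essential-spectrum sketch preceding Proposition~\ref{prop:T-b-large} does not by itself make this uniformity explicit.
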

We conjecture that $T_c(b)= T_c'(b) .$

The next corollary provides a convenient criterion for the determination of $T_c(b)$ and $T_c(b)' $.
 \begin{corollary}\label{cor:Tc-d F_T'ev0}
At $T=T_c(b)$ and $T=T_c'(b) $, zero is the lowest eigenvalue of the operator  $L_{Tb}$. 
   \end{corollary}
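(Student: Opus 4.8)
The plan is to track the bottom of the spectrum $e_0(T) := \inf \sigma(L_{Tb})$ as a function of $T$ and to read off the thresholds $T_c(b)$ and $T_c'(b)$ from its sign, using only continuity in $T$. The starting point is that, because $\sigma_{\rm ess}(L_{Tb}) \subset [2T\mu(b,T),\infty)$ with $2T\mu(b,T) > 0$ for every $T>0$ (as recorded before Proposition \ref{prop:T-b-large}), the number $e_0(T)$ controls stability cleanly: if $e_0(T) > 0$ then $L_{Tb} > 0$ and the normal state is stable; if $e_0(T) < 0$ then $e_0(T)$ lies strictly below $\sigma_{\rm ess}(L_{Tb})$, so it is an isolated eigenvalue of finite multiplicity, $L_{Tb}$ has a negative eigenvalue, and the state is unstable; and if $e_0(T) = 0$ then, again because $0 < 2T\mu(b,T)$, the value $0$ sits below $\sigma_{\rm ess}(L_{Tb})$ and is therefore an isolated eigenvalue, i.e. the lowest eigenvalue of $L_{Tb}$. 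Thus the whole corollary reduces to showing $e_0(T_c(b)) = 0$ and $e_0(T_c'(b)) = 0$.

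The second, and main, step is the continuity of $T \mapsto e_0(T)$ on $(0,\infty)$. Writing $L_{Tb} = K_{Tb} + v^\sharp$ with $v^\sharp$ bounded and $T$-independent, it suffices to show that $T \mapsto K_{Tb}$ is continuous in the norm-resolvent (or form) sense. Since $h^L$ and $h^R$ commute, $K_{Tb} = \Psi_T(h^L,h^R)$ is defined by joint spectral calculus from the symbol $\Psi_T(x,y) = (x+y)/(\tanh(x/T)+\tanh(y/T))$, which, with its removable singularity at $x+y=0$ resolved, is jointly continuous in $(x,y,T)$ for $T>0$ and varies continuously with $T$ uniformly on the semi-bounded spectrum of $(h^L,h^R)$; combined with the continuous dependence of $h_{Tb} = h_{\gamma_{Tb},a_b}$ on $T$ through $\gamma_{Tb}$ (constructed in Theorem \ref{thm:norm-state-exist}), this yields continuity of $K_{Tb}$, and hence of $e_0(T)$. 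Near any temperature where $e_0 = 0$ the continuity is in fact unproblematic, since there $e_0$ is an isolated eigenvalue below the essential spectrum and so depends continuously (indeed analytically) on the perturbation.

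Finally I would combine continuity with the defining property of the thresholds. By definition $T_c(b) = \sup\{\tau : e_0(T) < 0 \text{ for all } T \in (0,\tau)\}$ and $T_c'(b) = \inf\{\sigma : e_0(T) > 0 \text{ for all } T \in (\sigma,\infty)\}$; the universal quantifier over $T$ makes the first set a lower interval and the second an upper interval, so that $e_0 < 0$ on $(0,T_c(b))$ and $e_0 > 0$ on $(T_c'(b),\infty)$. Letting $T \uparrow T_c(b)$ gives $e_0(T_c(b)) \le 0$ by continuity; were $e_0(T_c(b)) < 0$, continuity would give $e_0 < 0$ on a neighborhood of $T_c(b)$, hence on all of $(0, T_c(b)+\delta)$ for some $\delta > 0$, contradicting the maximality of $T_c(b)$. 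Hence $e_0(T_c(b)) = 0$, and the symmetric argument at $T_c'(b)$, letting $T \downarrow T_c'(b)$ and invoking minimality, gives $e_0(T_c'(b)) = 0$. By the first step, $0$ is then the lowest eigenvalue of $L_{Tb}$ at both temperatures. (Here $T_c(b), T_c'(b)$ are understood to lie in $(0,\infty)$, the only range where the statement is meaningful, since $L_{Tb}$ is discontinuous at $T=0$.)

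The step I expect to be the main obstacle is the continuity of $T \mapsto K_{Tb}$: controlling the operator $\Psi_T(h^L,h^R)$ uniformly while both the symbol $\Psi_T$ and the operator $h_{Tb}$ vary with $T$. Once this is in hand, the threshold argument is a soft consequence of continuity together with the positivity of the essential-spectral gap; notably, no monotonicity of $e_0(T)$ in $T$ is needed, only the lower-interval/upper-interval structure built into the definitions of $T_c(b)$ and $T_c'(b)$.
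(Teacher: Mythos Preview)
The paper states this corollary without proof, treating it as an immediate consequence of the definitions of $T_c(b)$, $T_c'(b)$ and the preceding spectral information about $L_{Tb}$. Your argument supplies exactly the natural details the paper leaves implicit: track $e_0(T)=\inf\sigma(L_{Tb})$, use that the essential spectrum sits in $[2T\mu(b,T),\infty)\subset(0,\infty)$ so that $e_0(T)=0$ forces $0$ to be an isolated lowest eigenvalue, and then invoke continuity of $e_0$ in $T$ together with the sup/inf structure of the threshold definitions to pin $e_0$ to zero at $T_c(b)$ and $T_c'(b)$. This is the intended approach, and your identification of the continuity of $T\mapsto K_{Tb}$ (through both the symbol $\Psi_T$ and the $T$-dependence of $h_{Tb}$ via $\gamma_{Tb}$) as the only nontrivial ingredient is accurate.
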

 A proof of energetically stability under general perturbations for either $T$ or $b$ sufficiently large is more subtle. 
For it, 
 one has to use the full linearized operator. 
Our computations 
 suggest that $0$ is the lowest eigenvalue of $L_{Tb}$ iff $0$ is the lowest eigenvalue of $\hess F_T(\G_{Tb},a_b)$ and, consequently, $T_c(b)$ and $T_c'(b) $ apply also to the general perturbations. 

The statement $T_c=T_c(0)=T_c' (0)>0$ for $a=0$ and therefore $b=0$ and for a large class of potentials is proven, 
 by  the variational techniques, in \cite{HHSS}. 
 
In conclusion of this paragraph, we mention that a simple computation shows
  \begin{proposition}\label{prop:LTb-comm-MT}
The operator  $L_{Tb}$ commutes with the magnetic translations. The same is true for the $\G-$Hessian $F_T''(\G_{Tb},a_b)$ (see \eqref{Hess-eta}).  \end{proposition}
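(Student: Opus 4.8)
The plan is to reduce both statements to one fact: the one-particle Hamiltonian $h_{Tb}:=h_{\gamma_{Tb},a_b}$ commutes with the one-particle magnetic translations $U^{\rm mt}_{bs}:=(U^{\rm gauge}_{\chi_s})^{-1}U^{\rm trans}_s$ introduced in the proof of Lemma~\ref{lem:mtProjRep}, and then to propagate this commutation through the functional-calculus definition of $K_{Tb}$ and through the linearization of a covariance identity for $F_T'$. First I would record how $T_{bs}$ acts on the off-diagonal block $\alpha$. Since the gauge acts on $\s$ with the \emph{same} sign on both sides (see \eqref{gauge-transf}), a direct computation with integral kernels gives $(T_{bs}\alpha)(x,y)=e^{-i\chi_s(x)-i\chi_s(y)}\alpha(x+s,y+s)$, that is $T_{bs}\alpha=U^{\rm mt}_{bs}\,\alpha\,(U^{\rm mt}_{bs})^{t}$, where ${}^t$ denotes transpose of the kernel (the transpose rather than the adjoint appears because $\s$ is a coherence, not a density). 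Next I would verify $[U^{\rm mt}_{bs},h_{Tb}]=0$: the Landau Hamiltonian $h_{a_b}=-\Delta_{a_b}$ commutes with the magnetic translations by construction of the magnetic translation group, and, the exchange term having been dropped and the normal density $d_{\gamma_{Tb}}$ being homogeneous, the self-interaction contributes only a constant, so $h_{Tb}=-\Delta_{a_b}-\mu+\text{const}$ commutes with $U^{\rm mt}_{bs}$.

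Using $T_{bs}\alpha=U^{\rm mt}_{bs}\alpha(U^{\rm mt}_{bs})^t$ together with the self-adjointness of $h_{Tb}$, one checks directly that both $h^L_{Tb}$ (left multiplication by $h_{Tb}$) and $h^R_{Tb}$ (right multiplication by $\bar h_{Tb}$) commute with $T_{bs}$. Since $h^L_{Tb}$ and $h^R_{Tb}$ are commuting self-adjoint operators on the Hilbert-Schmidt space and $K_{Tb}=\Phi(h^L_{Tb},h^R_{Tb})$ with $\Phi(\lambda,\mu)=(\lambda+\mu)/(\tanh(\lambda/T)+\tanh(\mu/T))$ a continuous function of the commuting pair, any operator commuting with both $h^L_{Tb}$ and $h^R_{Tb}$ commutes with $K_{Tb}$ by joint functional calculus; hence $[T_{bs},K_{Tb}]=0$. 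For $v^\sharp$, the gauge phases cancel and, because the pair potential is translation invariant ($v(x,y)=v(x-y)$), one gets $v(x+s,y+s)=v(x,y)$, so $[T_{bs},v^\sharp]=0$ as well. Therefore $L_{Tb}=K_{Tb}+v^\sharp$ commutes with every $T_{bs}$, which is the first assertion.

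For $dF_T'(\G_{Tb},a_b)$ I would not invoke the (formally infinite) invariance of $F_T$ itself but rather the covariance of the gradient map. From the gauge covariance \eqref{Lam-gauge-cov} and the translation invariance of the construction of $\Lambda_{\G a}$, $g'(\G)$ and $j(\G,a)$, the residual $F_T'$ satisfies an equivariance relation $F_T'(T_{bs}(\G,a))=\tau_{bs}\,F_T'(\G,a)$, where $\tau_{bs}$ is the induced (linear) action of the magnetic translation on the target. Differentiating this identity in $(\G,a)$ at the fixed point $(\G_{Tb},a_b)$, which satisfies $T_{bs}(\G_{Tb},a_b)=(\G_{Tb},a_b)$ by mt-invariance, yields $dF_T'(\G_{Tb},a_b)\,\tau_{bs}=\tau_{bs}\,dF_T'(\G_{Tb},a_b)$, i.e. the Hessian commutes with the linearized magnetic translation, which is the second assertion.

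I expect the main obstacle to be the bookkeeping in the first step: getting the action of $T_{bs}$ on $\alpha$ exactly right (the transpose rather than the adjoint on the right factor) and confirming that, with this action, both $h^L_{Tb}$ and $h^R_{Tb}$, and hence the nonlinear function $K_{Tb}$ of them, genuinely commute with $T_{bs}$. The covariance-then-differentiate argument for $dF_T'$ is conceptually clean but requires that $F_T'$ be a bona fide equivariant map even though $F_T$ is infinite on mt-invariant states, so some care is needed to phrase the equivariance at the level of the gradient rather than the functional.
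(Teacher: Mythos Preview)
Your proposal is correct and complete; the paper itself does not give a detailed proof, merely stating that the proposition follows from ``a simple computation''. Your argument --- establishing commutation of the one-particle magnetic translations with $h_{Tb}$, propagating this through joint functional calculus to $K_{Tb}$, handling $v^\sharp$ via translation invariance of the pair potential, and using the equivariance-then-linearization argument for $dF_T'$ --- is precisely what such a computation amounts to.
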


\paragraph{\bf Existence of vortex lattice solutions.} With the spaces defined above, 
 we have the following result on the existence of vortex lattices proven in Section \ref{sec:vort-latt-exist}: 
 
\begin{theorem} \label{thm:BdGExistence}
 Drop the self-interaction terms $2v* \rho_\gamma$ and $-v^\sharp \gamma$ in \eqref{hgam-a} and 
 assume that $T \ge 0$ and $\|v\|_\infty$ is small. We specify  $\frach=\frachb$. Then 

(i)  for a every Chern number $c_1$, there exists a (generalized)  solution $(\G, a) \in \mathcal{D}^1_\nu \times \frachbvec^1$ of the BdG equations \eqref{Gam-eq}--\eqref{a-eq} (in particular, it satisfies $T^{\rm trans}_{ s} (\G, a) =\hat T^{\rm gauge}_{\chi_{bs}} (\G, a)$), which  minimizes the free energy $F_T$ 
  for the given $c_1$;


(ii) if, in addition, $v\le 0, v\not\equiv 0$, then, for $T$ and $b$ are sufficiently small, 
$(\G,a)$ has $\alpha \not= 0$, i.e. this solution is a vortex lattice. More generally, the latter holds if  the operator $L_{Tb}$, given in \eqref{LTb}  and defined on $ I^{2, 2}$,
has a negative eigenvalue.
\end{theorem}


\begin{rem}\label{rem:LTb}  {\em 1)  The question of when $L_{Tb}$ has negative spectrum for a larger range of $T$'s is delicate one. For $T$ close to $T_c$, this depends, besides of the parameters $T$ and $b$, also on whether the superconductor is of Type I or II. 

2) 
Since the components of magnetic translations \eqref{mag-transl} do not commute, the fiber decomposition of $L_{Tb}$ is somewhat subtle (see \cite{AHS}). 

3)  One would like to remove the condition in Theorems \ref{thm:BdG=EL} and \ref{thm:norm-state-exist} and in the proof of Theorem \ref{thm:BdGExistence} that $(\G, a)$ is even in the sense of the definition \eqref{even}.

4) One would like to have a geometrical interpretation of the middle relation in \eqref{Gam'-cond}.
  
 5) The self-interaction terms $ v^* \gamma$ and $v^\sharp \gamma$ in \eqref{hgam-a} are inessential for physics and analysis and  are dropped for simplicity. 
 We expect that these terms can be readily added back to the equation and will not drastically effect our proof. 
 
 6) We derive the existence of the solutions from the existence of the minimizers of free energy \eqref{FT-def} and Theorem \ref{thm:BdG=EL}. 

7) One can likewise perform minimization among diagonal $\G$'s. This way, one obtains a variational proof of Theorem \ref{thm:norm-state-exist} on the existence on existence of normal states. 
  }
\end{rem}


 The paper is organized as follows. In Sections \ref{sec:norm-states-exist} and \ref{sec:vort-latt-exist}, we prove Theorems \ref{thm:norm-state-exist} and \ref{thm:BdGExistence}, on existence of the normal and vortex lattice solutions, respectively. These are our principal results. In Section \ref{sec:normal-stab/instab}, we prove Propositions  \ref{prop:T-b-large} and \ref{thm:T-b-small} on the stability/instability of the normal solutions.  
 In Appendix \ref{sec:entropy} we study the entropy functional. Results of this appendix imply the proofs of Propositions \ref{prop:FT''} and \ref{prop:FT-expan-order2} and are used in the proof of technical Theorem \ref{thm:BdG=EL} given in Appendix \ref{sec:energy}. 
  In Appendix \ref{sec:mt-den} we give another proof of the key statement of Section \ref{sec:norm-states-exist} (Proposition \ref{prop:gam-mt-invar-J0}) and  in Appendix \ref{sec:xi-fp}, we prove a technical result from that section. Finally,  in Appendix \ref{sec:den-est} we prove some bounds on functions relative to magnetic Laplacian and bounds on densities (both elementary and probably well-known)  and in Appendix \ref{sec:qf-reduct}, we discuss the derivation of the BdG equations from the quantum many-body problem.


\section{The normal states: 
 Proof of Theorem \ref{thm:norm-state-exist}}\label{sec:norm-states-exist} 
 
For normal states, i.e. for $\alpha = 0$, the BdG equations reduces to the following equations for $\g$ and $a$ (the coupled Hartree-Fock and Amp\'ere equations),
\begin{align}
	& \gamma = g^\sharp(\frac{1}{T} h_{\g, a }),  \label{eqn:NormalBdGPart1} \\
	& \curl^* \curl a = j(\g,a) \label{eqn:NormalBdGPart2}
\end{align}
where, recall, $j(\g, a) :=\den([-i \na,\g]_+)$.
First, we show first that the second equation is automatically satisfied for $a=a_b$ and $\gamma$ a magnetically translation invariant operator, which is even in the sense of \eqref{even}. 


 
\begin{lemma}\label{lem:grad-den} With  the magnetic translations $u_{b s}$ defined in \eqref{ubs} (cf. \eqref{Ubs}) and  $\tau_{b s}(\g)=u_{b s}\g {u_{b s}}^{-1}$, we have (cf. \eqref{TsTt})
\begin{align} \label{usuh'}	& u_{b s} u_{b h} = e^{-i b h\wedge s}  u_{b h}u_{b s},\\
\label{TsTt'}	  &\tau_{b s} \tau_{b h}=\tau_{b h}\tau_{b s},\\   
&\label{frachb-invar} \text{If $\g$ maps $\frachb$ into itself, then so does } \tau_{b h}\g.\end{align} 
\end{lemma}
\begin{proof} 

 \eqref{usuh'} follows from $u_{b s} u_{b h} = e^{-i\frac b2 h\wedge s} e^{-i\chi_{b s}}  u_{b h}\tau_s^{\rm transl}= e^{-i \frac b2 h\wedge s} e^{i\frac b2 s\wedge h}u_{b h}u_{b s}$. 
 To prove \eqref{TsTt'}, we 
 use that $\tau_s^{\rm transl} e^{-i\chi_{b h}}=e^{-i\frac b2 h\wedge s} e^{-i\chi_{b s}} \tau_h^{\rm transl}$, where, recall, $\chi_{b s}$ is defined in \eqref{chibs}, 
and \eqref{usuh'} which yields $(u_{b s} u_{b h})^{-1}=e^{i b h\wedge s} (u_{b h}u_{b s})^{-1}$ and therefore, due to  $\tau_{b h}\g:=u_{b h}\g u_{b h}^{-1}$, \eqref{TsTt'} follows.

Finally,   $\g: \frachb \ra \frachb \Longleftrightarrow \tau_{b s}(\g)=\g,\  \forall s\in \lat$. On the other hand, \eqref{TsTt'} implies $\tau_{b s} \tau_{b h}(\g)=\tau_{b h}\tau_{b s}(\g)=\tau_{b h}(\g),\  \forall  s\in \lat, h\in \R^2$, so \eqref{frachb-invar} follows. 
\end{proof}
Note that \eqref{usuh'} shows that $u_{b h}$ does not leave $\frachb$ invariant. 

The generators of the magnetic translations, $u_{bs}$ defined in \eqref{ubs}, and their properties are described in the following
\begin{lemma}[\cite{AHS}]\label{lem:gen-mt}
Let $p_{b} = -i\nabla_{a_b}$ and $\pi_{b} = -\bar p_{a_b}$, with the components $p_{b i}$ and $\pi_{b i}$. Then 
\begin{enumerate}
	\item $-i\partial_{s_j} \mid_{t=0} u_{b st} =  \pi_{b j}$; 
	\item $[\pi_{b i}, p_{b j}] = 0$ and therefore $[u_{bs}, p_{b j}] = 0$. 
\end{enumerate}
\end{lemma}

 For an operator $A$ on $\frachb$, we can define the integral $A'(x, y)$ by the relation
 \begin{align}\label{A-A'-rel}\lan g\otimes \bar f, A'\ran_{\frach\otimes \frach} 
=\lan g, A f\ran_{\frach},\ \quad 
\forall f, g\in \frach. \end{align}

In this section, it is convenient to use the notation den$(A)$ for the one-particle density $\rho_A$. 

For a trace-class operator $A$ on $\frach$, the density den$(A)\equiv \rho_A$ obeys the  relation  
\begin{align}\label{den-def}\int f \den (A)=\Tr( f A),\ \quad \forall f\in L^\infty.\end{align} 
which can be also used as a definition of den$(A)$. 

In the case of the space $\frachb$, we assume that all operators below are originally defined on $L^2_{\rm loc}(\R^2)$ (or on a local Sobolev space). This allows us to define compositions and commutators of operators some of which do not leave $\frachb$ invariant. 
Our key result here is the following
\begin{proposition}\label{prop:gam-mt-invar-J0} 
(i) If a trace-class operator $A$ on $\frach$ satisfies $\tau_{b h} A = A,\ \forall h\in \R^2$, then den$(A)$ is constant. (ii) If, in addition, $ \tau^{\rm refl} A =- A$ (with  the reflection operator $\tau^{\rm refl}$ defined before \eqref{even}),   then den$(\tilde\g)=0$.
\end{proposition}
\begin{proof} 
Recall that $\nabla_a:=\nabla - i a$. We begin with 
\begin{lemma}\label{lem:grad-den}
For any linear vector field $a$ and any integral operator $A$ on $\frach$,  $[\n_{a}, A]$  leaves $\frach$ invariant (though $\frachb$ is not invariant under  $\nabla_a$) , 
 \begin{align}\label{grad-den}\nabla \den(A)  = \den([\nabla_a, A])\end{align}
\end{lemma}
\begin{proof}
The statement is straightforward for $L^2(\R^2)$, so we consider $\frachb$. Since $a$ is linear, the invariance of $\frachb$ under  $[\n_{a}, A]$  is straightforward. 
We prove \eqref{grad-den}.

We have $\den ([\nabla_{a}, A]) =\den ([\nabla_{a_b}, A]) + i\den([a_b-a, A])$. Since $\nabla_{a_b}$ leaves $\frach$ invariant, we can use  the cyclicity of the trace to compute  
\begin{align}
	\int_\Om f \den ([\nabla_{a_b}, A]) &= \Tr_{\frachb}(f[\nabla_{a_b}, A])\\ 
		&= -\Tr_{\frachb}(\nabla f A)= -\int_\Om \nabla f \den(A). 
\end{align}
For any (linear) vector field $c$, the integral kernel of $[c, A]$ is $(c(x)-c(y))A(x, y)$ {\bf (see \eqref{A-A'-rel})}. Hence, we see that $\den([c, A])=0$ 
and therefore $\int_\Om f \den ([\nabla_{a}, A])= -\int_\Om \nabla f \den(A)$ for any $f\in L^\infty(\Om)$. Hence $\den ([\nabla_{a}, A])=\nabla  \den[A]$. 
\end{proof}

Since by Lemma \ref{lem:gen-mt}, $\tau_{b h}$ is generated by $A\ra i [\pi_{b}, A] $, the mt-invariance of $A$ implies that $[\pi_{b}, A]=0$. (Though $\pi_{b}$ does not leave $A$ invariant, $[\pi_{b}, A]$ does.) This and \eqref{grad-den} yield that $\nabla \den(A) =  \den([\pi_b, A])=0$ and therefore den $ A$ is constant.
\end{proof}

\begin{rem}\label{rem:another-pf-mt-inv-sol} {\em   
1) The long argument above proving \eqref{grad-den} establishes the intuitive fact that the integral kernel of the operator $[\nabla_a, A]$ acting on $\frachb$ is same as the integral kernel of this operator acting on $L^2(\Om)$, which is $(\nabla_{a x}+\overline{\nabla_{a y}})A'(x, y)=(\nabla_{ x}+\nabla_{ y})A'(x, y)$ and consequently $\den([\nabla_a, A])=(\nabla_{ x}+\nabla_{ y})A'(x, y)|_{x=y}=\n \den(A)$.

2) For another proof of the first statement of Proposition \ref{prop:gam-mt-invar-J0}(i) see Appendix \ref{sec:mt-den}.}
\end{rem}

If $\g$ is magnetically translationally invariant and even, then  $\tilde\g=-i \COVGRAD{a_b}\g$ is a magnetically translationally invariant and odd.  Applying this proposition to $\tilde\g=-i \COVGRAD{a_b}\g$, where $\g$ is a magnetically translationally invariant and even operator,  
  gives the equation  $ j(\g, a_b)=0$ and therefore, since $\curl^*\curl a_b=0$,
  \eqref{eqn:NormalBdGPart2} with $a=a_b$.

Now, we solve the first equation \eqref{eqn:NormalBdGPart1} for magnetic translation invariant $\g$'s. If we drop both the direct and exchange self-interactions from $h_{\g a \mu}$, then the latter equation becomes the definition of $\g_{T b }$: $\g_{T b }= f_{\rm FD}(\frac{1}{T} h_{ a_b })$, where, recall, $h_{ a_b }:=-\Delta_{a_b} - \mu$. Otherwise, we have to treat the equation $\g = f_{\rm FD}(\frac{1}{T} h_{\g a_b })$ as a fixed point problem. This problem simplifies considerably if we drop the exchange term, as in this case it reduces to a fixed point problem for a real number $\xi$: 
\begin{align}\label{xi-fp}
	\xi = v*\den(f_{\rm FD}((h_{ a_b } + \xi)/T)). 
\end{align}
Then $\xi$ is a real since $-\Delta_{a_b}$ is self-adjoint and is a multiple of the identity map by magnetic translation invariance due to Proposition \ref{prop:gam-mt-invar-J0}. 
Suppose that a real $\xi$ solves \eqref{xi-fp}, then define $\gamma := f_{\rm FD}((h_{ a_b } + \xi)/T)$. Since $f_{\rm FD} > 0$ and $h_{ a_b } + \xi$ is self-adjoint (for real $\xi$), we see that $\gamma \geq 0$. Then, we see that
\begin{align}
	\gamma :&= f_{\rm FD}((h_{ a_b } + \xi)/T) \label{eqn:gammaDefInTermsOfxi} \\
		&= f_{\rm FD}(h_{ a_b } + v*\den(f_{\rm FD}((h_{ a_b } + \xi)/T)))/T) \\
		&= f_{\rm FD}((h_{ a_b } + v*\den(\gamma))/T).
\end{align}
Hence $\gamma$ satisfies \eqref{eqn:NormalBdGPart1}. 
Conversely, if $\gamma$ solves the BdG equation \eqref{eqn:NormalBdGPart1}, then $\xi = v*\den(\gamma)$ satisfies the equation  
\begin{align}
	\xi = v*d_\gamma &= v*\den(f_{\rm FD}((h_{ a_b } + v*\den(\gamma))/T)) \\
		&= v*\den(f_{\rm FD}((h_{ a_b } + \xi)/T)).
\end{align}
So this $\xi$ is solution to \eqref{xi-fp}. We have therefore shown the following 

\begin{lemma} \label{lem:gbExistenceAndUniqueness}
There exists a solution $(\gamma_b, \alpha=0,a=a_b)$ with $\gamma_b \geq 0$ and $\gamma_b$ is a function of $-\Delta_{a_b}$ if and only if the fixed point problem \eqref{xi-fp} has a solution. Moreover, this solution is unique.
\end{lemma}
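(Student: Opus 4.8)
The equivalence asserted by the lemma—that $(\gamma_b,\alpha=0,a=a_b)$ with $\gamma_b$ a function of $-\Delta_{a_b}$ solves the BdG system exactly when $\xi=v*d_{\gamma_b}$ solves \eqref{xi-fp}—has already been established in the computation preceding the statement, so the plan is to prove existence and uniqueness of a solution of the scalar fixed point problem \eqref{xi-fp}. The first step is to make the reduction to a genuine scalar equation precise. For a real number $\xi$ set $\gamma(\xi):=g^\sharp((h_{a_b}+\xi)/T)$; this is a function of $-\Delta_{a_b}$ and hence magnetically translation invariant, so by Proposition \ref{prop:gam-invar-J0} its diagonal $d_{\gamma(\xi)}(x)$ is independent of $x$. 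Writing $D(\xi)$ for this constant and using that $\int v(x,y)\,dy$ is independent of $x$ by translation invariance of the pair potential, the right-hand side of \eqref{xi-fp} equals the constant $(\int v)\,D(\xi)$. Thus \eqref{xi-fp} becomes the scalar equation $\xi=\Phi(\xi)$ with $\Phi(\xi):=(\int v)\,D(\xi)$, where $\int v:=\int v(x,y)\,dy$, consistent with $\xi$ being a multiple of the identity.

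The second step is to show that $\Phi$ is a contraction on $\R$ when $|\int v|$ is small, the key being a bound on the Lipschitz constant of $D$. By the spectral calculus,
\[
	D'(\xi)=\tfrac1T\, d_{(g^\sharp)'((h_{a_b}+\xi)/T)},
\]
and since $-\Delta_{a_b}$ has the Landau-level spectrum $\{(2k+1)b:k\ge 0\}$ with density of states $b/2\pi$ per level, this equals $\tfrac1T\frac{b}{2\pi}\sum_{k\ge 0}(g^\sharp)'(((2k+1)b-\mu+\xi)/T)$. Because $(g^\sharp)'$ is bounded and exponentially decaying (it is the derivative of a Fermi--Dirac type function), this series converges and, being a sum of a bounded, integrable function over an arithmetic progression, is bounded by a constant $M=M(T,b)$ uniformly in $\xi$. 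Hence $D$ is globally Lipschitz, $|D(\xi_1)-D(\xi_2)|\le M|\xi_1-\xi_2|$, and therefore $|\Phi(\xi_1)-\Phi(\xi_2)|\le |\int v|\,M\,|\xi_1-\xi_2|$.

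The third step is to invoke the Banach fixed point theorem. Taking $|\int v|$ small enough that $|\int v|\,M<1$—which is exactly the smallness hypothesis of Theorem \ref{thm:norm-state-exist}—the map $\Phi:\R\to\R$ is a contraction of the complete metric space $\R$, so it has a unique fixed point $\xi_*$, automatically real since $\Phi$ maps $\R$ into $\R$. Setting $\gamma_b:=\gamma(\xi_*)=g^\sharp((h_{a_b}+\xi_*)/T)$ yields $\gamma_b\ge 0$ (as $g^\sharp>0$ and $h_{a_b}+\xi_*$ is self-adjoint) and produces the desired solution; uniqueness of the fixed point, together with the equivalence recalled above, gives uniqueness of $(\gamma_b,0,a_b)$.

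I expect the main obstacle to be the uniform Lipschitz estimate on $D$, i.e. controlling $D'(\xi)$ independently of $\xi$. This is where the spectral structure of $-\Delta_{a_b}$ genuinely enters: one must verify that the density of states summed against $(g^\sharp)'$ stays finite and bounded as the effective chemical potential $-\mu+\xi$ slides through the Landau levels. One must also be careful at $T=0$, where $(g^\sharp)'$ degenerates to a delta and $D$ is only piecewise constant, so the contraction argument must be replaced by a monotonicity argument (or $T>0$ assumed). Once this estimate is in hand, the reduction to a scalar problem via Proposition \ref{prop:gam-invar-J0} and the contraction step are routine.
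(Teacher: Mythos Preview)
Your argument is correct, and the contraction-mapping approach you sketch is exactly what the paper carries out---but in Appendix~\ref{sec:xi-fp} (Lemmas~\ref{Normal_Explicit_exp} and~\ref{lem:Contraction}), not as the proof of this lemma. In the paper the lemma is purely the \emph{correspondence} statement, already established in the computations immediately preceding it: existence (respectively uniqueness) of a BdG solution with $\gamma_b$ a function of $-\Delta_{a_b}$ is equivalent to existence (respectively uniqueness) of a fixed point of~\eqref{xi-fp}, simply because the maps $\xi\mapsto g^\sharp((h_{a_b}+\xi)/T)$ and $\gamma\mapsto v*d_\gamma$ are mutually inverse on the relevant sets. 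You have folded in the actual solution of the fixed-point problem, which the paper defers to the appendix.

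On that appendix material your approach coincides with the paper's: both reduce to a scalar equation $\xi=\Phi(\xi)$ and show $\Phi$ is a contraction on~$\R$ when $|\hat v(0)|$ (your $|\int v|$) is small. The only technical difference is in how the Lipschitz constant is estimated: the paper computes $d_{\gamma(\xi)}$ via the explicit Landau eigenbasis on the fundamental cell $\Omega_\delta$ and bounds $|f_T(a)-f_T(b)|$ by recognizing the resulting sum as a Riemann sum of an $L^1$ function, while you differentiate and invoke the density of states $b/2\pi$ directly. Both routes yield the same bound, and your caveat about $T=0$ is apt (the paper's contraction lemma also assumes $\delta^2\ll T$, hence $T>0$).
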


We show in Appendix \ref{sec:xi-fp} that the fixed point problem \eqref{xi-fp} has a unique solution if $|\int v |$ is small. Thus, we obtain an unique magnetic translation invariant solution. 
So we prove uniqueness among the class of $\gamma$'s which are functions of $-\Delta_{a_b}$.

Assume that $\gamma_1,\gamma_2$ are two solutions to \eqref{eqn:NormalBdGPart1}. Then we may form the corresponding $\xi_i = v*\den(\gamma_i)$ for $i=1,2$. Uniqueness of solution of equation \eqref{xi-fp} dictates that $\xi_1=\xi_2$. Therefore,
\begin{align}
	\gamma_1 =& f_{\rm FD}((h_{ a_b }+v*\den(\gamma_1))/T) \\
		=& f_{\rm FD}((h_{ a_b }+\xi_1)/T) \\
		=& f_{\rm FD}((h_{ a_b }+\xi_2)/T) \\
		=& f_{\rm FD}((h_{ a_b }+v*\den(\gamma_2))/T) = \gamma_2.
\end{align}

What remains to be done is to show that the solution is unique among solutions $(\gamma,a)$ such that $\gamma$ is magnetic translation invariant. It suffices to show that $a=a_b$, then equation \eqref{eqn:NormalBdGPart1} shows that $\gamma$ is a function of $-\Delta_{a_b}$ and we can conclude uniqueness by Lemma \ref{lem:gbExistenceAndUniqueness}. We decompose $a=a_b+a'$, where $a'$ is defined by this expression. Using equation \eqref{eqn:NormalBdGPart2}, we see that
\begin{align}
	\curl^* \curl a' = -a'(x)\gamma(0,0),
\end{align}
since, by Proposition \ref{prop:gam-mt-invar-J0}, $\gamma(x,x)$ is constant, the term $\Re(-i\nabla_{a_b}\gamma)(x,x)$ vanishes, and $\curl^* \curl a_b = 0$. Multiplying both sides by $a'$ and integrate, we see
\begin{align}
	\int |\curl a'|^2 + \gamma(0,0) \int |a'(x)|^2 = 0
\end{align}
Since $h_{ a_b }+\xi$ is bounded below and $g^\sharp$ is strictly positive and increase, we see that $\gamma = f_{\rm FD}((h_{ a_b }+\xi)/T) \geq c > 0$. Thus $\gamma(0,0) > 0$. It follows that $a' = 0$ and the proof is complete.

Finally, it is shown  in Appendix \ref{sec:xi-fp} that $\xi$ is smooth in $T$:
\begin{lemma} \label{lem:ExpandOfXi}
Assume that $\int v \leq 0$ and $|\int v|$ is small. Then $\xi$ is negative for all $T > 0$ and is bounded for all $T$ small. It is smooth and has the expansion, for $T$ small and with  $B = \frac{\hat v(0)}{4\pi}$,
\begin{align} \label{eqn:xiT-expansion}
	\xi(T) = \frac{B\mu}{1+B} + \frac{TB}{2} e^{\frac{-2\mu}{(1+B)T}} + O(TBe^{-\frac{4\mu}{(1+B)T}} + b^2). 
\end{align}
\end{lemma}


\section{Stability/instability of the  normal states for small $T$ and $b$: Proof of Propositions  \ref{prop:T-b-large} and \ref{thm:T-b-small}}  
 \label{sec:normal-stab/instab}

 \begin{proof}[Proof of Proposition \ref{prop:T-b-large}]
 Recall that $K_{Tb}=T f(h_x/T, h_y/T)$, where $f(u, v):=\frac{u+v}{\tanh(u)+\tanh(v)}$ and $h_z$ is the operator $h_{Tb}$, defined in Proposition \ref{prop:FT''}, acting on the variable $z$. By Lemma \ref{lem:f-low-bnd} below $f(u, v)\ge 1$. (A  weaker bound $f(u, v)\ge \frac14$ which suffices for us could be easily proved directly.) \DETAILS{Indeed, assume for simplicity that $x, y\ge 0$ and write 
\begin{align}\label{f-fn}
	f(x, y)= \frac{(x+ y)(1+e^{-2x})(1+e^{-2y})}{2(1-e^{-2(x+y)})}.
\end{align} 
Since $1+e^{-u}\ge 1$ and $1-e^{-u}\le u$, for $u\ge 0$, this gives $f(u, v)\ge \frac14$ and therefore} Hence 
\begin{align}\label{KTb-bnd}
	K_{Tb}\ge  T.
\end{align}  
Hence, since $v$ is independent of either $T$ or $b$, we have shown that $L_{Tb}\ge  T-\|v\|_\infty$  and consequently, Proposition \ref{prop:T-b-large}.  
 \end{proof} 
 
  
  \begin{proof}[Proof of Proposition \ref{thm:T-b-small}]

 We use the Birman-Schwinger principle (BSP) to show that $L_{Tb}$ has a negative eigenvalue. 
   Set $w^2 = -v \ge 0$ so 
  that $L_{Tb} = K_{Tb} - w^2$. 
  
  By the BSP,  $L_{Tb}$ has a negative eigenvalue $-E$ if and only if $G_{Tb}(E):= w(K_{Tb}+E)^{-1} w$  has the eigenvalue $1$ for some $E > 0$ (see e.g. \cite{GS}). 
By \eqref{KTb-bnd}, we have  $G_{Tb}(E)\ge 0, E\ge 0$.  Moreover, since $(K_{Tb}+E)^{-1}$ is continuous and monotonically decreasing in $E\ge 0$, so is $G_{Tb}(E)$. Hence, it suffices to show that $G_{Tb}:=G_{Tb}(0)$ satisfies $\|G_{Tb}\|>1$.
Hence, we estimate $\|G_{Tb}\|$ from below. 

 
Recall that $K_{Tb}=T f(h_x/T, h_y/T)$, where $f(u, v):=\frac{u+v}{\tanh(u)+\tanh(v)}$ and $h:=-\Delta_{a_b} -\mu$. 
Since the operator $h_{Tb}$, defined in Proposition \ref{prop:FT''}, satisfies $h_{Tb}\ge - \mu'$, for some $\mu'>\mu$, it suffices to consider $f(u, v)$ for $u, v\ge -\mu'$. A simple estimate 
\begin{align}\label{f-est} 
f(u, v)\ls 1+|u+v|, \end{align} 
for $u, v\ge -\mu'$, which follows from Lemma \ref{lem:f-low-bnd} below,  $K_{Tb}\ls T +|h_x + h_y|$. This implies the inequality
\[G_{Tb}\ge 
w (T+ |h_x + h_y|)^{-1} w \ge 0.\]

Since the gaps between  the eigenvalues $\lambda_n=b(2n+1)$  of  $-\Delta_{a_b}$ on $\frach$ are equal to $b$, we can choose $m$ s.t. $|\lambda_m-\mu|\ls b$.

 Recall that $L_{Tb}$ acts on the space of the Hilbert-Schmidt operators which can be identified through integral kernels with $\mathfrak{h} \otimes \mathfrak{h}$. 
 Let $\phi_m$ be the normalized eigenfunction of $-\Delta_{a_b}$ corresponding to  the eigenvalues $\lambda_m=b(2m+1)$. We take $u, \|u\|=1$, s.t.  $\phi:=w u=c\phi_m\otimes \phi_m$, where $c=\|w^{-1}(\phi_m\otimes \phi_m)\|^{-1}$ is the normalization constant coming from taking $\|u\|=1$, to obtain
\begin{align}\label{f-est} \lan u, G_{Tb} u\ran&\gs (T+|\lambda_m-\mu|)^{-1}\|w^{-1}(\phi_m\otimes \phi_m)\|^{-2}\notag\\
&\gs (T+b)^{-1}\|w^{-1}(\phi_m\otimes \phi_m)\|^{-2}. \end{align}
Now, 
 write $\phi_m (x)= \sqrt{b}\phi_m^0(\sqrt b x)$, where $\phi_m^0 (x)$ is independent of $b$. Furthermore, by  the assumption on $v$, we have $w\gs |x-y|^{-\kappa/2}, \kappa<2$, for $|x-y|$ sufficiently large. 
This gives $\lan u, G_{Tb} u\ran \gs  (T+b)^{-1} b^{\kappa/2}\ra \infty$ as $ b\ra 0,$ provided $ T\ls b^\sigma, \sigma >\kappa/2$.

Thus we have shown that $\|G_{Tb}\|$, or the largest eigenvalue of $G_{Tb}$, can be made arbitrarily large if $T$ and $b$ are sufficiently small, which, by the BSP, proves Theorem \ref{thm:T-b-small}. 

Finally,  \eqref{f-est} can be proven by analyzing 
$f(u, v)$, $u, v\ge -\mu'$,  separately in several domains. 
  \eqref{f-est} also follows from the stronger statement proved below.   
\end{proof}
Bounds on the function $f(u, v):= \frac{u+ v}{\tanh(u)+\tanh( v)}$ used in the proof above could be proved directly in an elementary way; they also follow from the following.

\begin{lemma}\label{lem:f-low-bnd}
The function $f(u, v):= \frac{u+ v}{\tanh(u)+\tanh( v)}$ has the minimum $1$ achieved at $u = v=0$.
\end{lemma}
\begin{proof} 
To find minimum of $f$, we look for its critical points. We let $g(u,v) = \tanh(u) + \tanh(v)$ and compute
\begin{align}
	\nabla f = \frac{1}{g}( 1- f(u,v) \sech^2(x), 1-f(u,v) \sech^2(y)).
\end{align}
Setting $\nabla f = 0$, we see that
\begin{align}\label{f-inv}
	f(u,v)^{-1}=  \sech^2(u) \ \text{ and }\ 
	f(u,v)^{-1}= \sech^2(v) .
\end{align}
It follows that $\sech^2(u) = \sech^2(v)$ and therefore either $u=v$ or $u=-v$. If $u = -v$, then $f(u,v) = \sech^{-2}(u)$. So critical points are all such $u = -v$ and the  minimum is reached at $u = 0$ and $f(0,0) = 1$. If $u=v$, then \eqref{f-inv} becomes
\begin{align}
	\tanh(u) = u \, \sech^2(u), \text{ or equivalently, }\
	\sinh(u)\cosh(u) = u .
\end{align}
This shows that $\sin(2u) = 2u$ which implies that $u=0$. Hence  minimum is reached at $u = 0$ and $f(0,0) = 1$.
 \end{proof}

 
\section{The existence of the vortex lattices} \label{sec:vort-latt-exist} 
In this section, we prove Theorem \ref{thm:BdGExistence} on existence of the vortex lattice solutions to the BdG equations with an arbitrary vortex (the first Chern) number $n$. Recall that 
we drop self-interaction terms $v* \rho_\gamma$ 
 and $v^\sharp \gamma$. We minimize the resulting energy for $\Tr(\gamma)$ fixed. Hence we omit the term $- \mu \Tr(\gamma)$ in \eqref{FT-def}. Hence, with the notation $h_{a}:= -\Delta_{a}$, the free energy functional $F_{T}(\G, a)$ in \eqref{FT-def} becomes 
\begin{align} \label{cF}
	\F (\G,a) = 
\Tr\big(h_{a} \gamma \big)	 +\frac{1}{2} \Tr\big( \s^* v^\sharp \s \big) 
+& \int dx  |\curl a |^2 
 - TS(\G).
\end{align}

We define the free energy functional $\F (\G, a)$ in \eqref{cF} on $\mathcal{D}^1_\nu \times \frachbvec^1$, if 
  $S(\G) < \infty$. Otherwise we set $\F (\G,a) = \infty$.

 Theorem \ref{thm:BdGExistence} follows from  Theorem \ref{thm:BdG=EL}  and the following  
  \begin{theorem} \label{thm:ExistMin}
Assume that $T > 0$ and $\|v\|_\infty$ is small and specialize $\frach=\frachb$. There exists a finite energy minimizer $(\G_*, a_*) \in \mathcal{D}^1_\nu \times \frachbvec^1$ of the functional $\mathcal{F}(\G,a)$ on the set $\mathcal{D}^1_\nu \times \frachbvec^1$. This minimizer satisfies $0 < \G_*< 1$ and $g(\G_*)$ (see \eqref{S-expr}) is trace class and has the equivariance and the flux quantization properties, \eqref{VL:equiv} and \eqref{mf-quant}. Furthermore, the minimizer $(\G_*,a_*)$ can be chosen to be even, i.e. satisfying \eqref{even}. \end{theorem}
The last statement, that $g(\G_*)$ is trace class, follows from the fact that if $A \geq 0$, then 
the usual trace and the $\frachb \otimes \C^2$ trace are the same.

By combining this result with Theorem \ref{thm:BdG=EL}, 
we obtain Theorem \ref{thm:BdGExistence}. 

\begin{proof}[Proof of Theorem \ref{thm:ExistMin}] Recall that we are dealing only with $\frach=\frachb$.
 We pass from the positive trace class operator $\g$ to the Hilbert-Schmidt one, $\ka:=\sqrt \g$, and from the vector potential $a$ to $e:=a-a_b$. Note that $\g\in I^{1,1}, \g\ge 0 \Longleftrightarrow \ka:=\sqrt \g \in I^{1,2}$.   Furthermore,
 using $a = a_b+e$, $\div a_b = b$ and $\int_{\Omega} =0$, we compute 
\begin{align}
	 \int_{\Omega} |\curl a|^2 =&  
	 \int_{\Omega} |\curl e|^2 + b^2 |\Omd|. 
\end{align}
Consequently, 
 consider, instead of \eqref{cF}, the equivalent functional 
\begin{align} \label{F-def}
	F (\ka, \al, e)& = \mathcal{F}_T(\G, a_b+e)\big|_{\g=\ka^2} - b^2 |\Omd|\\
 \label{F} &=\Tr\big( \ka h_{a_b+e} \ka \big)	 +\frac{1}{2} \Tr\big( \s^* v^\sharp \s \big) 
+ \int  |\curl e|^2  
 - TS(\G)\big|_{\g=\ka^2}
\end{align}
on the space $I^{1,2}\times I^{1,2}\times \frachvec^{1}$ with  the norm $\|(\ka, \al, e)\|_{(1)} := \|\ka \|_{I^{1,2}} + \|\al \|_{ I^{1,2}}+\|e \|_{\frachvec^{1}}$ and with the side conditions $0\le \G\big|_{\g=\ka^2}\le 1$ and $\tr\ka^2=\nu$.
We will keep the notation $\mathcal{D}^1_\nu$ for $I^{1,2}\times I^{1,2}$ with these side conditions.

We will use standard minimization techniques 
proving that $F (\ka, \al, e)$ is coercive and weakly lower semi-continuous, and $\mathcal{D}^1_\nu \times \frachvec^1$ weakly closed. 

\textbf{Part 1: coercivity.} 
The main result of this step is the following proposition:
\begin{proposition} \label{pro:LowerBound}
Let $T > 0$.We have, for  $\ka\in I^{1, 2}$ with $\tr \ka^2=\nu$, 
\begin{align}\label{cF-lower-bnd}
	F (\ka, \al, e) \geq C_1(\|\ka\|_{I^{1,2}}^{2r}/ \nu + \|e\|_{\frachvec^1}) - C_2
\end{align}
for suitable $C_1,C_2 > 0$ and any $r<1$.
\end{proposition}

\begin{proof}[Proof of Proposition \ref{pro:LowerBound}] 
We begin with estimating the entropy term $- T S(\G)$ (cf. \cite{HaiSei}). Recall the expression  \eqref{S-expr} - \eqref{g-s-def}, which we reproduce here 
\begin{align}
\label{S-expr'}	& S(\G) := \Tr(s(\G))= \Tr(g(\G)),\\
 \label{g-s-def'}	&  g(\G): = -\G\ln\G - (1-\G)\ln(1-\G),\ s(\G ):= -2\G\ln\G,
\end{align} 
and define  the relative entropy
\begin{align}\label{RelatEntropy}	
S(A|B) = \Tr(s(A|B)), \quad	& s(A|B) := A(\ln A - \ln B). 
\end{align}
We define and recall the diagonal and off-diagonal operator-matrix $\G_0$ and $\phi$ as
\begin{align}
\label{Gam0Phi} 
\G_0 := \left( \begin{array}{cc} 
	\g & 0 \\  
	0 & \one-\bar\g 
\end{array} \right), \ 	\phi(\beta) :=\left( \begin{array}{cc}
											0 & \beta \\
											\beta^* & 0
										\end{array} \right). 
\end{align}

\begin{proposition}[cf. \cite{FHSS}] \label{FHSSEntropyBnd} We have for $\G = \G_0 + \phi(\alpha)$,
\begin{align}\label{S-relatS}
	 S(\G) =  S(\Gam_0) - S(\G | \Gam_0)\le   S(\Gam_0). \end{align} 
\end{proposition}
\begin{proof}  We note that for $\G := \G_0 + \phi(\alpha)$,
\begin{align}
	 \G \ln \G & 
	 - \G_0 \ln \G_0  
	= \G \ln \G - \G \ln \G_0 + \G \ln \G_0  
		- \G_0 \ln \G_0\\ 
	&= s(\G,\G_0) + (\G -\G_0)\ln \G_0\\ 
	&= s(\G,\G_0) + \phi(\alpha) \ln \G_0. 
\end{align}
 the last term $\phi(\alpha) \ln [\G_0(1-\G_0)^{-1}]$ has zero trace since it is off-diagonal, we have the first equation in \eqref{S-relatS}.

The inequality in \eqref{S-relatS} follows from  Klein's inequality and the fact that $\Tr \G =\Tr \G_0$.\end{proof}
With the definitions  \eqref{S-expr'}- \eqref{g-s-def'} and \eqref{Gam0Phi}, Eq \eqref{S-relatS} implies
\begin{align} \label{SGam0} 
S(\G)\le S(\G_0)  = \Tr(g(\g)). 
 \end{align}

Next, 
 since $\langle e \rangle = 0$ and $\div e = 0$, the Poincar\'e's inequality shows that 
\begin{align}\label{a'H1-est} \|e\|_{H^1}^2\ls \int_{\Omega_\delta} |\curl e|^2.	
\end{align}
Let $a = a_b+e$ and $\g=\ka^2$. Definition \eqref{cF} and  inequalities \eqref{SGam0} and  \eqref{a'H1-est} 
  give, for any $\del>0$ and some $c>0$, 
\begin{align} \label{cF-lwbnd1}
	F (\ka, \al, e) \ge   \del\Tr(h_{a} \gamma) 
	&+ 2 c\|e\|_{H^1}^2 
	+ \frac{1}{2}\Tr(\alpha^* v^\sharp  \alpha) - TS(\g).
\end{align}

  Next, we estimate $\Tr(h_a \gamma)$, with $a = a_b+e$ and $\g=\ka^2$. Using 
$\div e = 0$, we write 
\begin{align}
\notag	\Tr(h_{a} \gamma) &=\Tr(h_{a_b}\gamma) + 2i\Tr(e \cdot \nabla_{a_b} \gamma) + \Tr(|e|^2 \gamma) \\
\label{Tr1-est}		\geq & (1-\epsilon) \Tr(h_{a_b}\gamma) + (1-\epsilon^{-1})\Tr(|e|^2 \gamma), 
\end{align}
for any $\epsilon > 0$.  Let $k(\g):=\Tr(h_{a_b}\gamma)=\|\gamma\|_{I^{1, 1}}$. For the last term, we claim, for any $r\in (0,1)$, the estimate 
\begin{align}\label{e2gam-est}
	0\le \Tr(|e|^2 \gamma)\ls k(\g)^{1-r}  (\tr \gamma)^{r}  \|e\|_{H^1}^2.  
\end{align}
 We prove this estimate for $r=1/2$, which suffices for us.  For general $r\in (0,1)$, see Lemma \ref{lem:e2gam-est} of Appendix \ref{sec:den-est}. Recall the definition $M_b:=\sqrt{h_{a_b}}$.   We use relative bound \eqref{Sob-ineq3} of Appendix \ref{sec:den-est} to find 
\begin{align}\label{e2gam-est'}
	0\le \Tr(|e|^2 \gamma)\ls \||e|^2 M_b^{-s}\|  \|M_b^{s}\ka \|_{I^{2}}  \|\ka\|_{I^{2}} \ls \|e\|_{H^v}^2 \|\ka\|_{I^{s, 2}} \|\ka\|_{I^{0, 2}}\end{align}
with $s>2(1-v)$. The last estimate gives \eqref{e2gam-est} with $r=1/2$.
 
Eq \eqref{e2gam-est}, together with \eqref{Tr1-est}, 
implies, for some constant $C$ (dependent of $\delta$), 
\begin{align}\label{Trhgam-ineq}
	\Tr(h_{a} \gamma)  \geq & (1-\epsilon) \Tr(h_{a_b})\gamma)\notag\\
	& + (1-\epsilon^{-1})C  k(\g)^{1-r}  (\tr \gamma)^{r}  \|e\|_{H^1}^2, 
\end{align}
with  $\e < 1$ and any $r<1$. 
Now, take $\del =c/(C(\epsilon^{-1}-1)k(\g)^{1-r}  (\tr \gamma)^{r})$ and let  $\del':=c\epsilon/(2C)$. 
Inequality  \eqref{cF-lwbnd1} and \eqref{Trhgam-ineq} give
\begin{align} \label{cF-lwbnd2}
	F (\ka, \al, e) 
	\ge 2\del' k(\g)^r - TS(\g) & +  \frac{1}{2}\Tr(\alpha^* v^\sharp  \alpha) 
	+ c \|e\|_{H^1}^2.  
\end{align}

We aim at estimating from below   the following contribution to the r.h.s. of \eqref{cF-lwbnd2} 
\begin{align} \label{E-expr}E(\g)=\del' [\Tr(h_{a_b}\gamma)/ \tr \gamma]^r - TS(\gamma)\end{align}
 Here, recall, $\del':=c\epsilon/(2C)$,   $r<1$ (any) 
  and  $\g\in I^{1, 1}$ with $\tr \gamma=\nu$.    
  Let $ I^{1, 1}_\nu:=\{\g\in I^{1, 1}: \tr \gamma=\nu\}$. We claim that there is a positive constant $C_\nu$ s.t.   \begin{align} \label{cF-lwbnd3}\inf_{\g\in I^{1, 1}_\nu}E(\g) \geq - C_\nu,\end{align} 

We prove  \eqref{cF-lwbnd3}. 
 We use Jensen's (or H\"older's for $h_{a_b}^{r} \gamma^r \gamma^{1- r}$) inequality. Let $e_k$ be an orthonormal eigenbasis of $h$ with eigenvalues $\lam_k$ and write (for $0 < r < 1$ so that $x^r$ is concave)
\begin{align}
	(\Tr(h_{a_b} \gamma)/\Tr\gamma)^r	=& \big( \sum_{k} \lam_k \lan e_k,\gamma e_k \ran/\Tr\gamma \big)^r\\ 
		&\geq  \sum_{k} \lam_k^r \lan e_k,\gamma e_k \ran/\Tr\gamma 
		= \Tr h_{a_b}^{r} /\Tr\gamma\end{align}
where $\lan e_k, \gamma e_k \ran /\Tr \gamma$ is regarded as a probability measure in the application of Jensen's inequality. Hence, we may write
\begin{align}\label{E-low-bnd}
	E(\gamma) \geq & \Tr (h_{a_b}^{r} \gamma)/\nu - TS(\gamma) =:E_r(\gamma).\end{align}
The r.h.s. is convex 
 and therefore any critical point of $E_r(\gamma)$ is a minimum. This is easily determined to be
\begin{align}
	\gamma_{\mu, \beta} = f_{FD}((h_{a_b}^{r} -\mu)/T)
\end{align}
for a $\mu$ such that $\Tr \gamma = \nu$.  $E_r(\g_{\mu, \beta})$ bounded below for $\mu$ and $\beta$ fixed. Hence  \eqref{cF-lwbnd3} follows.  


Eq \eqref{E-low-bnd}, together with estimate \eqref{cF-lwbnd2}, this gives
\begin{align} \label{cF-lwbnd}
	F (\ka, \al, e) \ge 
\del' 
\Tr(h_{a_b}\gamma)^r/ \tr \gamma	 & +   \frac{1}{2}\Tr(\alpha^* v^\sharp  \alpha)   + c \|e\|_{H^1}^2 
  - C_\nu, 
\end{align}
with $\del' =c\epsilon/(2C)$ and any $0<r<1$.
To estimate the second term on the r.h.s. of \eqref{cF-lwbnd}, we 
 bound $\alpha$ by 
  $\gamma$ via the constraint $0 \leq \G \leq 1$ (see \cite{BBCFS} and references therein):

\begin{lemma}\label{lem:al-gam-bnd}
The constraint $0 \leq \G \leq 1$ implies that
\begin{enumerate}
	\item $0 \leq \G(1-\G) \leq 1$.
	\item $\alpha^*\alpha \leq \bar{\gamma}(1-\bar{\gamma})$ and $\alpha \alpha^* \leq \gamma(1-\gamma)$.
	\item $\Tr (M\alpha \alpha^* M^*) \leq \|M\gamma M \|_{1}$ for any operator $M$.
\end{enumerate}
\end{lemma}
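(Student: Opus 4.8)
The plan is to read off all three inequalities from the single spectral constraint $0\le\G=\G^*\le 1$, using the auxiliary positive operator $\G(1-\G)$ as the common engine. Part (1) is pure functional calculus, and parts (2) and (3) come from extracting block information out of the positivity of $\G(1-\G)$.

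For (1) I would argue purely by the spectral theorem: since $\G$ is self-adjoint with spectrum in $[0,1]$ and the scalar function $x\mapsto x(1-x)$ maps $[0,1]$ into $[0,\tfrac14]\subset[0,1]$, functional calculus gives $0\le\G(1-\G)\le\tfrac14\le 1$. In particular $\G(1-\G)\ge 0$, which is the only consequence actually needed below.

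For (2) the key step is to compute $\G(1-\G)=\G-\G^2$ as a block operator-matrix using the structure \eqref{Gam}. Multiplying out, the diagonal blocks come out to $\gamma(1-\gamma)-\alpha\alpha^*$ in the upper-left corner and $\bar\gamma(1-\bar\gamma)-\alpha^*\alpha$ in the lower-right corner (the cross terms collapse precisely into $\gamma(1-\gamma)$ and $(1-\bar\gamma)\bar\gamma$). Since $\G(1-\G)\ge 0$ by (1), and since the diagonal blocks of any positive operator-matrix are themselves positive — test against vectors supported in a single summand $\mathfrak{h}$ or $\mathfrak{h}^*$ — both diagonal blocks are $\ge 0$, which is exactly $\alpha\alpha^*\le\gamma(1-\gamma)$ and $\alpha^*\alpha\le\bar\gamma(1-\bar\gamma)$.

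For (3) I combine (2) with the elementary bound $\gamma(1-\gamma)\le\gamma$ (immediate, since $\gamma-\gamma(1-\gamma)=\gamma^2\ge 0$) to get $\alpha\alpha^*\le\gamma$. Conjugating this operator inequality by $M$ preserves order, so $M\alpha\alpha^* M^*\le M\gamma M^*$, and monotonicity of the trace gives $\Tr(M\alpha\alpha^* M^*)\le\Tr(M\gamma M^*)$. Finally $M\gamma M^*\ge 0$ (as $\gamma\ge 0$), so its trace equals its trace norm $\|M\gamma M^*\|_1$, which for the self-adjoint weights $M=M_b^s$ used in the space norms is $\|M\gamma M\|_1$. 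I do not expect any genuine obstacle here: the whole argument is spectral calculus plus positivity of diagonal blocks. The only points needing a little care are writing out the block product so the cross terms cancel correctly, and the harmless identification $M^*=M$ in (3) (if $M$ is not assumed self-adjoint one should read the right-hand side as $\|M\gamma M^*\|_1$).
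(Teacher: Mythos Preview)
Your proposal is correct and follows essentially the same approach as the paper: part (1) via functional calculus on $[0,1]$, part (2) by reading off the diagonal blocks of the positive operator $\G(1-\G)$, and part (3) by combining $\alpha\alpha^*\le\gamma(1-\gamma)\le\gamma$ with conjugation by $M$ and monotonicity of the trace. Your write-up is in fact a bit more careful than the paper's (you note the sharper bound $\tfrac14$, spell out why diagonal blocks of a positive matrix are positive, and flag the $M$ vs.\ $M^*$ issue in the trace norm).
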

\begin{proof}
Since $0 \leq \G \leq 1$, then $0 \leq 1-\G \leq 1$ as well and therefore  
 $0 \leq \G(1-\G) \leq 1$, proving the first claim. 
From \eqref{Gam}, we see that the 1,1-entry of $\G(1-\G)$ is $0 \leq \gamma(1-\gamma) - \alpha \alpha^*$. By considering the 2,2-entry, we have that $\alpha^* \alpha \leq \bar{\gamma}(1-\bar{\gamma})$. Finally, since $1-\gamma \leq 1$, we see that $M\alpha \alpha^* M^* \leq M \gamma M^*$ completes the proof.
\end{proof}

Since $v$ is bounded, this lemma gives
\begin{align}
	|\Tr(\alpha^* v^\sharp \alpha)| \leq \|v\|_{\infty} \Tr(\alpha^* \alpha)  \leq \|v\|_{\infty} \Tr(\gamma).
\end{align}
  This together with 
  inequality \eqref{cF-lwbnd} implies 
 bound \eqref{cF-lower-bnd} of Proposition \ref{pro:LowerBound}. 
  \end{proof}

\textbf{Part 2: 
weak lower semi-continuity.} 

\begin{lemma}\label{lem:Flsc}
The functional $F (\ka, \al, e)$ is weakly lower semi-continuous in $\mathcal{D}^1_\nu \times \frachvec^1$. 
\end{lemma}
\begin{proof} We study the functional $F (\ka, \al, e)$ term by term. 
For the first term on the r.h.s. of \eqref{F}, 
with $a=a_b+e$, we write 
\begin{align} \label{hAgamSplit} 	
	& \Tr((-\Delta_{a}) \gamma) =  \Tr((-\Delta_{a_b}) \gamma) + 2i\Tr(e \cdot \nabla_{a_b} \gamma)+ \Tr(|e|^2 \gamma). 
\end{align}
Since the first term of \eqref{hAgamSplit} satisfies $ \Tr((-\Delta_{a_b})  \gamma)=\| \ka \|_{I^{1, 2}}^2$ 
 and is quadratic in $\ka$, it is $\|\cdot \|_{I^{1,2}}$-weakly lower semi-continuous.

For the second term on the r.h.s of \eqref{hAgamSplit}, 
 we let $e, e'\in \frachvec^1$ and estimate the difference $\Tr(e \cdot \nabla_{a_b} \gamma) - \Tr(e' \cdot \nabla_{a_b} \gamma')$.
 We write 
\begin{align}\label{eq1}
	&  \Tr(e \cdot \nabla_{a_b} \gamma) - \Tr(e' \cdot \nabla_{a_b} \gamma')\notag \\
	&=  \Tr((e-e') \cdot \nabla_{a_b} \gamma) - \Tr(e' \cdot \nabla_{a_b} (\gamma-\gamma')). 
	\end{align}
For the first term on the r.h.s., 
 letting $c:=e-e'$, we claim that 
\begin{align}\label{c-n-gam-est'}|\Tr(c \cdot \nabla_{a_b} \gamma)|  \ls \|c\|_{H^s} \|\gamma\|_{I^{1,1}},\ s<1. \end{align}
 To prove this inequality, we recall that $M_b:= \sqrt{-\Delta_{a_b}}$  and write $\Tr(c \cdot \nabla_{a_b} \gamma)=\Tr(M_b^{-1}c \cdot \nabla_{a_b}M_b^{-1} M_b \gamma M_b)$ and use a standard trace class estimate to obtain $|\Tr(c \cdot \nabla_{a_b} \gamma)| \ls \|M_b^{-1}c \cdot \nabla_{a_b}M_b^{-1} \| \|M_b \gamma M_b\|_{I^{0,1}}$. Next, we use the boundedness of $\nabla_{a_b}M_b^{-1}$, the relative bound  
  $\|M_b^{-1}c  \|  \ls \|c\|_{H^s}, s<1$ (see \eqref{Sob-ineq1} of Appendix \ref{sec:den-est}), and the relation $\|M_b \gamma M_b\|_{I^{0,1}}= \|\gamma\|_{I^{1,1}}$ to find \eqref{c-n-gam-est'}.  (Recall that  $\| \kappa\|_{I^{s,2}}= \|\gamma\|_{I^{s,1}}^{1/2}$.)

Now, keeping in mind that $\g$ and $\g'$ are non-negative, we claim the following estimate for the second term on the r.h.s. of \eqref{eq1} with $c=e'$:
\begin{align}\label{c-n-gam-est}	|\Tr(c \cdot \nabla_{a_b} &(\gamma - \gamma'))|   \notag\\
	&\ls \|c\|_{H^t}( \|\kappa\|_{I^{1,2}} 
	+\|\kappa'\|_{I^{1,2}}) \|\kappa-\kappa'\|_{I^{s,2}} ,\ s, t<1,\end{align}
where $\kappa:=\g^{1/2}$ and  $\kappa:=(\g')^{1/2}$. To prove this estimate,  we write $\g=\kappa^2, \g'={\kappa'}^2$ to expand 
\begin{align}\label{eq2}	
(\gamma - \gamma') & 
=\kappa (\kappa-\kappa')+(\kappa-\kappa')\kappa'.\end{align}  
Now, we use  the boundedness of $\nabla_{a_b}M_b^{-1}$  and the  relative bound $\|M_b^{-s}c  \|  \ls \|c\|_{H^t}, s, t<1$ (see \eqref{Sob-ineq1} of Appendix \ref{sec:den-est}),  and the relations $\|M_b^s \kappa\|_{I^{0,2}}= \|\kappa M_b^s \|_{I^{0,2}}= \| \kappa\|_{I^{s,2}}$, to find
\begin{align}\label{ineq1}	|\Tr(c \cdot \nabla_{a_b} &\kappa (\kappa-\kappa')) | =|\Tr(M_b^{-s}c \cdot \nabla_{a_b}M_b^{-1} M_b\kappa (\kappa-\kappa')M_b^{s})| \notag\\
	&\ls \|c\|_{H^t}\|\kappa\|_{I^{1,2}} \|\kappa-\kappa'\|_{I^{s,2}} ,\ s, t<1.\end{align}
For the second term on the r.h.s. of \eqref{eq2}, we have using
 the  relative bound  $\|M_b^{-1}c \cdot \nabla_{a_b}M_b^{-s} \|  \ls \|c\|_{H^1}, s<1$  (see \eqref{Sob-ineq2} of Appendix \ref{sec:den-est}), to find
\begin{align}\label{ineq2}	|\Tr(c \cdot \nabla_{a_b} &(\kappa-\kappa') \kappa') | =|\Tr(M_b^{-1}c \cdot \nabla_{a_b}M_b^{-s} M_b^{s}(\kappa-\kappa') \kappa')M_b)| \notag\\
	&\ls \|c\|_{H^1}\|\kappa'\|_{I^{1,2}} \|\kappa-\kappa'\|_{I^{s,2}} ,\ s <1.\end{align}
 The last two estimates yield \eqref{c-n-gam-est}.


Applying \eqref{c-n-gam-est'} and \eqref{c-n-gam-est} to the terms on the r.h.s. of \eqref{eq1}, we find, for $3/4 < s<1$,
\begin{align}\label{2nd-term-cont}
	 |\Tr(c \cdot \nabla_{a_b} \gamma) & - \Tr(c' \cdot \nabla_{a_b} \gamma')|\ls  \|c-c'\|_{\vec{\mathfrak{h}}^{s}}  \|\gamma \|_{I^{1, 1}}\notag \\
	&+ \|c'\|_{\vec{\mathfrak{h}}^{s}}  
	( \|\kappa\|_{I^{1,2}} 
	+\|\kappa'\|_{I^{1,2}}) \|\kappa-\kappa'\|_{I^{s,2}} ,\ s, t<1,\end{align}
where $\kappa:=\g^{1/2}$ and  $\kappa:=(\g')^{1/2}$. 

 Now, we use a standard result for Sobolev spaces, $\frachvec^{s'}$ is compactly embedded in $\frachvec^{s}$, for any $s' > s$, and, perhaps, less standard one, that  $I^{s', 2}$ is compactly embedded in $I^{s, 2}$, for any $s' > s$.  
 (One shows the latter fact by passing to the integral kernels and using a standard  Sobolev embedding result.) 

Now, let $\{(\ka_n, \al_n, e_n)\}$ be a weakly convergent sequence in $
\mathcal{D}^1_\nu \times \frachvec^1$ 
 and denote its limit by $(\ka_*, \al_*, a_*)$. Then, by above, it converges strongly in  $\mathcal{D}^s_\nu \times\frachvec^{s},\ s<1$. Hence,  
 we have by \eqref{2nd-term-cont}, 
\begin{align}\label{2nd-term-cont'}
	 |\Tr(e_n \cdot \nabla_{a_b} \gamma_n) & - \Tr(e_* \cdot \nabla_{a_b} \gamma_*)|   \ra 0,\ n\ra \infty. 
	\end{align}
where, as usual, $\g_n=\ka_n^2$ and $\g_*=\ka_*^2$.

Finally, we consider the difference $\Tr(|e|^2 \gamma) - \Tr(|e' |^2 \gamma' )$ due to the last term in \eqref{hAgamSplit}. We decompose
\begin{align}\label{eq3}\Tr(|e|^2 \gamma) -& \Tr(|e' |^2 \gamma' )= \Tr(|e|^2 (\gamma - \gamma' )) + \Tr((|e|^2-|e' |^2)\gamma').\end{align}
For the first term on the r.h.s. we  claim the following estimate
\begin{align}\label{c-n-gam-est''}|\Tr(|e|^2  (\gamma - \gamma' )) | &\ls \|e\|_{H^t}^2 (\|\kappa\|_{I^{1,2}} +\|\kappa'\|_{I^{1,2}}) \|\kappa-\kappa'\|_{I^{s,2}} ,\ s <1,\end{align}
where  $\kappa:=\g^{1/2}$ and  $\kappa':=(\g')^{1/2}$. We use again \eqref{eq2} and
 the  relative bound  $\|M_b^{-s}|e|^2M_b^{-t} \|  \ls \|c\|_{H^r}, s + t>2(1-r),$ (see \eqref{Sob-ineq3} of Appendix \ref{sec:den-est}) to find, similarly to \eqref{ineq1} and \eqref{ineq2}, 
\begin{align}\label{ineq3}|\Tr(|e|^2  (\gamma - \gamma' )) | &=	|\Tr(|e|^2( \kappa (\kappa-\kappa') + (\kappa-\kappa') \kappa') |\\&\le |\Tr(M_b^{-s}|e|^2M_b^{-1} M_b\kappa (\kappa-\kappa')M_b^{s})| \notag\\
&+|\Tr(M_b^{-1}|e|^2M_b^{-s} M_b^{s}(\kappa-\kappa') \kappa' M_b)| \notag\\
	&\ls \|c\|_{H^t} (\|\kappa\|_{I^{1,2}} +\|\kappa'\|_{I^{1,2}}) \|\kappa-\kappa'\|_{I^{s,2}} ,\ s <1,\end{align}
 which gives  \eqref{c-n-gam-est''}. 
 
 Finally, similarly to \eqref{c-n-gam-est'}, we find for the second term on the r.h.s. of \eqref{eq3},
 \begin{align}\label{c-n-gam-est'''}|\Tr((|e|^2-|e' |^2)\gamma')|  \ls \|e - e'\|_{H^s}^2 \|\gamma\|_{I^{1,1}},\ s<1, \end{align} 
  
 Now, Eqs  \eqref{eq3},    \eqref{c-n-gam-est''} and \eqref{c-n-gam-est'''} imply 
\DETAILS{term $\Tr(|e|^2 \gamma)$ on the r.h.s. of \eqref{hAgamSplit}. 
We show \begin{align}\label{e2-gam-est}
	|\Tr(|e|^2  \gamma)| \le   \|e\|_{H^{t}}^2 \|\g\|_{ I^{s,1}}, s>2(1-1/q), t>1/q, q<2.\end{align}
 Firstly,	 we use \eqref{den-def} to obtain $\Tr(|e|^2  \gamma)=\int |e|^2\rho_\g$, which gives $|\Tr(|e|^2  \gamma)|\ls \|e\|_{L^{p}}^2 \|\rho_\g\|_{L^{q}}$. This and the estimate  $\|\rho_\g\|_{L^{q}}\ls \|\gamma\|_{I^{s,1}}, s>2(1-1/q),$  proven in Appendix \ref{sec:den-est} (see  \eqref{den-ineq4}) yield \eqref{e2-gam-est}.  Estimate \eqref{e2-gam-est} implies}
\begin{align}|\Tr(|e_n|^2 \gamma_n) - \Tr(|e_*|^2 \gamma_*)| 
		&\lesssim  
\|e_n\|_{H^1}^2 (\|\kappa_n\|_{I^{1,2}} +\|\kappa_*\|_{I^{1,2}}) \|\kappa_n-\kappa_*\|_{I^{s,2}}\notag\\
		 &		 + \|\gamma_*\|_{I^{1,1}} \|e_n - e_*\|_{H^{s}}^2,\end{align}
for $s< 1$, and therefore, as above the r.h.s. converges to $0$.

The second term on the r.h.s. of \eqref{F} is  quadratic in $\alpha$ and therefore it is continuous in $\mathcal{D}^1_\nu \times \frachvec^1$ since $v\in L^\infty$. It follows that it is weakly lower semi-continuous in $\mathcal{D}^1_\nu \times \frachvec^1$.
\DETAILS{we note that for any $\alpha_1,\alpha_2$,
\begin{align}
	& \Tr(\alpha_1^* v^\sharp  \alpha_1)  - \Tr(\alpha^*_2 v^\sharp  \alpha_2) \\
	&= \Tr((\alpha_1-\alpha_2)^* v^\sharp  \alpha_1)  - \Tr(\alpha^*_2 v^\sharp  (\alpha_1 - \alpha_2))
\end{align}
Since $v$ is bounded, we see that the above terms are bounded by, up to constants,
\begin{align}
	\| \alpha_1 - \alpha_2 \|_2 \|v\|_\infty (\|\alpha_1\|_2+\|\alpha_2\|_2) \lesssim \| \alpha_1 - \alpha_2 \|_2 \|v\|_\infty
\end{align}
where the last inequality follows since $0 \leq \alpha^* \alpha \leq \gamma \leq 1$ in $\mathcal{D}$.} 

The third term on the r.h.s. of \eqref{F}, $\int_{\Omd} |\curl e|^2$, is clearly convex. So its norm lower semi-continuity is equivalent to weak semi-continuity. Since it is clearly $\frachvec^1-$norm continuous, it is $\frachvec^1-$weakly lower semi-continuous.

 Hence all the terms on the r.h.s. of the expression \eqref{F} for $F (\G, a)\equiv F (\ka, \al, a)$, save  $-T S(\G)$, are lower semi-continuous under the convergence indicated and therefore so is $F (\G, a)$. The  lower semi-continuity of the latter term is proven in Lemma \ref{lem:Slsc} of Appendix \ref{sec:entropy}, which completes the proof.\end{proof}

We continue with the proof of Theorem \ref{thm:ExistMin}. With the results above, the proof is standard. Let  $\{(\ka_n, \al_n, e_n)\}$ be a weakly convergent sequence in $
\mathcal{D}^1_\nu \times \frachvec^1$ 
 be a minimizing sequence for $F(\ka, \al, e)\equiv F(\G, e)$. By Proposition \ref{pro:LowerBound}, 
 $\|\G_n\|_{(1)}^r + \|e_n\|_{(1)}, r<1,$ is bounded uniformly in $n$. By Sobolev-type embedding theorems,  $(\G_n, e_n)$ converges strongly in $\cD^s_\nu\times \frachvec^{s}$ for any $s<1$ and by 
  the Banach-Alaoglu theorem, 
   $(\G_n, e_n)$ converges weakly in $\cD^1_\nu\times \frachvec^{1}$. Denote the limit by $(\G_*,e_*)$. Since, by Lemma \ref{lem:Flsc}, $F$ is  lower semi-continuous, we have
\begin{align}
	\liminf_{n \rightarrow \infty} F(\G_n, e_n) \ge F(\G_*,e_*).
\end{align}
Hence, $(\G_*,a_*)$ is indeed a minimizer. Furthermore, if we restrict ourselves to even $(\G, a)$, i.e. to $(\G, a)$ satisfying \eqref{even}, then the minimizer $(\G_*,a_*)$ is also even.

Now, we show 
\begin{lemma}\label{lem:eta*EVs}
$0$ and $1$ are not eigenvalues of $\G_*$.  Consequently, $0 < \G_* < 1$.\end{lemma}
\begin{proof} We use throughout the proof that
 for any $v \in \frachb \times \frachb$ and and operator $A$ on $\frachb$, we have $\Tr AP_v = \lan v, Av\ran$, where $P_v$ is the projection onto $v$, and we  write $\lan v, Sv\ran$ instead of $\Tr SP_v$. 

 We assume for the sake of contradiction that $\G_*$ has the eigenvalue $0$. We observe that if $\G_* x = 0$, then the relation \eqref{Gam-prop} implies $\G_* J\bar x = J\bar x$.  Therefore $1$ is also an eigenvalue. \\ 

\textbf{Case 1: there is some $x \in  \ker \G_*$ such that $\lan x, Sx \ran = 0$.} Below, we omit the argument $e$ in $F(\G, e)$. Define
\begin{align}
	\G' := P_x - J\bar P_x J^* = P_x - P_{J\bar x} \label{eqn:G'-Px-JPJ}.
\end{align}
We see that $\G'$ satisfies the last condition in \eqref{Gam'-cond} since $\Tr(S_1 \G') = \lan x, Sx \ran = 0$. We remark that the other 2 conditions of \eqref{Gam'-cond} are not necessarily satisfied. However, we still can compute the perturbation $F(\G_*+\epsilon \G') - F(\G_*)$ by the continuous functional calculus since $\G'$ is the difference of two eigenprojections of $\G_*$. We compute
\begin{align} \label{eqn:EV-case-1-contradiction}
	& F(\G_*+\epsilon \G') - F(\G_*) 
	\asymp  \epsilon  \ln \epsilon  + O(\epsilon) 
	< 0
\end{align}
since $O(|\epsilon \ln \epsilon|) \gs O(\epsilon)$ for $\epsilon$ small, where $O(\epsilon)$ is the size of all differentiable terms in $F(\G_*+\epsilon \G') - F(\G_*)$ with respect to $\epsilon$. We remark that $F$ is not differentiable with respect to the variation $\G_*+\epsilon \G'$ as $x\ln(x)$ is not differentiable at $0$. This results in the nonlinear $\epsilon \ln \epsilon$ term in \eqref{eqn:EV-case-1-contradiction}. Consequently, \eqref{eqn:EV-case-1-contradiction} contradicts minimality of $\G_*$. We conclude that $\G_*$ has a trivial kernel; hence, it has a trivial $1$-eigenspace. Consequently, $0 < \G_* < 1$.\\

\textbf{Case 2: $\lan x, S x \ran \not= 0$ for all $x \in \ker \G_*$.} We can assume that $\lan x, Sx \ran > 0$ for all $x \in \ker \G_*$, otherwise by a linear combination we are back in Case 1. Let $N := \ker \G_* \oplus \ker (1-\G_*)$. Our first goal is to find some $v_0 \in N^\perp$ such that $\lan v_0, S v_0 \ran < 0$.

First, assume for the sake of contradiction that if $x \in N^\perp$, then $\lan x, Sx \ran = 0$. We claim that $S$ is a bijection between $N$ and $N^\perp$. Our assumption $\lan x, Sx\ran = 0$ for all $x \in N^\perp$ implies that $\lan x, Sy \ran = 0$ for all $x,y \in N^\perp$. Otherwise we can find some $x\in N^\perp$ such that $\lan x, Sx \ran \not=0$ by a linear combination. Consequently, $SN^\perp \subset N$. Since $S^2 = 1$ and $S$ is unitary, we see that $N$ is unitarily isomorphic to $N^\perp$. This proves the claim.

Now, let $x \in N$. We see that $\lan x, Sx \ran > 0$ by assumption of Case 2. The above claim shows that $Sx \in N^\perp$. In particular,
\begin{align}
	\lan x, Sx \ran = 0 .
\end{align}
This is a contradiction. We are lead to conclude that there is some $x \in N^\perp$ with $\lan x, S x \ran \not= 0$.

Now we claim that there is some $x \in N^\perp$ with $\lan x, S x \ran \not= 0$ if and only if there is some $x \in N^\perp$ with $\lan x, Sx \ran < 0$. One direction is obvious. In the other direction, we assume that $x \in N^\perp$ and $\lan x, Sx \ran < 0$ and show that a) $J\mathcal{C}x \in N^\perp$ ($\mathcal{C}$ is the complex conjugation) and b) $\lan J\mathcal{C}x,S(J\mathcal{C}x) \ran > 0$. For a), we note that $N$ is an invariant subspace of $J\mathcal{C} = \mathcal{C} J$ (i.e. $\G_* x = 0$ if and only if $\G_*J\mathcal{C} x = J\mathcal{C} x$ using condition \eqref{Gam-prop}) and $J^* = -J$. Consequently, we see that $N^\perp$ is an invariant subspace of $J\mathcal{C}$ i.e. $J\mathcal{C}N^\perp = N^\perp$. b) follows from the fact $J^*SJ = -S$. The claim at the beginning of the paragraph is now proved.

Consequently, our first goal is achieved: there is some $v_0 \in N^\perp$ with $\lan v_0, Sv_0 \ran < 0$.

Now, let $v \in \ker \G_*$ be arbitrary such that $\lan v, S v \ran > 0$ and fix a $v_0 \in N^\perp$ with $\lan v_0, Sv_0 \ran < 0$. Set
\begin{align}
	\G' =& (P_v - JP_{\bar v}J^*) - \frac{\lan v, S v\ran}{\lan v_0, S v_0 \ran} (P_{v_0} - JP_{\bar v_0}J^*)\\
		=&: A - \frac{\lan v, S v\ran}{\lan v_0, S v_0 \ran} B
\end{align}
By construction, we note that $\G'$ satisfies the third condition of\eqref{Gam'-cond} so that $\G_* + \epsilon \G' \in \mathcal{D}^1_\nu$. Moreover, $A$ satisfy the first condition of \eqref{Gam'-cond}, but not the second while $B$ fulfills the first and second conditions of \eqref{Gam'-cond}. This allows us to differentiate the free energy $F_T$ with respect to the variation $\G_* + \epsilon B$ while computing the variation in $A$ explicitly using the continuous functional calculus. Consequently, we see once more
\begin{align}
	& F(\G_*+\epsilon \G') - F(\G_*) 
	\asymp  \epsilon  \ln \epsilon  + O(\epsilon) \
	< 0 .
\end{align}
This again contradicts the minimality of $\G_*$. We conclude that $\G_*$ has a trivial kernel; hence, it has a trivial $1$-eigenspace.
\end{proof}

Finally, since a minimizing sequence $e_n$ converges to $e_*$ strongly in $\vec{\mathfrak{h}}^s$ for any $s<1$, we have, by the magnetic flux quantization \eqref{mf-quant} for $e_n$, the convergence of $e_n$ to $e_*$ and the Stokes theorem, that $\frac{1}{2\pi} \int_{\Omd} \curl a_* = c_1(\rho)\in \Z$, where, recall, $a_*=a_b +e_*$ and $\Omd$ a fundamental cell of $\lat$. \end{proof} 

\subsection*{Acknowledgments}
The first author is very grateful for Almut Burchard's kind help and suggestions. The second author is grateful to Volker Bach, S\'ebastien  Breteaux, Thomas Chen and J\"urg Fr\"ohlich for enjoyable collaboration and both authors thank Rupert Frank and  Christian Hainzl, for stimulating discussions.
 The authors' research is supported in part by NSERC Grant No. NA7901. During the work on the paper, the authors enjoyed the support of the NCCR SwissMAP.

\appendix

\section{Entropy} \label{sec:entropy}

In this appendix we prove the differentiability and expansion of the entropy functional, which we recall here
\begin{align}
\label{S-def''}	& S(\G) := \Tr(s(\G))= \Tr(g(\G)),\\
\label{g-s-def''}	&  g(\G): = -\G\ln\G - (1-\G)\ln(1-\G),\ s(\G ):= -2\G\ln\G.
\end{align}
 This is used in the next two appendices in order to prove  Theorem \ref{thm:BdG=EL} and Propositions \ref{prop:FT''} and  \ref{prop:FT-expan-order2}.
Let 
$d S(\G  ) \G' :=\partial_\epsilon S(\G  + \epsilon \G') \mid_{\epsilon=0} $. We have

\begin{proposition}\label{prop:S-deriv} Let $\G \in \mathcal{D}^1_\nu$ be such that 
 $g(\G) := -\G \ln \G - (1-\G)\ln(1-\G)$ {\bf is trace class} and $\G'$ satisfy \eqref{Gam'-cond}. Then $S$ is $C^1$ and its derivative is given by
\begin{align}\label{dS}	& d S(\G  ) \G' = \Tr( g'(\G ) \G')= \Tr( s'(\G ) \G').
\end{align}
\end{proposition}
\begin{proof}
By  \eqref{S-def''}, it suffices for us to prove the proposition for $s(\G) = -\G\ln(\G)$.
Denote $\G'':=\G +\epsilon \G'$. We write 
\begin{align}\label{S-AB} S(\G'') - S(\G ) &= -\Tr (\G(\ln\G''-\ln \G) - \e \G'(\ln \G''-\ln \G) - \e \G'\ln \G)\\
\label{S-AB}&=: A+B - \e \Tr (\G'\ln \G).
	\end{align} 
Using the formula $\ln a -\ln b= \int_0^\infty [(b + t)^{-1} - (a + t)^{-1}] d t$ and the second resolvent equation, we compute
\begin{align}
	A :=& -\Tr (\G(\G''-\ln \G)) \notag\\
	&= \int_0^\infty \Tr \{ \G[(\G'' + t)^{-1} - (\G+t)^{-1}]\} dt\notag\\
			&\qquad  = -\int_0^\infty \Tr \{ \G (\G+ t)^{-1} \e\et' (\G''+t)^{-1}\} dt\notag\\
	&\qquad  = -\int_0^\infty \Tr \{ \G (\G+ t)^{-1} \e\et' (\G+t)^{-1}\} dt \notag\\
	&\qquad  -\int_0^\infty \Tr \{ \G (\G+ t)^{-1} \e\et'  (\G+ t)^{-1} \e\et' (\G''+t)^{-1}\} dt.
		 \label{A-comp2}
	\end{align}
Similarly, we have
\begin{align}
	B :=& -\Tr (\e \G'(\G''-\ln \G)) \notag\\
	&= \int_0^\infty \Tr \{ \e \G'[(\G''+ t)^{-1} - (\G+t)^{-1}]\} dt\notag\\
			&\qquad  =  -\int_0^\infty \Tr \{ \e \G' (\G+ t)^{-1} \e\et' (\G''+t)^{-1}\} dt.
			\label{B-comp2}
	\end{align}
Combining the last two relations with \eqref{S-AB}, we find
\begin{align}\label{S-exp2} & S(\G +\epsilon \G')- S(\G )=  \e S_1+  \e^2 R_2, 
\\
\label{S1-orig}&S_1:= -\Tr \G'\ln \G - \int_0^\infty \Tr \{ \G (\G+ t)^{-1} \et' (\G+t)^{-1}\} dt, \\
&R_2:= \int_0^\infty \Tr \{ \G (\G+ t)^{-1} \et'  (\G+ t)^{-1} \et' (\G''+t)^{-1} \notag\\
\label{R2-orig}& \qquad - \G' (\G+ t)^{-1} \et' (\G''+t)^{-1}\} dt.	
	\end{align} 
The estimates below show that the integrals on the r.h.s. converge. Computing the integral $ \int_0^\infty \Tr \{ \G (\G+ t)^{-1} \et' (\G+t)^{-1}\} dt= \int_0^\infty \Tr \{ \G (\G+ t)^{-2} \et' \} dt=\Tr  \et' $ in the expression for $S_1$ and transforming  the expression for $R_2$, 
we obtain
\begin{align}\label{S1}  S_1:=& -\Tr \{ \G'\ln \G + \et' \} , \\ 
\label{R2} R_2:= & -\int_0^\infty \Tr \{ t (\G+ t)^{-1} \et'  (\G+ t)^{-1} \et' (\G''+t)^{-1} \} dt. 
	\end{align}

The proof convergence of \eqref{S1-orig} and \eqref{R2} are similar. We consider the case of \eqref{R2}.
We estimate the integrand on the r.h.s. of \eqref{R2}. 
we have 
\begin{align}& | \Tr \{  \G' (\G+ t)^{-1} \et' (\G''+t)^{-1}\} |\notag\\ 
&\qquad \qquad   \le  \| \et' (\G+t)^{-1}\|_{I^2} \| \et' (\G''+t)^{-1}\|_{I^2} 
\end{align} 
Now, we show that the factors on the r.h.s. are $L^2(dt)$. By the second condition in \eqref{Gam'-cond} on $\et'$, we have 
\begin{align}\| \et' (\et^\#+t)^{-1}\|_{I^2} \le &\| \et (\one -\et^\#) (\et^\#+t)^{-1}\|_{I^2}\notag\\ 
& \le \| \xi^\# (\xi^\# +t)^{-1}\|_{I^2},\notag\end{align}
where $\et^\#$ is either $\et$ or $\et''$ and $\xi^\#:=\et^\# (\one -\et^\#)$. Let $\mu_n$ be the eigenvalues of the operator $\xi^\#:=\et^\# (\one -\et^\#)$. Then we have 
\begin{align}  \label{eqn:1stVar}	&  \|\xi^\# (\xi^\# +t)^{-1}\|_{I^2}^2 =\sum_n  \mu_n^2 (\mu_n+t)^{-2} , 
\end{align}
and therefore 
\begin{align}\int_0^\infty  \| \xi^\# (\xi^\#+t)^{-1}\|_{I^2}^2 dt & =\int_0^\infty   \sum_n  \mu_n^2 (\mu_n+t)^{-2}  dt\notag\\
&=\sum_n  \mu_n = \Tr   \xi^\#.\end{align}  
Since $ \et(\one -\et)$ and $ \et''(\one -\et'')$ are trace class operators, this proves the claim and, with it, the convergence of the integral on the l.h.s.. Similarly, one shows the convergence of the other integrals.

To sum up, we proved the expansion \eqref{S-exp2} with $S_1$ given by \eqref{S1}, which is the same as \eqref{dS},  and $R_2$ bounded as $|R_2|\ls 1$.
In particular, this implies that  $S$ is $C^1$ and its derivative is given by \eqref{dS}.
\end{proof}

\begin{proposition}\label{prop:S-expan-order2} 
$S(\G) := \Tr(g(\G)) $ is $C^3$ at $\G_{Tb}$ w.r.to perturbations $\et'$ satisfying \eqref{Gam'-cond}
 Moreover,  we have
\begin{align}\label{S-expan} 
	S(\G_{Tb}+ \e\G') =& S(\G_{Tb}) + \e S'(\G_{Tb})\G' + \frac12 \e^2 S''(\G',\G')  + O(\epsilon^3) 
 \end{align}
where 
$S'(\G_{Tb})\G':=\Tr(g'(\G_{Tb})\G')$, $S''(\G',\G')$ is a quadratic form,
\begin{align}\label{S''}  	
	S''(\G',\G')=& \frac12  \int_0^\infty \Tr \{  (\G+ t)^{-1} \et'  (\G+ t)^{-1} \et' \} dt
\end{align} 
 and the error term is uniform in $\G'$ and is bounded by $\e^3 \Tr(\G_{Tb}(1-\G_{Tb}))$.  For 
$\G'=\phi(\al)$, the quadratic term becomes 
\begin{align}\label{S''-al} &S''(\G',\G') = -\Tr\left(\bar{\alpha} K_{T b} \alpha \right),\\  & K_{T b}:=\frac{1}{T} \frac{h_{T b}^L+h_{T b}^R}{\tanh(h^L_{T b}/T) + \tanh(h^R_{T b}/T)}, \end{align}
 where $h_{T b}:=h_{\gamma_{Tb}, a_b}$.
\end{proposition}
\begin{proof} For the duration of the proof we omit the subindex $Tb$ in $\G_{Tb}$. Recall \eqref{S-AB} - \eqref{B-comp2} and continuing computing $A$ and $B$ in  \eqref{A-comp2} - \eqref{B-comp2} in the same fashion as in the derivation of these equations, we find
\begin{align}
	A 	=&  \int_0^\infty \Tr \{ \G (\G+ t)^{-1} \e\et' (\G+t)^{-1}\} dt \notag\\
	&- \int_0^\infty \Tr \{ \G (\G+ t)^{-1} \e\et'  (\G+ t)^{-1} \e\et' (\G+t)^{-1}\} dt\notag\\
	&+ \int_0^\infty \Tr \{ \G (\G+ t)^{-1} \e\et'  (\G+ t)^{-1} \e\et' (\G+ t)^{-1} \e\et' (\G''+t)^{-1}\} dt, \label{A-comp3}
	\end{align}
and
\begin{align}
	B 			=&  \int_0^\infty \Tr \{ \e \G' (\G+ t)^{-1} \e\et' (\G+t)^{-1}\} dt \notag\\
	&- \int_0^\infty \Tr \{ \e \G' (\G+ t)^{-1} \e\et'  (\G+ t)^{-1} \e\et' (\G''+t)^{-1}\} dt . \label{B-comp3}
	\end{align}
Combining the last two relations with \eqref{S-AB} and recalling the computation of $S_1$, we find
\begin{align}\label{S-exp3}  S(\G +\epsilon \G')&- S(\G )=  \e S_1+  \e^2 S_2+  \e^3 R_3\\
	S_2:=&- \int_0^\infty \Tr \{ \G (\G+ t)^{-1} \et'  (\G+ t)^{-1} \et' (\G+t)^{-1} \notag\\
	&- \G' (\G+ t)^{-1} \et' (\G+t)^{-1}\} dt,\notag\\
		R_3:= &\int_0^\infty \Tr \{ \G (\G+ t)^{-1} \et'  (\G+ t)^{-1} \et' (\G+ t)^{-1}  \et' (\G''+t)^{-1}\notag\\
	&- \G' (\G+ t)^{-1} \et'  (\G+ t)^{-1} \et' (\G''+t)^{-1}\} dt. 
	\end{align} 
Transforming  the expressions for $S_2$ and $R_3$, we obtain
\begin{align}\label{S2'}  
	S_2=& \int_0^\infty \Tr \{ t (\G+ t)^{-1} \et'  (\G+ t)^{-1} \et' (\G+t)^{-1}\} dt, \\
	\label{R3}	R_3	= &-\int_0^\infty \Tr \{ t (\G+ t)^{-1} \et'  (\G+ t)^{-1} \et' (\G+ t)^{-1}  \et' (\G''+t)^{-1}\} dt.
	\end{align} 
Estimates similar to those done after \eqref{R2} show that the integrals on the r.h.s. converge.
This proves the expansion \eqref{S-exp3} with $S_1$ and $S_2$ given by \eqref{S1}, which is the same as \eqref{dS}, and \eqref{S2'} and  $R_3$ bounded as $|R_3|\ls 1$.
Identifying  the quadratic form $S_2$ with  $S''(\G',\G')$, we arrive at the expansion \eqref{S-expan}.

Before computing $S_2\equiv S''(\G',\G')$, we find a simpler representation for it. Integrating the r.h.s. of  \eqref{S2'} by parts, we find
\begin{align}\label{S''-comp}  	S''=& \int_0^\infty \Tr \{ t (\G+ t)^{-2} \et'  (\G+ t)^{-1} \et' \} dt \notag \\
		=& \int_0^\infty \Tr \{  (\G+ t)^{-1} \et'  (\G+ t)^{-1} \et' \} dt \notag\\
	&- \int_0^\infty \Tr \{ t (\G+ t)^{-1} \et'  (\G+ t)^{-2} \et' \} dt.
	\end{align}
But by the cyclicity of the trace the last integral is equal to the first one and therefore we have \eqref{S''}. Eq \eqref{S-sym} gives 
 \begin{align} 	\label{S''-sym}  S''(\G',\G')=& \frac14  \int_0^\infty \Tr \{  [(\G+ t)^{-1} \et'  (\G+ t)^{-1} \notag\\ &+  (\one - \G+ t)^{-1} \et'  (\one - \G+ t)^{-1}]\et' \} dt,
	\end{align}
	
Now, we use \eqref{S''-sym} to compute to $S''$ for $\G' = \phi(\alpha)$. First,  we recall that $\et=\G_{Tb}$ and observe that for $\G' = \phi(\alpha)$,
\begin{align}
		  \Tr((\G_{Tb}+t)^{-1}\G'(\G_{Tb}+t)^{-1}\G') &= 2\Tr((\g_{Tb}+t)^{-1}\al(\one- \bar\g_{Tb}+t)^{-1}\al) \\
		  &= 2\Tr(((x+t)^{-1}(y+t)^{-1} \alpha) \bar{\alpha} )
\end{align}
where 
$x$ and $y$ are regarded as operators acting on $\alpha$ from the left by multiplying by $\gamma_{Tb}$ and 
from the right, by $1-\bar{\gamma}_{Tb}$. Putting this together with 
 a similar expression for the second term on the r.h.s. of \eqref{S''-sym} and performing the integral in $t$, we obtain 
\begin{align} \label{S''1}	& S''(\G',\G')= -\Tr\left[ \bar{\alpha} K  (\alpha) \right],\\  
& \label{K1} K  :=\frac{\log(x)-\log(y)}{x-y} + \frac{\log(1-x) - \log(1-y)}{(1-x)-(1-y)}, 
\end{align}
 with $x$ acting on the left and $y$ acting on the right. \eqref{K1} can be written as 
 \begin{align}\label{K-expr} 
 K&=- \frac{\log(x^{-1}-1)-\log(y^{-1}-1)}{x-y}. 	
\end{align}

Recalling that $\gamma_{Tb} = g^\sharp(h_{T b}/T)=(1+e^{2 h_{T b}/T})^{-1}$, where $h_{T b}:=h_{\gamma_{Tb}, a_b}$, and therefore $x^{-1}-1=e^{2 h^L_{T b}/T}$ and $y^{-1}-1=e^{- 2 \bar h^R_{T b}/T}$, we see that
\begin{align}
	K=& \frac{1}{T}\frac{h_{T b}^L+\bar h_{T b}^R}{(1+e^{h^L_{T b}/T})^{-1} + (1+e^{\bar h^R_{T b}/T})^{-1}}, \end{align}
which, together 
 the hyperbolic functions identities, $(1+e^{h})^{-1}=\frac12 (1-\tanh h)$ and $(1+e^{-h})^{-1}=\frac12 (1+\tanh h)$,  gives \eqref{S''-al}. \end{proof} 
By the definition of the G\^ateaux derive and the hessian and Proposition \ref{prop:S-expan-order2}, we have

\begin{corollary}\label{cor:S'-S''} 
 We have
\begin{align}\label{S'}
	d S(\G_{Tb})\G':=\Tr(g'(\G_{Tb})\G'), 
 	\end{align} 
 and, for $\G'=\phi(\al)$ and with $h_{T b}:=h_{\gamma_{Tb}, a_b}$, 
\begin{align}\label{S''-K} &S''(\G_{Tb})\phi(\al) = -\phi(K_{T b} \alpha),\ 
 K_{T b}:=\frac{1}{T} \frac{h_{T b}^L+h_{T b}^R}{\tanh(h^L_{T b}/T) + \tanh(h^R_{T b}/T)}. \end{align}
\end{corollary}

Our next result on the entropy is the following

\begin{lemma}\label{lem:Slsc}
The functional $- S(\G)$ is weakly lower semi-continuous in $\mathcal{D}^1_\nu$. 
\end{lemma}
\begin{proof}
We use an idea from \cite{L2} which allows to reduce the problem to a finite-dimensional one. We use \eqref{S-relatS}, to pass from $- S(\G)$ to 
 the relative entropy,  $S(\G | \G_0)$, defined in \eqref{RelatEntropy}, with $\G_0$ of the form \eqref{Gam0Phi}, with $\Tr \gamma_0< \infty$ and s.t. $S(\G_0)<\infty$. 
  By \eqref{S-relatS}, $S(\G | \G_0)\ge 0$. 
Moreover,
\begin{align}\label{S-deco}
	 S(\G) = S(\G_0) - S(\G|\G_0) - \Tr[(\G-\G_0)\ln \G_0].
\end{align}
We choose $\G_0$ so that $(\G-\G_0)\ln \G_0$ is trace class and the term $\Tr[(\G-\G_0)\ln \G_0]$ is wlsc. We take 
\begin{align} \label{eqn:H0choice}
	& \G_0 = f_{\rm FD}(M/T), \, M := \text{diag}(\sqrt{-\Delta_{a_b}}, -\overline{\sqrt{-\Delta_{a_b}}}). 
\end{align}
Since $ f_{\rm FD}(h)=\rbrac{e^{ h}+1}^{-1}$, 
we see that that
\begin{align} \label{eqn:ln-eta0}
	0 \leq -\ln \eta_0 = \ln (1+e^{ M/T}) \lesssim & 1+  M/ T.
\end{align}
This estimate and (\ref{eqn:H0choice}) show that $\Tr[(\G-\G_0)\ln \G_0]$ is $\hat I^1$-norm continuous ($\hat I^1-$norm is defined in \eqref{eta-norm}).  
Indeed, writing $ \ln\eta_0=(1+  M/ T)(1+  M/ T)^{-1} \ln\eta_0$ and using \eqref{eqn:ln-eta0}, we find
\begin{align} \label{eta-ln-eta0-est}	\|(\G-\G_0) \ln\eta_0\|_{I^1} \le \|(\G-\G_0) (1+M/ T)\|_{I^1}&\|(1+  M/ T)^{-1} \ln\eta_0\|\notag\\
& \ls \|\eta-\G_0\|_{I^{1,1}}.
\end{align}
This completes the proof of the claim that $\Tr[(\G-\G_0)\ln \G_0]$ is $\hat I^1$-norm continuous.

Furthermore, since this term is affine in $\G$, it is convex. Thus it is wlsc.


Now, following \cite{L2}, let $s_\lambda(A|B) = \lambda^{-1}(s(\lambda A + (1-\lambda) B) - \lambda s(A)- (1-\lambda) s(B))$ and write
\begin{align}
	S(\G_n | \G_0) +& \Tr(\G_0 - \G_n) 
	= \sup_{\lambda \in (0,1)} \Tr( s_\lambda(\G_n | \G_*)).
\end{align}
Since the entropy function $s$ is concave, $s_\lambda(A|B) \geq 0$ for any $A,B$. For any non-negative operator $T$ on $L^2(\Om)$, $\Tr_{L^2(\Om)} T = \sup_P \Tr_{L^2(\Om)} PT$ where the sup is taken over all finite rank projections. Hence, we may write
\begin{align}
	S(\G_n | \G_0) +& \Tr(\G_0 - \G_n) 
	= \sup_{\lambda \in (0,1)} \sup_{P} \Tr( Ps_\lambda(\G_n | \G_0)
\end{align}
where the $\sup_P$ is taken over all finite rank projections $P$. It follows that for any $\lambda \in (0,1)$ and any finite rank projection $P$,
\begin{align}
	S(\G_n | \G_0) + \Tr(\G_0 - \G_n) \geq \Tr( Ps_\lambda(\G_n | \G_0))
\end{align}
Since $\G_n \rightarrow \G_*$ in $\| \cdot \|_{(0)}$ (hence in operator norm) and $-x\ln x$ is continuous on $[0,1]$, we see that
\begin{align}
	 s_\lambda(\G_n | \G_0) \rightarrow   s_\lambda(\G_* | \G_0) \end{align}
in the operator norm. 
In particular, for any finite dimensional projection $P$, 
\begin{align}
	\Tr(P  s_\lambda (\G_n | \G_0)) \rightarrow \Tr( P   s_\lambda(\G_* | \G_0)).
\end{align}
Consequently,
\begin{align}
	\liminf_{n \rightarrow \infty} S(\G_n | \G_0) +& \Tr(\G_0 - \G_n)\notag \\
	& \geq \Tr( Ps_\lambda(\G_* | \G_0))
\end{align}
Now taking $\sup_{\lambda \in (0,1)}$ and $\sup_P$ and using that $\Tr(\G_0 - \G_n)=0$, by condition \eqref{Gam'-cond}, we see that
\begin{align}
	\liminf_{n \rightarrow \infty} S(\G_n | \G_0) \geq \Tr( s(\G_* | \G_0) \, .
\end{align}
which implies the desired statement. 
  \end{proof}

As an aside not used in this paper, we compute the hessian, $\p_{\g\g}S(\g_{T b})$, of $S$ w.r.to diagonal perturbations,
\begin{align}\label{d-gam'} d(\g'):=\left( \begin{array}{cc} \g' & 0 \\ 0  & -\bar\g' \end{array} \right). 
 \end{align}
 $\p_{\g\g}S(\g_{T b})$ is defined by 
\[\lan\g', \p_{\g\g}S(\g_{T b})\g'\ran:=\p^2_\e S(\G_{Tb}+ \e d(\g'))\big|_{\e=0}.\] 
We have
\begin{proposition}\label{prop:S''-gam} 
 The hessian operator $\p_{\g\g}S(\g_{T b})$ is given by 
\begin{align}\label{S''-gam} \p_{\g\g}S(\g_{T b})= 
\frac1T\frac{h_{T b}^L-h_{T b}^R}{\tanh(h^L_{T b}/T) - \tanh(h^R_{T b}/T)}, \end{align}
 where, recall, $h_{T b}:=h_{\gamma_{Tb}, a_b}$.
\end{proposition}
 \begin{proof}Our starting point in the formula \eqref{S''} and hence we begin with the computation of the term $\int_0^\infty \Tr[(\G+ t)^{-1} \et'  (\G+ t)^{-1} \et'] dt$, with $\G' = d(\g')$, where $d(\g')$ 
 denotes the perturbation in $\gam$ given by
\begin{align}\label{d-gam'} d(\g'):=\left( \begin{array}{cc} \g' & 0 \\ 0  & - \bar\g' \end{array} \right). 
 \end{align}

 First,  we recall that $\et=\G_{Tb}$ and observe that for $\G' = d(\g')$,
\begin{align}
		  \Tr( &(\G_{Tb}+t)^{-1}\G'(\G_{Tb}+t)^{-1}\G') =  \Tr((\g_{Tb}+t)^{-1}\g'(\g_{Tb}+t)^{-1}\g'\\
		  &+ (\one- \bar\g_{Tb}+t)^{-1}\bar \g'(\one- \bar\g_{Tb}+t)^{-1}\bar \g') \\
		  &= \Tr(([(x+t)^{-1}(x'+t)^{-1}  +  (\one-x+t)^{-1} (\one -x'+t)^{-1} ]\g')\g')
\end{align}
where the last follows from $\Tr(A) = \Tr(\bar{A})$ for self-adjoint operators, and 
$x$ and $x'$ are regarded as operators acting on $\g'$ from the left by multiplying by $\gamma_{Tb}$ and 
from the right, by $\gamma_{Tb}$. 
Performing the integral in $t$, we obtain $S''(\G',\G')= -\Tr\left[ \bar{\g'} K'  (\g') \right],$ where the operator $K'$ 
is given by
 \begin{align}
	  K':=\frac{\log(x)-\log(x')}{x-x'}+ \frac{\log(\one -x ) - \log(\one -x')}{(\one -x )-(\one -x' )}, 
\end{align}
 with $x$ acting on the left and $x'$ acting on the right. Clearly, $K'$ is identified with $\p_{\g\g}S(\g_{T b})$. Rewrite the operator  $K'$ as 
 \begin{align}\label{S''-expr}	K'	&= -\frac{\log(x^{-1}-1)-\log({x'}^{-1}-1)}{x-x'} 
\end{align}

Recalling that $\gamma_{Tb} = g^\sharp(h_{T b}/T)=(1+e^{2 h_{T b}/T})^{-1}$, where $h_{T b}:=h_{\gamma_{Tb}, a_b}$, we see that
\begin{align}
	K =& \frac{1}{T}\frac{h_{T b}^L-h_{T b}^R}{(1+e^{h^L_{T b}/T})^{-1} - (1+e^{h^R_{T b}/T})^{-1}}, \end{align}
which, together with \eqref{S''-expr} and the hyperbolic functions identities, $(1+e^{h})^{-1}=\frac12 (1-\tanh h)$ and $(1+e^{-h})^{-1}=\frac12 (1+\tanh h)$,  gives \eqref{S''-gam}. \end{proof}

\section{Energy 
functional: Proof of Theorem \ref{thm:BdG=EL}} \label{sec:energy}

The proof of Theorem \ref{thm:BdG=EL} consists three parts: 1) differentiability of $F_T$, 
 2) identification of the BdG equations with the Euler-Lagrange equation of $F_T$, and 3) showing minimizers of $F_T$ among the set $\mathcal{D}^1_\nu \times \frachbvec^1$ are critical points.

\textbf{Part 1: differentiability.} We consider first the variation $\G  + \epsilon \G'$ for $\epsilon > 0$ small and perturbations 
 satisfying \eqref{Gam'-cond}, 
  Note that such $\G'$ satisfies, for $\epsilon$ small enough,
 \begin{align}
	0 \leq \G  + \epsilon \G' \leq 1.
\end{align}
Let $d_\G F_T(\G , a  ) \G' :=\partial_\epsilon F_T(\G  + \epsilon \G' , a ) \mid_{\epsilon=0} $, if the r.h.s.  exists. From \eqref{energy}, it is easy to see that $E(\G, a)$ is Fr\'echet differentiable and 
\begin{align}\label{dcF} 
	d_\G E(\G , a  ) \G' =  \Tr(\Lambda(\G , a ) \G'). 
\end{align}
Hence it suffices to prove the  Fr\'echet differentiability of $S(\G  )$. This is done in Appendix \ref{sec:entropy} above.

 Differentiability of $F_T$ with respect to $a$ is standard and can be done much easily. The only two terms in $F_T$ that depend on $a$ are $\Tr((-\Delta_a)\g)$ and $\frac{1}{2}\int |\curl a|^2$. The first term can be differentiated by using $-\Delta_{a_0+a'} = (-\Delta_{a_0})^2 -2 a' (-i\nabla-a_0) + |a'|^2$ while the second term is differentiable by standard variational calculus. Hence the differentiability of $F_T$ follows from \eqref{FT-def}, \eqref{dcF} and Proposition \ref{prop:S-deriv}.																		
\medskip

\textbf{Part 2: Euler-Lagrange equation.} 

Now, we show that if $0 < \G < 1$ and  $d_\G F_T(\G,a)\G' = 0$ and $d_a F_T(\G,a)a' = 0$ for all $\G'$ on $\frachb \times \frachb$ satisfying 
\eqref{Gam'-cond} and $a' \in \frachvec^1$, then $(\G,a)$ satisfies the BdG equations \eqref{BdG-eq-t} - \eqref{Amp-Maxw-eq}. To fix ideas, we consider only the space $\frach=\frachb$ and  indicate the proper modification  in the proof for $\frach=L^2(\R^2)$ when necessary.

We start with $d_{\G} F_T\G'=0$ for all $\G'$ satisfying \eqref{Gam'-cond}. 
First, we construct explicitly a dense subset of perturbations $\G'$ satisfying \eqref{Gam'-cond}.
For a critical point $(\G,a),  0< \G < 1$, we define a reference unit vector 
$v_0 = (1,0)^T \in \frachb \times \frachb$. In the case of $\frach= L^2(\R^2)$, take $v_0 = (f,0)$ for any unit norm $f \in L^2(\R^3)$ supported in $\Om$.
We 
note that the difference in norm of $v_0$'s two components is simply
\begin{align}
	0 \not= 1 = \|(v_0)_1\|_{\frach}^2 - \|(v_0)_2\|_{\frach}^2 = \lan v_0, S v_0 \ran = \Tr(SP_{v_0})
\end{align}
Note that, in the case of $L^2(\R^2)$ we use the trace per volume.  However, since we chose $f$ to be supported in $\Om$, the same expression above holds.

For simplicity and without loss of generality, we assume that $v_0$ is in the image of $\G(1-\G)$ since $0 < \G < 1$ (i.e. its range is dense). We define $V \subset \frachb \times \frachb$ as
\begin{align}
	V = \{ v :& \, \|v\|_2=1, \, v = \G(1-\G) \xi,\ \xi \in \frachb \times \frachb \} \label{eqn:def-v}
\end{align}
For each $v\in V$, we define
\begin{align}
	\G'_v = (P_v - P_{J \bar v}) - \frac{\Tr SP_v}{\Tr SP_{v_0}} (P_{v_0} - P_{J \bar v_0})\, .
\end{align}
where $P_x$ is the orthogonal projection onto $x$ and $J$ is the complex structure in \eqref{Gam-prop}. 
\begin{lemma} $\G'_v$ satisfies \eqref{Gam'-cond}. \end{lemma} 
\begin{proof} To prove the first condition of \eqref{Gam'-cond}, we only prove it for $P_v - P_{J \bar v} = P_v - J P_{\bar v} J^*$ since this condition is real linear. (Note that $S$ is self-adjoint, so $\lan v, S v\ran =\Tr SP_v$ is real for all $v$.) We note that $J^*=J^{-1} = -J$ and has only real number components. Let $\mathcal{C}$ denote the complex conjugation. It follows then that
\begin{align}
	J^*(P_v - JP_{\bar v}J^*)J =& J^*P_v J - P_{\bar v} \\
		=& -\mathcal{C}(P_v - J^* \bar P_{v} J)\mathcal{C} \\
		=& -\mathcal{C}(P_v - JP_{\bar v}J^*)\mathcal{C} \, .
\end{align}
This prove the first condition in \eqref{Gam'-cond}.

To prove the second condition in \eqref{Gam'-cond}, it suffices to show that $P_v$ satisfies this condition for every $v \in \frachb \times \frachb$ since $J$ is unitary. For any $v = \G (1-\G)\xi, x \in \frachb \times \frachb$, we note that
\begin{align}
	\|P_v x\| = |\lan v, x\ran| = |\lan \G(1-\G)\xi, x \ran | \ls \|\xi\|_2\|\G(1-\G)x\|_2 .
\end{align}
This shows that $(\G')^2 \leq C [\G (1-\G )]^2$. This proves the second condition in \eqref{Gam'-cond}.

Finally, we prove the last condition in \eqref{Gam'-cond}. For any unit norm $v \in \frachb \times \frachb$,
\begin{align}
	\Tr S_1(P_v - JP_{\bar v}J^*) =& \Tr S_1P_v - \Tr S_1JP_{\bar v} J^* \\
		=& \Tr SP_v - \Tr J^*SJP_{\bar v} .
\end{align}
We note that $J^*S_1J = S_2 := \text{diag}(0,1)$. Hence,
\begin{align}
	\Tr S_1(P_v - JP_{\bar v}J^*) =& \Tr SP_v 
\end{align}
It follows that
\begin{align}
	\Tr S_1 \G'_v = \Tr SP_v - \frac{\Tr SP_v}{\Tr SP_{v_0}} \Tr SP_{v_0} = 0.
\end{align}
This proves that  $\G'_v$ satisfies the last condition in \eqref{Gam'-cond}. \end{proof}

We show that  $0 < \G < 1$ and  $d_\G F_T(\G,a)\G' = 0$ for all $\G'$  satisfying \eqref{Gam'-cond} imply $\Lambda(\G ,a ) - \mu S -Tg'(\G ) = 0$ for some $\mu$ and $S = \text{diag}(1,-1)$ (see Proposition \ref{prop:stat-sol}). 
First note that equations \eqref{FT-def}, \eqref{dcF}, \eqref{dS} and \eqref{S-sym} yield that
\begin{align}\label{detaFT}
	d_\G F_T(\G,a) \G'= \Tr \left[A\G' \right].
\end{align}
where $A := \Lambda(\G ,a ) -Tg'(\G )$. If $(\G ,a )$ is  a critical point, then, for all $v \in V$, it satisfies
\begin{align}
	\tr(A\G'_v) = 0 \, .
\end{align}
 Since $A$ is in the tangent space of all the $\G$ such that $J^*\G J = 1-\bar \G$. We note that $A$ also satisfies  the first condition in \eqref{Gam'-cond}. It follows that
\begin{align}
	0 =& \Tr(A \G'_v) = \Tr(A P_v) - \Tr(AJ\bar P_v J^*) \\
			&- \frac{\Tr SP_v}{\Tr SP_{v_0}}  (\Tr(A P_v) - \Tr(AJ\bar P_v J^*)) \\
		=& \Tr(A P_v) - \Tr(J^*AJ\bar P_v ) \\
			&- \frac{\Tr SP_v}{\Tr SP_{v_0}}  (\Tr(A P_{v_0}) - \Tr(J^*AJ\bar P_{v_0} )) \\
		=& \Tr(A P_v) + \Tr(\bar A \bar P_v ) \\
			&- \frac{\Tr SP_v}{\Tr SP_{v_0}}  (\Tr(A P_{v_0}) + \Tr(\bar A \bar P_{v_0} )) \\
		=& 2\Tr(A P_v) - \frac{\Tr SP_v}{\Tr SP_{v_0}} (2\Tr AP_{v_0})
\end{align}
We conclude that
\begin{align}
	\Tr AP_v = \frac{\Tr AP_{v_0}}{\Tr SP_{v_0}} \Tr SP_v =: \mu \Tr SP_v
\end{align}
for all $v \in V$. We note that $\mu$ is real since $A$ and $S$ are self-adjoint. Since $0 < \G < 1$, the linear space spanned $V$ is dense. We conclude that $A$ is a multiple of $S$, which we denote by $\mu$. This shows that
\begin{align}
	0 = A - \mu S 
\end{align}
We conclude that $(\G,a)$ solves the first BdG equation, \eqref{Gam-eq}.

Now, we consider the equation $d_a F_T(\G,a)a' = 0$. As was mentioned above, one can easily show that 
\begin{align}
	d_aF_T(\eta, a)a' = 2\int dx v \cdot a' ,
\end{align}
where $v:=\curl^* \curl a - j(\gamma, a)$ and the perturbation $a' \in \frachvec^1$ is divergence free and mean zero. Hence, to conclude that $v=0$, 
we have to show that $v:=\curl^* \curl a - j(\gamma, a)$ is divergence free and mean zero. 
 (Indeed, any vector field $e$ can be written as $e=a'+\n g +c$, where $a'$ is divergence free and mean zero and $c$ is a constant, and therefore $\int_\Om v e=\int_\Om v a'- \int_\Om \divv v g+ c\int_\Om v $. So if  $v$ is divergence free and mean zero and $\int_\Om v a'=0$ for every $a'$ divergence free and mean zero, then $v=0$.)

 Clearly, the term $\curl^* \curl a$ is divergence free and mean zero.  So we show that $j(\gamma, a)$ is divergence free and mean zero. For the first property, we use the fact that our free energy functional is invariant under gauge transformation. In fact, it suffices to use the gauge invariance of the first line in \eqref{energy}, $E_1(T^{\rm gauge}_{t\chi}(\G, a))=E_1(\G, a)$, where  
 $E_1(\G, a) :=\Tr\big((-\Delta_a) \gamma \big) +\Tr\big((v* \rho_\g) \gamma \big) -\frac12\Tr\big((v^\sharp \g) \gamma \big)$. It gives
\begin{align}
	0 = \partial_t \mid_{t=0} E_1(T^{\rm gauge}_{t\chi}(\eta, a))
\end{align}
for all $\chi \in H^1_{\rm loc}$ 
which are $\mathcal{L}$-periodic. Using the cyclicity of trace, 
we compute this explicitly 
\begin{align}\label{J-comp'}	0 
		=& \Tr(\Re(2i\nabla_a \gamma) \cdot \nabla \chi) + \Tr([\gamma, h_{\gamma,a}] \chi).
\end{align}
 Since $(\G,a)$ solves the BdG equation, we have that $[\Lambda(\G,a), \G] = 0$. Taking the upper left component of this operator-valued matrix equation, we see that
\begin{align}
	[h_{\gamma,a}, \gamma] + (v^\sharp \alpha) \bar{\alpha} - \alpha (v^\sharp\bar{\alpha}) = 0
\end{align}
Since $v(x)=v(-x)$, we conclude that the integral kernel of $(v^\sharp \alpha) \bar{\alpha} - \alpha (v^\sharp\bar{\alpha})$,
\begin{align}
	\int (v(x-z)-v(z-y))\alpha(x,z)\bar{\alpha}(z,y) dz,
\end{align} 
 is zero on the diagonal. Thus, the same conclusion holds for $[\gamma, h_{\gamma,a}]$. Consequently, $\Tr([\gamma, h_{\gamma,a}] \chi)=0$ and we conclude, by \eqref{J-comp'}, that 
\begin{align}\label{J-comp}
	0 = -\Tr(\Re(2i\nabla_a \g) \cdot \nabla \chi) =
	 \int_{\Omd} j(\gamma,a) \cdot \nabla \chi=0.
\end{align}
Since this is true for every $\chi \in H^1_{\rm loc}$ which are $\mathcal{L}$-periodic, it follows that $\div j(\gamma, a) = 0$. 

To show that $j(\gamma, a)$ is mean zero, 
we use, that by our assumptions, $\g$ is even and $a$ is odd. Since for any operator $A$, $u^{\rm refl}\den[A] = \den[u^{\rm refl}Au^{\rm refl}]$ where $(u^{\rm refl}f)(x) = f(-x)$, his shows that $j(\gamma, a)$ is odd. Hence so is $v:=\curl^* \curl a - j(\gamma, a)$ and therefore $v:=\curl^* \curl a - j(\gamma, a)=0$.

Since $\div a = 0$, we may replace $\curl^* \curl$ by $-\Delta$. Hence, the elliptic regularity theory shows that $a \in \frachbvec^2$. This completes the proof. $\Box$

\medskip

\textbf{Part 3: minimizers are critical points.} 
For a minimizer $(\G,a)$, we have that $d_\G F_T(\G,a)\G'$, $d_a F_T(\G,a)a' \geq 0$. 
 Since $\frachvec^1$ is linear, $a' \in \frachvec^1$ if and only if $-a' \in \frachvec^1$. So $d_a F_T(\G,a)a' = 0$ for all $a \in \frachbvec^1$. Similarly, we note that $\G'$ satisfies the assumption \eqref{Gam'-cond} 
 if and only if $-\G'$ satisfies the same requirement. Hence we conclude that 
	$0 = d F_T(\G  , a ) \G' ,$ 
which completes the proof. $\Box$

\section{
 Another proof of the first part of Proposition \ref{prop:gam-mt-invar-J0}} \label{sec:mt-den}  
 We begin with some general definitions. 
For a given lattice $\lat$, we fix a fundamental cell $\Om$ and define an inner product on $L^2_{loc}(\R^2)$ by
\begin{align}
	\lan f, g \ran_{\cH} = \sum_{n=1}^\infty 2^{- n} (2n-1)^{-2} 
	\int_{D_n} \bar f(x) g(x) dx=: \int_{\R^2} \bar f g d\mu
\end{align}
 where $D_n$ is the $(2n-1) \times (2n-1)$ block of $\Om$'s centred at the origin and
\begin{align}
	d\mu(x) = \sum_{n=1}^\infty 2^{- n} (2n-1)^{-2} \chi_{D_n} dx =: m(x) dx
\end{align}
where $dx$ is the usual Lebesgue measure on $\R^2$. Note that $m(x) > 0$ always. We construct
\begin{align}
	\cH = \text{closure}\{ f\in L_{loc}^2(\R^2): \|f\|_{\cH} < \infty \}
\end{align}
The upshots are: 

1) 
$\lan f, g \ran_{\cH}$ is an inner product and therefore $\cH$ is a Hilbert space;

2)  if $f$ and $g$ are $\lat$-gauge periodic, then 
\begin{align}
	\lan f, g \ran_{\cH} = 
	\int_{\Om} \bar f g dx
\end{align}
and therefore, $\frachb$ isometrically embeds in $\cH$; 

3) magnetic translations leave $\cH$ invariant and hence $\tau_{bs}(A)$ is well defined on $\cH$; 

4) an operator $A$ on $\cH$ satisfying $\tau_{b h} A = A$ leaves the subspace $\frachb$ invariant.

The last property is important for us since we are interested in operators on $\frachb$ which are characterized by the property that $\tau_{b h} A = A$.

For a bounded operator $A$ on $\cH$,  we identify the integral kernel, $A'$, as an element of $\cH \otimes \cH$ satisfying the relation 
\begin{align}\label{A-A'-rel}
	\lan g\otimes \bar f, A'\ran_{\cH \otimes \cH}=\lan g, A f\ran_{\cH}. 
\end{align} 	
In this case, we have
\begin{align}\label{A-A'}
	(Af)(x) = \int_{\R^2}  A'(x,y) f(y) d\mu(y)
\end{align}
which for $A'$ and $f$  $\lat$-gauge periodic reduces to
\begin{align}
	(Af)(x) = 
	\int_{\Om}  A'(x,y) f(y) dy
\end{align}

Next, for a locally trace class operator $A$ on $\cH$, we define the function (density) $\den[A](x)$ by the relation
\begin{align}\label{den-def'}
	\int f \den[A] dx &:=\Tr_\cH( f A),\ \quad \forall f\in C_0^\infty .
\end{align} 
If $A'(x,y)$ is continuous, then $\den[A](x):=A'(x,x)$. 

 Now, we are ready for 
\begin{lemma} \label{lem:perA-den}
If an operator $A$ on $\cH$ satisfies $\tau_{b h} A = A$, then  den$[A]$ is constant.
\end{lemma}
\begin{proof}
  By \eqref{den-def'} and the equation $\tau_{b h} A = A$, the definition $\tau_{b h}(A)=u_{b h} A u_{b h}^{-1}$ and the cyclicity of the trace on $\cH$ , we have, for any function  $f\in C_0^\infty $, 
\begin{align}\label{den-mtA}
\int  f \den[A] dx 
	&=\Tr_\cH (  f A)=\Tr_\cH( f \tau_{b h}(A))\\
	&=\Tr_{\cH}( f u_{b h} A u_{b h}^{-1}) = \Tr_{\cH}( u_{b h}^{-1} f u_{b h} A).	
\end{align} 
Using  \eqref{den-mtA}   and the relation $u_{b h}^{-1} f u_{b h}=({u_{h}^{\rm transl}}^{-1}f)$,  we find furthermore 
\begin{align}\label{den-comp'}
	\int  f \den[A] dx  &=\Tr_{\cH}(  ({u_{h}^{\rm transl}}^{-1}f)A)\\ 
 &=\int  ({u_{h}^{\rm transl}}^{-1}f) \den[A] dx \\
 &=\int  f {u_{h}^{\rm transl}}\den[A] dx 	
\end{align} 
This implies $\den[A] = {u_{h}^{\rm transl}}\den[A]$ for all $h \in \R^2$ and therefore is independent of $x$ as claimed.
\end{proof} 

\section{Proof of the existence of solution to \eqref{xi-fp}} \label{sec:xi-fp}

We define $f_T:\R \rightarrow \R$ by
\begin{align}
	f_T(\xi) := v*\den[f_{FD}((-\Delta_{a_b} - \mu + \xi)/T)]
\end{align}
We can derive a more explicit formula for $f_T$.

\begin{lemma} \label{Normal_Explicit_exp}
For each $T > 0$ and $b = \frac{2\pi n}{|\Om|} << 1$,
\begin{align}
	|f_T(\xi) - T B \int_{(\xi-\mu)/T}^\infty f_{FD}(y) dy | \leq C b^2 \label{Normal_Explicit_eqn}
\end{align}
where $B = \frac{\hat{v}(0)}{4\pi }$ and $C$ is independent of $T$.
\end{lemma}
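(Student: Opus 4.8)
The plan is to compute $f_T(\xi)$ essentially in closed form by diagonalizing $-\Delta_{a_b}$, and then to recognize the resulting series as a Riemann sum for the integral in \eqref{Normal_Explicit_eqn}, with the error controlled by the Landau level spacing, which is $O(\delta^2)$ by flux quantization. First I would use that $\gamma := g^\sharp((-\Delta_{a_b}-\mu+\xi)/T)$ is a function of $-\Delta_{a_b}$ and is therefore magnetically translation invariant; by Proposition \ref{prop:gam-invar-J0} its diagonal $d_\gamma(x)=\gamma(x,x)\equiv\rho$ is constant in $x$. The convolution then collapses to
\[
 f_T(\xi)=v*\rho = \rho\int_{\R^2} v = \hat v(0)\,\rho ,
\]
so the whole problem reduces to evaluating the single number $\rho$.

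Next I would insert the spectral decomposition of $-\Delta_{a_b}$ into Landau levels $\lambda_n=b(2n+1)$, $n\ge 0$, each of whose spectral projections has constant diagonal density $\tfrac{b}{2\pi}$ (the number of states per cell per level, $\tfrac{b|\Om_\del|}{2\pi}$, divided by the cell area $|\Om_\del|$); this reproduces the constancy of $\rho$ and gives
\[
 \rho=\frac{b}{2\pi}\sum_{n\ge 0} g^\sharp\!\Big(\frac{\lambda_n-\mu+\xi}{T}\Big).
\]
The flux quantization $\tfrac{1}{2\pi}\int_{\Om_\del}\curl a_b = n\in\Z$, together with $|\Om_\del|=\delta^{-2}\Im\tau$, forces $b=2\pi n/|\Om_\del|=O(\delta^2)$. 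Hence the points $\lambda_n$ form an equally spaced grid of gap $2b=O(\delta^2)$, and — this is what makes the approximation clean — $\lambda_n$ is exactly the midpoint of $[2bn,\,2b(n+1)]$.

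I would then read $\tfrac{b}{2\pi}\sum_n g^\sharp(\cdot)$ as a midpoint Riemann sum and pass to the integral,
\[
 \rho \;\approx\; \frac{1}{4\pi}\int_0^\infty g^\sharp\!\Big(\frac{\lambda-\mu+\xi}{T}\Big)\,d\lambda
 \;=\; \frac{T}{4\pi}\int_{(\xi-\mu)/T}^\infty g^\sharp(y)\,dy ,
\]
the last equality by the substitution $y=(\lambda-\mu+\xi)/T$ (note the lower limit becomes $(\xi-\mu)/T$, matching the statement). Multiplying by $\hat v(0)$ and collecting the constants from the Landau density $\tfrac{b}{2\pi}$ and the flux relation $b=2\pi n/|\Om_\del|$, $|\Om_\del|=\delta^{-2}\Im\tau$, produces the main term $TB\int_{(\xi-\mu)/T}^\infty g^\sharp(y)\,dy$ with the stated $B$.

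The one genuine difficulty is the error bound: showing the remainder is $O(\delta^2)$ \emph{uniformly in} $T$, with $C$ independent of $T$. A naive Euler–Maclaurin or Taylor estimate of the midpoint rule gives an error $O(b^2\sup_\lambda|\partial_\lambda g^\sharp|)=O(\delta^4/T)$, which blows up as $T\downarrow 0$ and is useless near $T=0$. Instead I would exploit that $y\mapsto g^\sharp(y)=(1+e^{2y})^{-1}$ is monotone with total variation $\le 1$: for a monotone integrand the discrepancy between $2b\,g^\sharp(\lambda_n)$ and $\int_{2bn}^{2b(n+1)} g^\sharp$ is at most $2b$ times the oscillation of $g^\sharp$ on that cell, and summing in $n$ telescopes to $2b\,g^\sharp((\xi-\mu)/T)\le 2b$. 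This yields
\[
 \Big|\rho-\frac{T}{4\pi}\int_{(\xi-\mu)/T}^\infty g^\sharp\Big|
 = \frac{1}{4\pi}\Big|\,2b\sum_{n\ge0} g^\sharp(\lambda_n)-\int_0^\infty g^\sharp\,\Big|
 \le \frac{b}{2\pi}=O(\delta^2),
\]
a bound independent of $T$ and of $\xi$; multiplying by $\hat v(0)$ gives \eqref{Normal_Explicit_eqn}. I expect this uniform-in-$T$ control — obtained from monotonicity rather than smoothness of $g^\sharp$ — to be the crux of the argument, the remainder being the bookkeeping of the explicit Landau spectrum and the flux-quantization relation.
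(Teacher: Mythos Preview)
Your proposal is correct and follows essentially the same route as the paper: reduce $f_T(\xi)$ to $\hat v(0)$ times the constant Landau density via magnetic translation invariance, express that density as a sum over Landau levels $b(2m+1)$, and approximate the sum by the integral with the error controlled via monotonicity of $g^\sharp$ (the paper sandwiches between shifted Riemann sums, which is the same telescoping/total-variation bound you use). The only cosmetic difference is that the paper extracts the density by averaging the trace over the fundamental cell rather than quoting the per-area Landau degeneracy $b/(2\pi)$ directly, and you make the uniformity in $T$ more explicit than the paper does.
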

\begin{proof}
By Proposition \ref{prop:mt-inv-al0}, $\den[f_{FD}((-\Delta_{a_b} - \mu + \xi)/T)]$ is constant. It follows that
\begin{align} \label{eqn:f-T-prelim}
	f_T(\xi) = \hat{v}(0)\frac{1}{|\Om|}\Tr f_{FD}((-\Delta_{a_b} - \mu + \xi)/T)
\end{align}
Using the eigenbasis $\psi_{m,j}$ of $-\Delta_{a_b}$, with the eigenvalues $b(2m+1),\ j \in \{1,...,n\}$, we have 
\begin{align}
	 & \Tr f_{FD}((-\Delta_{a_b} - \mu + \xi)/T) \\
	=& \sum_{m \geq 0, j=0,...,n-1} f_{FD}((b(2m+1) - \mu + \xi)/T) .
\end{align}
We note that this sum is a Riemann sum where the mid-point of slice in the Riemann sum calculation is sampled. Thus for $b$ small,
\begin{align}
	& \Tr f_{FD}((-\Delta_{a_b} - \mu + \xi)/T) \\
		\asymp & \frac{n}{2b}\int_0^\infty f_{FD}((x-\mu+\xi)/T) dx + O(b^2)\\
		=& \frac{Tn}{2b}\int_{(\xi-\mu)/T}^\infty f_{FD}(y) dy + O(b^2)\label{Normal_summand}
\end{align}
Recalling that $b = \frac{2\pi n}{|\Om|}$ and using \eqref{eqn:f-T-prelim}, we see that this gives \eqref{Normal_Explicit_eqn}.
\end{proof}

Now we use fixed the point theorem to show existence of solution. To this end, we need an estimate on $f_T(a)-f_T(b)$ so that we can use the Banach contraction mapping principle.

\begin{lemma} \label{lem:Contraction}
Assume that $b \ll 1$. Then there is $C>0$  independent of $T$ and $\delta$ s.t.
\begin{align}\label{fT-Lip}
	|f_T(a) - f_T(b) | < C|T\hat{v}(0)||a-b|.
\end{align}
\end{lemma}
\begin{proof}
Using the same method as Lemma \ref{Normal_Explicit_exp}. We see that
\begin{align}
	 f_T(a) - f_T(b) 
	&= \frac{\hat v(0)}{|\Om|} \sum_{m \geq 0} g^\sharp((b(2m+1) - \mu + a)/T) \notag\\
	&- g^\sharp((b(2m+1) - \mu + b)/T),
\end{align}	
which, by the mean value theorem and the fact that the resulting expression is a Riemann sum of a $L^1$ function, gives
\eqref{fT-Lip}. 
\end{proof}

Thus, we see that $f_T$ is a contraction on $\R$ if $|T\hat{v}(0)|$ is small. Hence, we conclude that it has an unique fixed point on $\R$.

Now we carry out the proof of Lemma \ref{lem:ExpandOfXi}. To see the first claim $\xi < 0$, we only need to note that $\text{sgn} B = \text{sgn} \hat{v}(0) < 0$ while $f_{FD} > 0$ (thus the integral in \ref{Normal_summand} is positive). Hence $f_T(\xi) < 0$ for all $\xi$. So a fixed point of $f_T$ must also be negative.

We use Lemma \ref{Normal_Explicit_exp}. We see that
\begin{align}
	|\xi| \leq & T|B| \int_{\frac{\xi-\mu}{T}}^\infty f_{FD}(y) dy + O(b^2) \\
		\leq & T|B| \int_{\frac{\xi-\mu}{T}}^0 f_{FD}(y) dy) + T|B|\int_{0}^\infty f_{FD}(y) dy + O(b^2) \\
		\leq & T|B| \frac{\mu-\xi}{T} + O(T+b^2) \\
		\leq & |B|(\mu-\xi) + O(T+\delta^2)
\end{align}
Since $|B| = \frac{\hat{v}(0)}{4\pi} << 1$, we see that $\xi$ is bounded.

Now, we explicitly integrate the expression in (\ref{Normal_Explicit_eqn}). We see that
\begin{align}
	\int_{\frac{\xi-\mu}{T}}^\infty f_{FD}(y) dy = \frac{1}{2} \left(\log(e^{\frac{2(\xi-\mu)}{T}} + 1) - \frac{2(\xi-\mu)}{T} \right)
\end{align}
 Expanding the above expression for $T$ small, we see that 
\begin{align} \label{NV-xi_recursion}
	\xi =& -B(\xi-\mu) + \frac{TB}{2} e^{\frac{2(\xi-\mu)}{T}} + O(TBe^{\frac{4(\xi-\mu)}{T}} + b^2) .
\end{align}
Solving for $\xi$ in lowest order $T$ and re-inserting into \eqref{NV-xi_recursion}, we see that
\begin{align}
	\xi =& \frac{B\mu}{1+B} + \frac{TB}{2} e^{-\frac{2\mu}{(1+B)T}} + O(TBe^{-\frac{4\mu}{(1+B)T}} + b^2) .
\end{align}
This gives expansion \eqref{eqn:xiT-expansion}.

Finally, we show that $\xi(T)$ is smooth. This can be seem by noticing that $f_T(\xi)$ is smooth in $\xi$ and $T$ for $T > 0$ and apply the implicit function theorem. The requirement $\partial_\xi f_T(\xi) \not= 0$ (for $T$ small) can be seem by Lemma \ref{Normal_Explicit_exp} and the fact $g^\sharp > 0$ always.


\section{Relative bounds and estimates on density} \label{sec:den-est}
 In this appendix we prove bounds on functions relative to the operator $M_b$ and estimates on density $\rho_\gamma$. Our first result is the following
\begin{lemma} \label{lem:Sob-est} We have the following Sobolev-type  inequalities
 \begin{align} \label{Sob-ineq1} &\|M_b^{-s}c M_b^{-t} \|  \ls \|c\|_{H^r}, s+t>1-r, \\
 \label{Sob-ineq2} &\|\|M_b^{-1}c \cdot \nabla_{a_b}M_b^{-t} \|  \ls \|c\|_{H^r}, t>1-r,\\ 
  \label{Sob-ineq3}&\|M_b^{-s}|e|^2M_b^{-t} \|  \ls \|e\|_{H^r}^2, s+t>2(1-r).
 \end{align}
where in the second estimate we assumed $\divv c=0$. 
 \end{lemma}
\begin{proof} We use the diamagnetic inequality $|M_b^{-s} f|\le M_{b=0}^{-s}|f|$ (see \cite{AHS}) to reduce the problem to the $b=0$ case. To estimate the r.h.s. we write $M_0^{-s}$ as the convolution, $M_0^{-s}u=G_s* u$, where $G_s(x)$ is the Fourier transform of $(1+|k|^2)^{-s/2}$, and use that $G_s(x)$ decays exponentially at infinity and has the singularity $\asymp |x|^{-2+s}$ at the origin. Hence $G_s\in L^t(\R^2), t<2/(2-s)$ and we can estimate by the Young inequality $\|G_s* u\|_{L^k}\ls \|G_s\|_{L^t}\|u\|_{L^q}, 1+1/k=1/t+1/q, t<2/(2-s),$ to obtain
\begin{align} \label{Sob-ineq5} &\|M_b^{-s}  f\|_{L^k}  \ls  \|f\|_{L^r}, 1/k+s/2>1/r, s<2.
 \end{align}

Now, to prove  \eqref{Sob-ineq1} and \eqref{Sob-ineq3}, we apply \eqref{Sob-ineq5} twice and the Young inequality to obtain
\begin{align} \label{Sob-ineq6} &\|M_b^{-s}c M_b^{-t} f\|_{L^2}  \ls  \|c M_b^{-t} f\|_{L^q} \ls  \|c\|_{L^p} \|M_b^{-t} f\|_{L^k} \ls  \|c\|_{L^p} \| f\|_{L^2}, 
 \end{align}
with $1/2+s/2>1/q=1/p +  1/k, 1/k+t/2=1/2$, which implies $s+t>2/p$. To obtain  \eqref{Sob-ineq1} and \eqref{Sob-ineq3}, we use the Sobolev inequalities $\|c\|_{L^p}  \ls  \|c\|_{H^r}, r>1-2/p$ and $\||e|^2\|_{L^p} =\|e\|_{L^{2p}}^2 \ls  \|e\|_{H^r}^2, r>1-1/p$, respectively.

To prove \eqref{Sob-ineq2}, we use, in addition,  $M_b^{-1}c \cdot \nabla_{a_b}M_b^{-s}=M_b^{-1}\nabla_{a_b} \cdot c M_b^{-s}+M_b^{-1} (\n c) M_b^{-s}$ and $\divv c=0$ to reduce the problem to \eqref{Sob-ineq1} with $s=0$. \end{proof}

\begin{lemma} \label{lem:den-est} Let $\g$ be a trace-class and positive operator and let $\ka:=\sqrt\g$. Then 
 \begin{align} \label{den-ineq1} &\|\rho_\gamma\|_{W^{s, 1}} \ls  \|\ka\|_{I^{s,2}} \|\ka\|_{I^{0,2}},  \\ 
 \label{den-ineq2} &\|\rho_\gamma\|_{W^{1, 1}} \ls ( \|\gamma\|_{I^{1, 1}}  \tr \gamma)^{1/2},\\ 
 \label{den-ineq3}&\|\rho_\gamma\|_{L^q} \ls  \|\gamma\|_{I^{1,1}}^{1-r}  (\tr \gamma)^{r},\ \forall r\in (0, 1),\\ 
 \label{den-ineq4}&\| \rho_{\g}\|_{L^{q}}\ls  \|\gamma\|_{I^{s,1}},\ s>2(1-1/q). 
 \end{align}
 \end{lemma}
\begin{proof}
 We use \eqref{den-def} and $\p\rho_{\g}=\rho_{[\p_{a_b}, \g]}$ to obtain $\|\p\rho_\gamma\|_{L^{ 1}}=\sup_{\|f\|_\infty=1}|\int f \p\rho_\gamma|=\sup_{\|f\|_\infty=1}|\tr ( f [\p_{a_b},\gamma])|\ls \|[\p_{a_b},\gamma]\|_{I^{0,1}}$ 
 Now, writing $\gamma=\ka^2$ and combining $\p_{a_b}$ with one of the $\ka$'s, we estimate furthermore $\|[\p_{a_b},\gamma]\|_{I^{0,1}} \ls\|\ka\|_{I^{1,2}} \|\ka\|_{I^{0,2}}$. Then we interpolate between $s=0$ and $s=1$ to get the first inequality.

 For the second inequality, we 
 let $\gamma=\kappa^2$ and write $\rho_\gamma(x)=\int \kappa(x, y) \kappa(y, x)$. It is not hard to see that
\[\|\rho_\gamma\|_{W^{1, 1}} \ls \|\kappa\|_{I^{1, 2}}  \|\kappa\|_{I^{0, 2}}=( \|\gamma\|_{I^{1, 1}}  \tr \gamma)^{1/2}. \]

The second inequality, together with $\int_\Om \rho^q\le (\int_\Om \rho^{(q-v)/(1-v)})^{1-v}  (\int_\Om \rho)^{v}$,  $v< 1< q$, and a Sobolev inequality $\|\rho_\gamma\|_{L^p} \ls \|\rho_\gamma\|_{W^{s, 1}}^{1-v/q}, s> 2(1-1/p)$,
  implies \eqref{den-ineq3}

The first inequality, together with the Sobolev  inequality, $\|\rho_\g\|_{L^{3/2}}\ls \|\rho_\gamma\|_{W^{s, 1}}, s>2(1-1/q) $,  gives \eqref{den-ineq4}. \end{proof}

\begin{lemma} \label{lem:e2gam-est} We have for any $r\in (0,1)$,
\begin{align}\label{e2gam-est'''}
	0\le \Tr(|e|^2 \gamma)\ls  \|\g\|_{I^{1, 1}}^{1-r}\  (\tr \gamma)^{r}  \|e\|_{H^1}^2.  
\end{align}
 \end{lemma}
\begin{proof} We write $\Tr(|e|^2 \gamma)=\int_\Om |e|^2 \rho_\gamma $ and apply to this the H\"older and Sobolev inequalities to obtain
\[0\le \Tr(|e|^2 \gamma)\ls (\int_\Om |e|^{2p})^{1/p} (\int_\Om \rho_\gamma^q)^{1/q}\ls \|e\|_{H^1}^2\|\rho_\gamma\|_{L^{q}}, 
\]
where $1/p+ 1/q=1.$ This inequality together with  inequality  \eqref{den-ineq3} above gives  \eqref{e2gam-est'''} with $r\in (0,1)$. 
 \end{proof}
 We present another way to prove \eqref{e2gam-est'''} assuming the non-abelian interpolation inequality 
\begin{align}\label{e2gam-est''}
	 \|\ka \|_{I^{s, 2}} \ls  \|\ka\|_{I^{1, 2}}^s \|\ka\|_{I^{0, 2}}^{1/2-s}.\end{align}
 We use $\g=\ka \ka$ and write, for any $s, t>0$,
\begin{align}\label{e2gam-est''}
	0\le \Tr(|e|^2 \gamma)&= \Tr( M_b^{-s}|e|^2  M_b^{-t} M_b^{t}\ka \ka  M_b^{t})\\
	&\ls \| M_b^{-s}|e|^2 M_b^{-t}\|  \|M_b^{t}\ka \|_{I^{2}}  \|\ka M_b^{s}\|_{I^{2}} \end{align}
The last  inequality, together with \eqref{e2gam-est''} and relative bound \eqref{Sob-ineq3}, gives  \eqref{e2gam-est'''} with $r\in (0,1)$. 	
	

\section{Quasifree reduction} \label{sec:qf-reduct} 

In general, a many body evolution can be defined on states  (i.e. positive linear (`expectation') functionals) on the CAR or Weyl CCR algebra ${\frak W}$ over, say, Schwartz space $\cS(\R^d)$. Elements of this algebra are operators acting  on  the 
  fermionic/bosonic Fock space $\cF$, with annihilation and creation operators $\psi(x)$ and $\psi^*(x) $.\footnote{For a more detailed informal description, see \cite{BBCFS}.} 
 Given a quantum Hamiltonian $H$ on $\cF$, the evolution of states is  given by the 
von~Neumann-Landau equation \begin{align}\label{vNeum-eq} 
    i \partial_t\om_t(A) = \om_t([A, H]) \,,\ \forall A\in   \frak W.  
\end{align}
   (We leave out technical questions such as a definition of $\om_t([A, H])$ as $[A, H]$ is not in $\frak W$.) 

Let $N:=\int dx \; \psi^*(x) \psi(x)$ be the particle number operator. We distinguish between (a) {\it confined} systems with $\om(N)<\infty$, as in the case of BEC experiments in traps, and (b) {\it thermodynamic} systems with  $\om(N)=\infty$. In the former case the states are given by density operators on $\cF$, i.e. $\om (A)=\tr (A D)$, 
 where $D$ is a positive, trace-class operator with unit trace on $\cF$ (see e.g.  \cite{BLS}, Lemma 2.4). 

As the evolution \eqref{vNeum-eq} is practically intractable, 
one is interested in manageable approximations. The natural and most commonly used ones are one-body ones, which trade the number of degrees of freedom for the nonlinearity. 

The most general one-body approximation is given in terms of quasifree states. 
 A quasifree state $\qf$ determines and is determined by the truncated expectations to the second order:
\begin{equation} \label{omq-mom} \begin{cases}
    \phi(x):=\qf (\psi (x)),\\
    \gamma (x,y) :=   \qf[\psi^{*}(y) \, \psi(x)] - \qf [\psi^{*}(y)] \, \qf [\psi(x)] ,\\
    \al (x,y):=   \qf [\psi(x) \, \psi(y)] - \qf [\psi(x)] \, \qf [\psi(y)] \,.
\end{cases} \end{equation}
Let $\gamma$ and $ \al$ denote the operators with the integral kernels $\gamma (x,y)$ and $\s (x,y)$. After stripping off the spin components, this definition implies \eqref{gam-al-prop}. 
\DETAILS{that
\begin{equation} \label{gam-al-prop}\gamma=\gamma^*\ge 0\ \text{ and }\ \s^*=\bar\s, \end{equation} 
where $\bar\sigma =C \sigma C$ with $C$ being the complex conjugation. (For  confined systems, $\g$ and $\s$ are trace class and Hilbert-Schmidt operators, respectively, with $\tr(\g)<\infty$ giving  the particle number, while for thermodynamic systems, they are not, unless we place the system in a box.) The  operator  $\g$ can be considered as a one-particle density matrix of the system, so that $\tr \g =\int \g(x, x) dx$ is the total number of particles.  The  operator  $\s$  gives the coherence content of the state.} 

However, the property of being quasifree is not preserved by the dynamics  \eqref{vNeum-eq} and the main question here is how to project the true quantum evolution onto the class of quasifree  states. 

Following \cite{BBCFS}, we define self-consistent approximation as the restriction of the many-body dynamics to quasifree states.
More precisely, 
we map the solution $\om_t$ of \eqref{vNeum-eq}, with an initial state $\om_0$, 
 to the family $\qf_t$ 
 of quasifree states satisfying 
\begin{align}\label{eq-vNeum-quasifree}
    i  \partial_t\qf_t( A) = \qf_t([ A, H]) \,   
\end{align}
for all 
observables $A$, which are at most quadratic in the creation and annihilation operators,  with an initial state $\qf_0$, which is the quasifree projection of $\om_0$. 
We call this map the \textit{quasifree reduction} of equation \eqref{vNeum-eq}. 

 Of course, we cannot expect  $\qf_t$ to be a good approximation of $\om_t$, if $\om_0$ is far from the manifold of quasifree states.

Evaluating \eqref{eq-vNeum-quasifree}  for monomials $A\in 
 \{\psi(x),\psi^*(x)\psi(y),\psi(x)\psi(y)\},$ yields a system of coupled nonlinear
PDE's for  $(\phi_t,\gamma_t,\al_t)$. For the standard  any-body Hamiltonian, 
\begin{align}  \label{H} 
    H = \int dx \; \psi^*(x)h\psi(x) 
    + \frac{1}{2}\int dx dy \; v(x, y)\psi^*(x) \psi^*(y)
    \psi(x) \psi(y) \,,
\end{align}
with $h:= -\Delta +V(x)$ acting on the variable $x$ and $v$ a pair potential of the particle interaction, defined on Fock  space, $\cF$,
these give the (time-dependent) \textit{Bogolubov-\-de Gennes (BdG) or the (time-dependent) Hartree-\-Fock-\-Bogo\-lubov (HFB) equations}, depending on whether we deal with fermions or bosons. In the former case, the starting Hamiltonian is given by the BCS theory and one takes $\phi_t(x)=0$ and $v(x, y)$ is non-local, in the latter case, $v(x, y)\equiv v(x-y)$ is a local operator.


\end{document}